\documentclass[acmsmall,nonacm]{acmart}
\AtBeginDocument{%
  \providecommand\BibTeX{{%
    \normalfont B\kern-0.5em{\scshape i\kern-0.25em b}\kern-0.8em\TeX}}}

\usepackage{enumitem}
\usepackage{listingsutf8}
\usepackage{mathpartir}
\usepackage{pf2}
\usepackage{xparse}
\usepackage{xspace}

\usepackage{contour}
\usepackage{ulem}

\contourlength{0.8pt}

\newcommand{\myuline}[1]{%
  \uline{\phantom{#1}}%
  \llap{\contour{white}{#1}}%
}

\usepackage{macros}
\usepackage{while_macros}
\usepackage{js_macros}
\usepackage{c_macros}
\usepackage{gil_listings}

\begin{document}

\title[\gillian: A Multi-Language Platform for Unified Symbolic Analysis]{\gillian: A Multi-Language Platform for \\ Unified Symbolic Analysis}

\author{Petar Maksimovi\'c}
\affiliation{
  \institution{Imperial College London}
  \country{UK}                     
}
\email{p.maksimovic@imperial.ac.uk}         

\author{Jos\'e Fragoso Santos}
\affiliation{
  \institution{INESC-ID/Instituto Superior T\'ecnico, Universidade de  Lisboa}
  \country{Portugal}                   
}
\email{jose.fragoso@tecnico.ulisboa.pt}          

\author{Sacha-\'Elie Ayoun}
\affiliation{
  \institution{Imperial College London}          
  \country{UK}                    
}
\email{s.ayoun@imperial.ac.uk}         

\author{Philippa Gardner}
\affiliation{
  \institution{Imperial College London}              
  \country{UK}          
}
\email{p.gardner@imperial.ac.uk} 

\begin{abstract}
This is an evolving document describing the meta-theory, the implementation, and the instantiations of \gillian, a multi-language symbolic analysis platform.
\end{abstract}

\maketitle

\section{Introduction}
\label{sec:intro}

\gillian was introduced in~\cite{gillianpldi} as a multi-language platform for whole-program symbolic execution, parametric on the concrete and symbolic memory models of the target language (TL), and underpinned by a core symbolic execution engine with strong mathematical foundations.
Gillian analysis is done on GIL, an intermediate goto language
parametric on a set of \emph{memory actions},  which describe the fundamental ways in which TL programs interact with their memories.
To instantiate Gillian to a new TL, a tool developer must: (1)
identify the set of the TL memory actions and implement the TL memory models 
using these  actions; and (2) provide a trusted compiler from the TL
to GIL, which preserves the TL memory models and the semantics. 
In~\cite{gillianpldi}, \gillian was instantiated to JavaScript (JS)
and C, and these instantiations, called \gillian-JS and \gillian-C, were used to find bugs in two real-world data-structure~libraries.

In~\cite{gilliancav}, \gillian was extended with support for compositional memory models and verification based on separation logic.
The compositional memory models of \gillian  work with \emph{partial} memories and are formulated in terms of \emph{core predicates} and associated \emph{consumer} and \emph{producer} actions for the
TL memory models, which need to be provided by the tool developer. The core predicates describe the fundamental units of
TL memories: e.g., a JS object-property pair and a C block
cell. The consumers and producers, respectively, frame off and frame on the TL memory resource described by the core predicate. The partial memories also need to track {\em negative} resource: that is, the resource known to be absent from the partial memory.
Gillian verification is built on top of compositional memory models. In particular, the core predicates induce an assertion language
for writing function specifications in separation logic and the 
consumers and producers allow for the creation of
 a fully parametric spatial entailment
engine, enabling re-use of function  specifications in symbolic execution.
\gillian also allows tool developers to extend assertions
with 
user-defined predicates
so as to identify the TL language
interface familiar to code developers, and code developers to provide additional
predicates and lemmas to verify the particular data
structures in their~programs.
In~\cite{gilliancav}, Gillian-JS and Gillian-C were extended to support verification, and used to provide verified specifications of the JS and C implementations of the deserialisation module of the AWS Encryption~SDK, discovering two bugs in the former and three in the latter.

\myparagraph{Outline}
This document currently contains the following content:
\begin{description}[labelindent=1em]
\item[(\S\ref{sec:exec})] an account of \gillian's whole-program  execution, including:
\begin{description}[labelindent=0.6em,labelwidth=2.5em,leftmargin =!]
\item[(\S\ref{subsec:gilsyn})] the syntax of \gillian's intermediate language, \gil
\item[(\S\ref{subsec:mem})] memory models for whole-program execution and their properties, which require the \emph{memory actions} of the target language
\item[(\S\ref{sub:alloc})] allocators, which relieve \gillian users of the need to reason explicitly about allocation
\item[(\S\ref{subsec:esm})] state models, which are built on top of memory models, and their properties
\item[(\S\ref{sub:execsem})] the single-trace and the collecting \gil semantics, defined in terms of state models
\item[(\S\ref{sec:cor:tbf})] the correctness result stating that the \gil symbolic execution does not have false positive bug reports
\item[(\S\ref{sec:cor:bc})] partial verification results for the \gil symbolic execution.
\end{description}
\item[(\S\ref{sec:gilspec})] an account of \gillian's compositional execution, including:
\begin{description}[labelindent=0.6em,labelwidth=2.5em,leftmargin =!]
\item[(\S\ref{sec:cmms})] the compositional memory models for \gillian, defined in terms of \emph{core predicates} and their \emph{consumers} and \emph{producers}, together with their key properties
\item[(\S\ref{subsec:csms})] the compositional state models for \gillian, built on top of compositional memory models
\item[(\S\ref{sec:sublift})] a lifting of the correctness results given in \S\ref{sub:execsem} and \S\ref{sec:cor:bc} to compositional memory and state models
\item[(\S\ref{cor:fp})] the correctness result stating that the \gil compositional execution is frame-preserving up to a renaming of allocated symbols
\end{description}
\item[(\S\ref{subsec:gilasrt})] \gillian assertions and specification, including:
\begin{description}[labelindent=0.1em,labelwidth=2.4em,leftmargin =!]
\item[(\S\ref{subsec:memasrt})] \gillian memory assertions, their consumers and producers, and their associated properties
\item[(\S\ref{subsec:stasrt})] \gillian state assertions, built on top of memory assertions, their consumers and producers, and their associated properties
\item[(\S\ref{subsec:asrtsat})] the parametric assetion satisfiability relation and its connection to consumers and producers
\item[(\S\ref{subsec:procspecs})] \gillian procedure specifications for verification, together with an extension of the \gil semantics that allows for use of specifications in \gil execution, the corresponding verification correctness result, and a description of the \gillian verification procedure.
\end{description}
\end{description}


\newcommand{\mems}{\mathsf{M}}
\newcommand{\smems}{\hat{\mems}}

\section{\gillian Whole-Program Execution}
\label{sec:exec}

We present  \gil, which is parametric on the memory models of  the
target language (TL). 
We first  define arbitrary memory models, which we then 
instantiate to arbitrary concrete and symbolic memory models.
We lift arbitrary the memory models to arbitrary state models, and
define the \gil execution semantics  on states. 
We finally present requirements that the TL memory models must
satisfy in order for the resulting \gil symbolic analysis to be
correct,  establishing \emph{true bug-finding} and
\emph{bounded correctness} for  the \gillian execution semantics. 

\subsection{\gil Syntax}
\label{subsec:gilsyn}

\gil is parametric on a set of \myuline{memory actions} of the target
language (TL), $\mactions \ni \act$, which capture the fundamental
ways in which TL programs interact with their memories. 
Its syntax is given~below. 

\smallskip
\begin{display}{\gil Syntax}\label{app:gil:syntax}
\begin{tabular}{l@{}l}
   $\vl \in \vals \defeq i, j, \num \in \nums \mid \str \in \strs \mid
  \bool \in \bools \mid \loc, \varsigma \in \locs \mid \type \in
  \types \mid \fid \in \fids \mid \lst{\vl} \in List(\vals)$ &
   \\[2pt]
   $\e \in \exprs \defeq \vl \mid \x \in \xs \mid \unop{\e} \mid \binop{\e_1}{\e_2} \hspace*{1.8cm} \sexp \in \sexps \defeq \vl \mid \lx \in \lxs \mid \unop{\sexp} \mid \binop{\sexp_1}{\sexp_2}$
   \\[2pt]
    $\cm \in \cmds{}$ $\defeq$ $\x := \e \mid \ifgoto{\e}{i} \mid \x := \e(\e') 
                                     \mid \x := \act(\e) \mid$ & \hspace*{-2.1cm}   $\proc \in \procs{} \defeq \procedure{f}{\x}{\lst{\cm}}$
\\  
    \hspace*{1.5cm}$\x := \symb{}(\e) \mid \x := \fresh{}(\e) \mid \return{\e} \mid  \fail{e} \mid \vanish$  & \hspace*{-2.1cm}  $\prog \in \progs{} :  \fids \pmap \procs{}$
 \end{tabular}
\end{display}

\smallskip
The set of \gil values, $\vl \in \vals$, contains numbers, strings, booleans,
uninterpreted symbols\footnote{Uninterpreted symbols can be used to
  represent, for example, memory locations, $\loc \in \mathcal{L} \subset \locs$,  or TL-specific constants. },
simple types (e.g. types for  numbers, strings, booleans and lists), function identifiers, and lists of values. 
The set of \gil expressions, $\e \in \exprs$, contains values, program
variables, and various unary and binary operators such as  those  used
for number, string, boolean, and list manipulation. The set of \gil symbolic
expressions, $\sexp \in \sexps$,  is  constructed in the same way as
for 
\gil expressions, except that it is built from symbolic variables, $\lx
\in \lxs$, instead of program variables, and we assume it contains
the  standard set of symbolic 
Boolean expressions. 

\gil commands, $\cm \in \cmds{}$, contain variable assignment,
conditional goto, procedure call, memory actions, commands for allocation of uninterpreted and
interpreted symbols, function return, error termination and path cutting. Their semantics will be explained in more detail in \S\ref{sub:execsem}.
A \gil procedure, $\procedure{f}{\x}{\lst{\cm}}$, comprises an
identifier $f \in \fids $, a formal
parameter~$x$\footnote{Without loss of generality, we describe one-parameter procedures for
  convenience in the \gillian theory, using multiple parameters in the
  \gillian 
  implementation.},  and a body
given by a list of commands $\lst{\cm}$.
A \gil program is a set of \gil procedures. 

\subsection{Execution Memory Models}
\label{subsec:mem}

The execution memory models defined below are designed for whole-program execution. They differ from those introduced in~\cite{gillianpldi} in that they unify the concrete and symbolic memory models of~\cite{gillianpldi} and also introduce memory well-formedness.
An execution memory model is parametric on a value set, $\gval \in \gv$, and
an action  set, $\actions \in \alpha$. The value set must contain the \gil values: that is, 
$\vals \subseteq \gval$. We distinguish the Boolean value set, $\pcs  \subset \gval$, and refer to $\pc \in \pcs$ as a \emph{context}. For example, for concrete execution, the contexts are the Booleans $\true$ and $\false$ whereas, for symbolic execution, they are Boolean symbolic expressions such as $\lx \land \hat y$.

\begin{definition}[Execution Memory Model]\label{def:emm}
Given a value set $\gval$ and an action set  $\actions$, an
\myuline{execution}
\myuline{memory model},
$\cmemory(\gval, \actions) \defeq \tup{\sset{\cmemory}, \sinv, \cea}$, 
comprises:
\begin{enumerate}[leftmargin=*,label=\arabic*)]
\item a set of memories $\sset{\cmemory} \ni \cmem$; 
  \item a \myuline{well-formedness relation}, $\sinv \subseteq
    \sset{\cmemory} \times \pcs$, with  $\sinv_{\!\pc}(\cmem)$ denoting 
    that memory~$\cmem$ is well-formed in context $\pc$, which has the following properties:
    \begin{enumerate}[left=0cm,label=\arabic{enumi}.\arabic*)]
    \item $\pi$ is satisfiable\footnote{We observe that Boolean-typed values can be lifted to meta-logic propositions and, for simplicity, overload the $\pc/\pcs$ notation to denote both Boolean-typed values of the memory model and meta-logic propositions, eliding the conversions.}, written $\pi \;\sat$; 
      \item well-formedness is monotonic with respect to context
  strengthening:  $$\sinv_{\!\pc}(\cmem) \land (\pc' \Rightarrow \pc)
  \land \pc'~\sat \implies \sinv_{\!\pc'}(\cmem);$$
  \end{enumerate}

\item 
an \myuline{action execution function}, $\cea : \sset{\cmemory}
  \times  \actions \times \gval \times \pcs \pmap \power{\sset{\cmemory} \times
    \gval \times \results \times \pcs }$\footnote{The action result, $\result \in \results =\{ \osucc, \oerr, \omiss \}$, denotes success ($\osucc$), a non-correctible error ($\oerr$), or a missing resource error ($\omiss $).},
  where $\cea  (\mu, \alpha, \gv, \pc) =  \{ (\cmem_i, \gv_i , \result_i
  ,\pc_i)     : i  \in I \}$, for countable $I$, with notation 
  $\fcolactionx \cmem \act \gv {\pc} {\left\{ (\cmem_i,
      \gv_i)_{\pc_i}^{\result_i}|_{i \in I} \right\}}$ for all
  outcomes and  $\faction{\cmem}{\act}{\gv}{\pc}{\cmem_i}{\gv_i}{\pc_i }
{\result_i}$ for  a specific outcome, 
 satisfies the following properties:
 \begin{enumerate}[left=0cm,label=\arabic{enumi}.\arabic*)]
 \item $\sinv_{\!\pc}(\cmem) $
   \item action execution preserves memory well-formedness: $
\sinv_{\!\pc_i}(\cmem_i)
$ 
  \item $\pi_i~\sat \land \pi_i \Rightarrow \pi$, for all $ i \in I$;
  \item the $\pi_i$ are separately disjoint: that is, 
    $\lnot(\pc_i \land \pi_j)$,  for all $ i, j \in I, i\neq j$;
  \item the $\pi_i$ have full coverage: that is, 
    $\pc \implies \bigvee_{i \in I} \pc_i$;
\item non-successful action execution does not affect the memory:
  $  r_i \neq \osucc 
  \implies \cmem_i  \equiv  \cmem$
  \end{enumerate}
 \end{enumerate}
\end{definition}

\newcommand{\ctxt}{\mathcal{C}\mkern-0.5mutx}
\newcommand{\ctx}{\mathit{ctx}}

\begin{definition}[Concrete and Symbolic Execution Memory Model]\label{def:csemm}
A {\em concrete} execution memory model, 
$\cmemory(\vals, \actions)$, is an execution memory model with 
value set given by the \gil value set, $\vals$. 
A {\em symbolic} execution memory model, $\cmemory(\sexps, \actions)
$,  is an execution memory model with value set given by the  set of \gil symbolic expressions, $\sexps$. Onward, we denote concrete memory models by $\cmemory(\actions)$, and symbolic state models by $\smemory(\actions)$.
\end{definition}

\myparagraph{Discussion and Observations} Properties 3.1 and 3.2 connect action execution to well-formedness of memories in an expected way. Property 3.3 states that memory action execution cannot lose information; this captures the well-known property of symbolic execution, where path conditions can only get stronger. 
Property 3.4 disallows non-determinism that is not captured by the context. This, given that the only satisfiable context for concrete memory models is the \gil
Boolean $\true$, means that
the concrete memory action execution does not branch
and contexts can be elided for concrete well-formedness and action
execution, denoted by $\sinv(\cmem)$ and
$\fcaction{\cmem}{\act}{\vl}{}{\cmem'}{\vl'}{} \result$ respectively. In fact, the following general property, which captures action determinism, holds: 
$$\faction{\cmem}{\act}{\gv}{\pc}{\cmem'}{\gv'}{\pc}
{\result} \implies \fcolactionx{\cmem}{\act}{\gv}{\pc}{\{ (\cmem', \gv')^\result_\pc}\}$$
Property 3.5 effectively means that memory action execution does not drop paths, and is essential for exact and over-approximate reasoning. Property 3.6 allows us to do easier error correction.

.

\subsection{\gillian Allocators}
\label{sub:alloc}

\gillian comes with built-in allocators that allow tool developers to
create fresh memory locations, symbolic variables, and values,
relieving them of the need to explicitly handle  allocation.

\begin{definition}[Allocator]
\label{def:am}
Given a value set $\gval$, an allocator
$\allocator(\gval) = \tup{\sset{\allocator}, \allocf}$
comprises:
\etag{1}~a set $\sset{\allocator} \ni \arec$ of allocation records, $\arec: \power{\mathsf{\gval}} \pmap  \power{\mathsf{\gval}}$, which partially map allocation ranges (subsets of $\gval$) to sets of values allocated from that range;
\etag{2}~and 
an allocation function:
$$\allocf : \sset{\allocator} \tmap \mathbb{N} \tmap \power{\gval} \pmap \sset{\allocator} \times \gval~\mathtt{list},
$$
pretty-printed as
$\alloc{\arec}{k}{\arec', {y}}{Y}$,
which takes an allocation record $\arec$, the number of fresh values to allocate, $k$, and an allocation range $Y \subseteq \gval$, and returns $k$ fresh values, ${y_1, \ldots, y_k} \in Y$, together with the updated allocation record,~$\arec'$. Let $\arec_o(Y)$ denote $\arec(Y)$ if $Y \in \domain(\arec)$, and $\emptyset$ otherwise. Then, the allocation function is defined as follows:
\begin{mathpar}
\small
\inferrule[Allocation]
{
\mathsf{fresh}_Y(\arec_o(Y), k) \rightsquigarrow \lst{y} \and \arec' = \arec[Y \mapsto \arec_o(Y) \cup \lst{y}]}
{\alloc{\arec}{k}{\arec', \lst{y}}{Y}}
\end{mathpar}
where $\mathsf{fresh}_Y(\lst{y_{\mathit{a}}}, k) \rightsquigarrow \lst{y_{\mathit{f}}}$ means that the $k$ values in $\lst{y_{\mathit{f}}}$ are in some way considered \emph{fresh} w.r.t.~the values in $\lst{y_{\mathit{a}}}$, with all values belonging to the set $Y$.
\end{definition}

As for the memory models, a {\em concrete} allocator is defined using the value set $\vals$, and a
{\em symbolic} allocator is defined using the value set $\sexps$. We define the $\mathsf{fresh}_Y(\lst{y_f}, \lst{y_a})$ function for $Y \in \{ \clocs, \lxs, \vals \}$, because these are the use cases in the concrete and symbolic execution, where memory locations and symbolic variables are allocated with the usual notion of freshness, and for \gil values, any value is allowed to be allocated any number of times.
\begin{mathpar}
\small
\inferrule[Fresh ($\clocs$ and $\lxs$)]
{Y \in \{\clocs, \lxs\} \and \lst{y_f} = [y_1, \ldots, y_k] \and 
y_i \in Y \and (y_i \notin Y_a)
 }
{\mathsf{fresh}_\clocs(Y_a, k) \rightsquigarrow \lst {y_f}}
\and
\inferrule[Fresh ($\vals$)]
{\lst{v_f} = [v_1, \ldots, v_k] \and 
v_i \in \vals 
 }
{\mathsf{fresh}_\clocs(-, k) \rightsquigarrow \lst {v_f}}
\end{mathpar}

\subsection{Execution State Models}
\label{subsec:esm}

We define the arbitrary \gillian execution state models,
used to define 
the \gil execution semantics, and 
describe how
lift arbitrary execution memory models to execution state
models. 
Assume a set of \myuline{basic state actions},  $\sactions
= \{ \kwT{setVar}_{\x} \}_{\x \in \xs} \cup \set{\kwT{setStore}, \kwT{getStore}} 
\cup \{ \kwT{eval}_{\e} \}_{\e \in \exprs} \cup \{\kwT{assume},$ $\kwT{uSym}, \kwT{iSym}\}
$, which are used in the definition of an execution state model to capture
variable store manipulation, expression evaluation, branching
and symbol allocation\footnote{We assume that the basic state actions
  always succeed, and that \gil expressions are not
  side-effecting. Errors due to non-existent variables in the store or
  mistyped expressions could be modelled analogously to the memory
  action errors, but we choose not to do this. Side-effecting expressions of the TL can always be compiled to non-side-effecting expressions.}.

\begin{definition}[Execution State Model]
\label{def:esm}

  Given a value set $\gval$, a set of interpreted symbols $\mathsf{I}
\subset \gval$,
an action set $A$,  an execution memory model $\cmemory(\gval, A)
\defeq \tup{\sset{\cmemory}, \sinv,$  $\cea}$, and an allocator
$\allocator(\gval) = \tup{\sset{\allocator}, \allocf}$,
an {\em execution state model} is defined by 
$\gstate(\gval,
\mathsf{I}, A, \cmemory(\gval, A), \allocator(\gval)) \defeq
\langle\sset{\gstate}, \ceval{~}{}, \kwT{ea} \rangle$, and it consists of:
\begin{itemize}[leftmargin=*]
\item a set of \myuline{states}, $\sset{\gstate} \subseteq 
  \sset{\cmemory} \times (\xs \pmap \gval) \times \sset\allocator
  \times \pcs$,  with  $\st = \tup{\cmem,
    \sto, \arec, \pc} \in \sset{\gstate}  $ { if and only if $\sinv_{\!\pc}(\cmem)$};
  
\item an {\em expression evaluation function} $\ceval{~}{} : \exprs
  \times (\xs \pmap \gval) \tmap \gval$, where $\ceval{\e}{\sto} \in
  \gval$ denotes the evaluation of a \gil expression $e$ with respect
  to variable store $\sto$;

\item a \myuline{state action execution function}, $\kwT{ea} :
  \sset{\gstate} \times (\mactions \uplus \sactions) \times \gval
  \pmap \power{\sset{\gstate} \times \gval \times \results}$,
  defined by:
  for $\st = \tup{\cmem,
    \sto, \arec, \pc}$ and $\alpha \in A$,
  
\medskip
\noindent
\begin{tabular}{l@{~}c@{~}l}
     $\kwT{ea}(\st, \act, \gv)$
    & $\semeq$ & 
    $\{ (\tup{\cmem', \sto, \arec, \pc'}, \gv')^\result \mid  \faction \cmem {\act} \gv {\pc} {\cmem'} {\gv'} {\pc'} \result 
\} $ 
\\
  $\kwT{ea}(\st, \kwT{setVar}_{\x},  \gv)$ 
       & $\semeq$ & 
       $\set{ (\tup{\cmem, \sto[\x \mapsto \gv], \arec, \pc}, \true)^\osucc}$ \\ 
  $\kwT{ea}(\st, \kwT{setStore}, \sto')$
     & $\semeq$ & 
     $\set{(\tup{\cmem, \sto', \arec, \pc}, \true)^\osucc}$ \\
  $\kwT{ea}(\st, \kwT{getStore}, -)$
    & $\semeq$ & 
     $\set{ (\st, \sto)^\osucc}$ \\
  $\kwT{ea}(\st, \kwT{eval}_{\e}, -)$
    & $\semeq$ & 
     $\set{ (\st, \ceval{\e}{\sto})^\osucc}$ \\
  $\kwT{ea}(\st, \kwT{assume}, \pc')$
    & $\semeq$ & 
    $\lbrace (\tup{\cmem, \sto, \arec, \pc \land \pc'}, \true)^\osucc \mid \pc \land \pc'~\sat) \rbrace$ \\
    $\kwT{ea}(\st, \kwT{uSym}, n)$
    & $\semeq$ &
     $\lbrace (\tup{\cmem, \sto', \arec', \pc}, \lst \loc)^\osucc \mid \alloc{\arec}{n}{\arec', \lst \loc}{\clocs} \rbrace $ \\
    $\kwT{ea}(\st, \kwT{iSym}, n)$
    & $\semeq$ &
     $\lbrace (\tup{\cmem, \sto', \arec', \pc}, \lst \gv)^\osucc \mid
                 \alloc{\arec}{n}{\arec', \lst \gv}{\mathsf{I}}
                 \rbrace $ 
              
\end{tabular}
\end{itemize}
\end{definition}

\noindent {\em Notation.} Given $\kwT{ea}   (\sigma, \alpha, \gv) =  \{ (\sigma_i, \gv_i ,
\result_i)     | i  \in I \}$  for countable $I$,  we write 
 $\fcolactionx \sigma \act \gv {} {\left\{ (\sigma_i,
      \gv_i)^{\result_i}|_{i \in I} \right\}}$ for all
  outcomes and  $\faction{\sigma}{\act}{\gv}{}{\sigma_i}{\gv_i}{}
  {\result_i}$ for  a specific outcome. 
  Given a state $\st$, we access its components as follows: $\access{\st}{mem}$ for the memory, $\access{\st}{sto}$ for the store, $\access{\st}{al}$ for the allocator, and $\getpc{\st}$ for the context.

\begin{proposition}[State Properties]\label{prop:esm}
  The state action execution function of an execution state model, $\gstate(\gval,
\mathsf{I}, A, \cmemory(\gval, A), \allocator(\gval)) \defeq 
\langle\sset{\gstate}, \ceval{~}{}, \kwT{ea} \rangle$,
satisfies the following properties,
  for
 $\fcolactionx \sigma \act \gv {} {\left\{ (\sigma_i,
     \gv_i)^{\result_i}|_{i \in I} \right\}}$:
 
  \begin{enumerate}[leftmargin=*,label=\arabic*)]
  \item $\getpc{\st_i} \Rightarrow \getpc{\st}$, for all $i \in I$;
  \item contexts of final states are separately disjoint: i.e., 
    $\lnot(\getpc{\st_i} \land \getpc{\st_j})$,  for all $ i, j \in I, i\neq j$;
   \item all state actions except $\kwT{assume}$\footnote{This is not an issue as the coverage property will be re-established for the \gil execution (cf.~\S\ref{sub:execsem})} have full coverage: i.e., 
    $\getpc{\st} \implies \bigvee_{i \in I}\getpc{\st_i}$;
  \item unsuccessful action execution may only affect the context.
  \end{enumerate}

\end{proposition}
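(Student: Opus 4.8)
The plan is to establish all four properties through a single case analysis on the action $\act$, separating the memory actions $\act \in A$ from the basic state actions $\act \in \sactions$. The driving observation is that, on memory actions, the state action execution function is defined to thread the store $\sto$ and the allocator $\arec$ through unchanged and to lift each memory outcome of $\cea$, recorded by $\faction \cmem {\act} \gv {\pc} {\cmem_i} {\gv_i} {\pc_i} {\result_i}$, to the state outcome $(\tup{\cmem_i, \sto, \arec, \pc_i}, \gv_i)^{\result_i}$; in particular $\getpc{\st_i} = \pc_i$ and the index set $I$ is inherited verbatim from the memory execution. Consequently, in this case each state property reduces immediately to the matching memory-model property of Definition~\ref{def:emm}: property~1 to memory property~3.3 ($\pc_i \Rightarrow \pc$), property~2 to memory property~3.4 (separate disjointness), property~3 to memory property~3.5 (full coverage), and property~4 to memory property~3.6 ($\result_i \neq \osucc \implies \cmem_i \equiv \cmem$), where the store and allocator are already fixed by construction so that a non-successful outcome can alter only the context.

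For the basic state actions I would note that $\kwT{setVar}_{\x}$, $\kwT{setStore}$, $\kwT{getStore}$, $\kwT{eval}_{\e}$, $\kwT{uSym}$ and $\kwT{iSym}$ are all deterministic, yielding a single outcome whose context is the original $\pc$, while $\kwT{assume}$ yields at most one outcome, with context $\pc \land \pc'$. Properties~1--4 then follow at once: property~1 holds because the resulting context is either $\pc$ or the strengthening $\pc \land \pc'$, both of which imply $\pc$; property~2 is vacuous, since there are never two distinct outcomes to compare; property~3 holds for every such action except $\kwT{assume}$, as the single outcome keeps the full context $\pc$, and $\kwT{assume}$ is excluded precisely because an unsatisfiable $\pc \land \pc'$ produces no outcome and thus breaks coverage; and property~4 is vacuous, since by assumption every basic state action succeeds, so its antecedent $\result \neq \osucc$ is never met.

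I do not expect a substantive obstacle: the proposition is essentially a bookkeeping lift of the memory-model guarantees through the state layer, with the basic state actions contributing only trivial or vacuous cases. The only point requiring care is the memory-action case, where I must make the identification of the two index sets precise and confirm that the lifted context $\getpc{\st_i}$ is exactly the memory path condition $\pc_i$, so that properties~3.3--3.6 transfer without any recomputation. Once that correspondence is pinned down, the remaining reasoning is purely mechanical.
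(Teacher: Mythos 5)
Your proof is correct and is precisely the intended argument: the paper states Proposition~\ref{prop:esm} without proof as a routine lift, and your case analysis reduces the memory-action case to properties 3.3--3.6 of Definition~\ref{def:emm} (with store and allocator threaded unchanged, so $\getpc{\st_i} = \pc_i$ and the index set is inherited) and disposes of the basic state actions by determinism, the at-most-one-outcome behaviour of $\kwT{assume}$, and the standing assumption that basic actions always succeed. The only point worth flagging is that your determinism claim for $\kwT{uSym}$/$\kwT{iSym}$ (needed for property~2, since all their outcomes share context $\pc$) rests on $\allocf$ being typed as a \emph{function}, which is how the paper defines it --- the paper's later remark about nondeterministic allocation of interpreted symbols refers to the arbitrariness of the fresh-value choice, not to multiple simultaneous outcomes.
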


\begin{definition}[Concrete and Symbolic Execution State
  Model]\label{def:csesm}
  Concrete state models are obtained by having the value set be the set of \gil values, $\vals$ and the set of interpreted symbols be the set of \gil values, $\vals$. Symbolic state models are obtained by having the value set be the set of \gil symbolic expressions, $\sexps$ and the set of interpreted symbols be the set of symbolic variables, $\lxs$. The expression evaluation function is defined in the standard way for both. \end{definition}

Onward, we denote concrete state models by $\gstate(\actions, \cmemory(\actions)) \defeq \langle\sset{\gstate}, \kwT{ea} \rangle$, and symbolic state models by $\hat\gstate(\actions, \smemory(\actions)) \defeq \langle\sset{\hat\gstate}, \kwT{ea} \rangle$. Similarly to memory models, the notation for concrete state models can be simplified by taking advantage of the fact that the only
well-formed context is $\true$. For example, concrete states can be
viewed as triples, $\tup{\cmem, \sto, \arec}$, and the action rules that manipulate contexts simplify to the following:

\medskip
\noindent
\begin{tabular}{l@{~}c@{~}l}
  $\kwT{ea}(\st, \kwT{assume}, \vl)$
    & $\semeq$ & 
    $\lbrace (\st, \vl)^\osucc \mid \vl = \true \rbrace$ \\
 $\kwT{ea}(\st, \act, \vl)$
    & $\semeq$ & 
    $\lbrace (\tup{\cmem', \sto, \arec}, \vl')^r \mid
                 \uaction{\cmem}{\act}{\vl}{(\cmem', \vl')^r} 
\rbrace$
\end{tabular}

\medskip
Note, however, that concrete state action execution is not deterministic as a whole, as allocation of interpreted symbols is effectively creating a random \gil value.
\subsection{\gil Execution Semantics}
\label{sub:execsem}
We define the \gil  single-trace and collecting execution semantics
parametrically using an arbitrary execution state model,
$\gstate(\gval, \mathsf{I}, \actions, \cmemory(\gval, \actions), \allocator(\gval)) \defeq \langle\sset{\gstate}, \sinv, \actions \uplus \sactions, \kwT{ea} \rangle$.
The semantics is formulated using \emph{call stacks}, \emph{outcomes}, and \emph{configurations}. 

\medskip
\begin{display}{\gil Semantic Domains for $\gstate(\gval, \mathsf{I},
    \actions,  \cmemory(\gval, \actions), \allocator(\gval)) \defeq
    \langle\sset{\gstate}, \ceval{~}{},  \kwT{ea} \rangle$}
\begin{tabular}{lr@{\ \ }c@{\ \ }l@{\quad}l}
   Call stacks: & $\cs \in \css{} $ & $\defeq$ & $ \tup{\f} \mid \tup{\f, \x, \sto, i} \cons \cs$ \\
   Outcomes: & $\outcome \in \outcomes$  & $\defeq$ &  $\cont \mid \onormal{\gv} \mid \ofail{\gv} \mid \omissx{\gv}$ \\ 
   Configurations: & $\cf \in \cfs{}$    & $\defeq$ & $\tup{\st, \cs, i}^o$ \\
 \end{tabular}
\end{display}

Call stacks capture the control flow of \gil programs and consist of a non-empty list of stack frames, where the final frame contains only the identifier of the top-level procedure, and the other frames contain the identifier $f$ of the executing procedure, the variable $x$ to which the return value of $f$ will be assigned, the calling store $\sto$, and the index from which the execution of the calling procedure should continue.
An execution of a \gil command produces an outcome, which can either be: $\cont$, meaning that the execution can continue; $\onormal{\gv}$, meaning that the program terminated successfully with return value $\gv$; or $\ofail{\gv}$, meaning that the  program terminated with a non-correctible error described by~$\gv$; or $\omissx{\gv}$, meaning that the program terminated with a missing information error described by~$\gv$.\footnote{The missing information outcome is not relevant for whole-program  execution, but is for compositional execution (cf.~\S\ref{sec:gilspec}). We include it in this presentation for completeness.}
Finally, \gil commands manipulate configurations, which comprise a state, a call stack, the index of the next command to be executed, and the outcome of the last executed~command.

\begin{definition}[Final Configurations]
A configuration $\cf$  is \emph{final} iff its outcome does not equal the
continuation outcome: that is, 
$
\isfinal(\cf) \Longleftrightarrow (\cf = \tup{\st,
    \cs, i}^o \wedge o \neq \cont)
$. A set of configurations, $\sscf$,  is final iff all of its elements are final:
$
\isfinal(\sscf) \Longleftrightarrow \forall \cf \in \sscf.~\isfinal(\cf)
$.
\end{definition}

To simplify the proofs, we introduce two forms of state action composition.

\begin{definition}[State Action Composition]
\label{def:act:comp} 
State actions have two composition operators: 
$$
\small
\begin{array}{r@{~}c@{~}l}
\action{\st}{(\acomp{\act_1}{\act_2})}{\gv}{(\st', \gv')^\result} & \iff &
 (\action{\st}{\act_1}{\gv}{(\st', \gv')^\result} \land r \neq \osucc)~\lor \\ &&
 (\exists \st'', \gv'' . \, \action{\st}{\act_1}{\gv}{(\st'', \gv'')^\osucc} 
   \ \wedge \ 
   \action{\st''}{\act_2}{\gv''}{(\st', \gv')^\result}) 
 \\[2pt]
\action{\st}{(\lstacomp{\act_1}{\act_2})}{[\gv_1, \gv_2]}{(\st', [\gv_1', \gv_2'])^\result} & \iff & 
  (\action{\st}{\act_1}{\gv_1}{(\st', \gv_1')^\result} \land r \neq \osucc \land \gv_2' = \bot)~\lor \\ &&
 (\exists \st'' . \, \action{\st}{\act_1}{\gv_1}{(\st'', \gv_1')^\osucc} 
   \ \wedge \ 
   \action{\st''}{\act_2}{\gv_2}{(\st', \gv_2')^\result} )
\end{array}
$$
where $\bot$ is a dedicated uninterpreted symbol denoting
undefined.
\end{definition}

\begin{definition}[\gil Single-Trace Execution Semantics]
The \gil one-step single-trace semantic relation,
written $\fullsemtrans{\st, \cs, i}{\st', \cs', j}{}{\prog}{\outcome'}{\outcome}$ is defined by the rules given in Figure~\ref{fig:gilrules}, and its multi-step closure and to-termination closure are given below. To minimise clutter, in the rules we elide the continuation outcome, the success result of action execution, as well as the resulting state or value if they are not used further. We also use the notation $\prog \vdash \cf \semarrow_c \cf'$ and $\prog \vdash \cf \semarrow^{n}_C \cf'$ to expose the executed command(s), defined straightforwardly. Note that the $c$ of the one-step transition is uniquely determined, whereas the $C$ of the $n$-step transition need not be.
\begin{mathpar}
\inferrule[Single: Multi-Step Reflexive]
{
}
{
\prog \vdash \cf \semarrow^0 \cf
}
\and
\inferrule[Single: Multi-Step Transitive]
{
\cf \semarrow \cf''
\quad 
\cf'' \semarrow^n \cf'
}
{
\prog \vdash \cf \semarrow^{n+1} \cf'
}
\and
\inferrule[Single: To-Termination]
{
\cf \semarrow^n \cf'
\quad \isfinal(\cf')
}
{
\prog \vdash \cf \ssemarrowfin \cf'
}
\end{mathpar}
\end{definition}

\begin{figure}[!t]
\begin{mathpar}
\small
\hspace*{-0.3cm}
\inferrule[\textsc{Assignment}]
  {
    \cmd(\prog, \cs, i) = \x := \e
    \\\\
   \action{\st}{(\acomp{\kwT{eval}_{\e}}{\kwT{setVar}_{\x}})}{-}{\st'} 
  }{\semtrans{\st, \cs, i}{\st', \cs, i{+}1}{}{\prog}{}}  	 
 \and
\inferrule[\textsc{IfGoto - True}]
  {
    \cmd(\prog, \cs, i) = \ifgoto{\e}{j}
    \\\\
     \action{\st}{(\kwT{assume} \circ \kwT{eval}_{\e})}{-}{\st'}
  }{\semtrans{\st, \cs, i}{\st', \cs, j}{}{\prog}{}}	 
\and
\inferrule[\textsc{IfGoto - False}]
  {
    \cmd(\prog, \cs, i) = \ifgoto{\e}{j}
    \\\\
       \action{\st}{(\kwT{assume} \circ \kwT{eval}_{\neg \e})}{-}{\st'}
  }{\semtrans{\st, \cs, i}{\st', \cs, i+1}{}{\prog}{}}  	 
\\
\inferrule[\textsc{Memory Action - Success}]
  {
    \cmd(\prog, \cs, i) = \x := \act(\e) \\\\
    \action{\st}{((\kwT{setVar}_\x \circ \act) \circ \kwT{eval}_{\e})}{-}{\st'}
  }{\semtrans{\st, \cs, i}{\st', \cs, i{+}1}{}{\prog}{}}  
\quad
\inferrule[\textsc{Memory Action - Error}]
  {
    \cmd(\prog, \cs, i) = \x := \act(\e) \\\\
    \action{\st}{(\act \circ \kwT{eval}_{\e})}{-}{(\st', \gv')^\result}
   \quad \result \neq \osucc
  }{\semtrans{\st, \cs, i}{\st', \cs, i}{}{\prog}{\ofail{\gv'}}} 
\quad
\inferrule[\textsc{uSym}/\textsc{iSym}]
  {
  	  \kwT{sym} \in \{ {\kwT{uSym}, \kwT{iSym}} \}
	  \\\\
      \cmd(\prog, \cs, i) = \x := \kwT{sym}_{}(n)
      \\\\ 
      \action{\st}{(\kwT{setVar}_{\x} \circ \kwT{sym})}{n}{\st'}
  }{
    \semtrans{\st, \cs, i}{\st', \cs, i{+}1}{}{\prog}{}
  }
 \and 
\inferrule[\textsc{Call}]
  {
    \cmd(\prog, \cs, i) = \x := \e(\e')
    \\\\
     \action{\st}{(\lstacomp{\kwT{eval}_{\e}}{(\lstacomp{\kwT{eval}_{\e'}}{\kwT{getStore}})})}{-}{(\st', [\f, [\gv, \sto']])}
    \\\\
    \cs' = \tup{\f, \x, \sto', i{+}1} \cons \cs
   \quad
   \action{\st'}{\kwT{setStore}}{\btup{[\f.\kwT{arg}, \gv]}}{\st''}
  }{\semtrans{\st, \cs, i}{\st'', \cs', 0}{}{\prog}{}}  	
\and
\inferrule[\textsc{Return}]
  {
     \cmd(\prog, \cs, i) = \return{\e}
      \\\\
      \cs = \tup{-, \x, \sto, j} \cons \cs'
     \quad
       \action{\st}{\kwT{eval}_{\e}}{-}{(\st', \gv)}
    \\\\ 
      \action{\st'}{(\lstacomp{\kwT{setStore}}{\kwT{setVar}_{\x}})}{[\sto, \gv]}{\st''}
  }{
    \semtrans{\st, \cs, i}{\st'', \cs', j}{}{\prog}{}
  }
  \and
 \inferrule[\textsc{Top Return}]
  {
     \cmd(\prog, \cs, i) = \return{\e}
      \\\\
    \cs = \tup{\fid} \quad \action{\st}{\kwT{eval}_{\e}}{-}{(\st', \gv)}
  }{
    \semtrans{\st, \cs, i}{\st', \cs, i}{}{\prog}{\onormal{\gv}}
  } 
  \and
 \inferrule[\textsc{Fail}]
  {
     \cmd(\prog, \cs, i) = \fail{e}
      \\\\
   \action{\st}{\kwT{eval}_{\e}}{-}{(\st', \gv)}
  }{
    \semtrans{\st, \cs, i}{\st', \cs, i}{}{\prog}{\ofail{\gv}}
  } 
\end{mathpar}
\vspace*{-0.2cm}
\caption{GIL Execution Semantics}
\vspace*{-0.2cm}
\label{fig:gilrules}
\end{figure}

\begin{definition}[GIL Collecting Semantics]
The one-step, multi-step, and to-termination collecting transition of the GIL execution semantics, which explicitly accounts for all possible branchings, are defined in Figure~\ref{fig:colsem}. Observe that if $\prog \vdash \{ \cf \} \ssemarrowfin \sscf$ holds, this effectively means that the code executed starting from $\cf$ has no unbounded loops or recursive function calls.
\end{definition}

We conclude by showing that the \gil semantics does not lose information.
We use the following notation: $\getpc{\cf}$ to denote $\getpc{\st}$, where $\cf = \tup{\st, -, -}$; and $\getpc{\sscf}$ to denote $\bigvee_{\cf \in \sscf}(\getpc{\cf})$.

\begin{theorem}
The \gil execution semantics respects the coverage property:
$$
\begin{array}{c}
\prog \vdash \cf \rightarrow \sscf'
 \implies \left(\getpc{\cf} \Rightarrow \getpc{\sscf'} \right)
\end{array}
$$
\end{theorem}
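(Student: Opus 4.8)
The plan is to prove the statement by case analysis on the \gil command $\cmd(\prog, \cs, i)$ selected by the configuration $\cf = \tup{\st, \cs, i}$, since this command uniquely determines which rule(s) of Figure~\ref{fig:gilrules} fire and hence the shape of the collected successor set $\sscf'$. Writing $\pc = \getpc{\st}$, in every rule the successors are exactly the outcomes of a \emph{composed} state action applied to $\st$, so each successor context has the form $\getpc{\st'}$ for some $\action{\st}{\act}{-}{(\st', -)^\result}$, and the one-step collecting transition simply gathers all such outcomes across the (possibly several) rules that apply. The goal $\getpc{\cf} \Rightarrow \getpc{\sscf'}$ therefore reduces to showing that these gathered outcome contexts cover $\pc$.

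First I would record a short composition observation: if two state actions $\act_1$ and $\act_2$ each satisfy the full-coverage property of Proposition~\ref{prop:esm}(3), then so do $\acomp{\act_1}{\act_2}$ and $\lstacomp{\act_1}{\act_2}$. This is immediate from Definition~\ref{def:act:comp}: a failing branch of $\act_1$ is retained verbatim, while each successful branch $\st''$ (with context $\pc'' = \getpc{\st''}$) is further split by $\act_2$ into outcomes whose contexts cover $\pc''$; since $\bigvee \pc'' \Rightarrow \pc$ by coverage of $\act_1$, the disjunction of all final contexts still covers $\pc$. With this in hand, every rule whose composition is built solely from $\kwT{eval}$, $\kwT{setVar}$, $\kwT{getStore}$, $\kwT{setStore}$, $\kwT{uSym}$, $\kwT{iSym}$ and memory actions $\act$ — that is, \textsc{Assignment}, \textsc{uSym}/\textsc{iSym}, \textsc{Call}, \textsc{Return}, \textsc{Top Return} and \textsc{Fail} — discharges coverage directly, since none of these actions is $\kwT{assume}$ and all satisfy Proposition~\ref{prop:esm}(3). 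The \textsc{Memory Action} rules need slightly more care: \textsc{Success} routes the successful outcomes of $\act$ and \textsc{Error} the error outcomes, so the one-step semantics collects \emph{all} outcomes of $\act$, which cover $\pc$ by memory property 3.5.

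The only delicate case, and the one I expect to be the main obstacle, is $\ifgoto{\e}{j}$, precisely because the $\kwT{assume}$ action alone fails full coverage (as flagged in the footnote to Proposition~\ref{prop:esm}(3)). Here coverage is re-established by the collecting semantics combining \textsc{IfGoto - True} and \textsc{IfGoto - False}: the former contributes the context $\pc \land \ceval{\e}{\sto}$ and the latter $\ceval{\neg\e}{\sto} = \pc \land \lnot\ceval{\e}{\sto}$, each kept only when satisfiable by the guard of $\kwT{assume}$. Since $\st \in \sset{\gstate}$ forces $\sinv_{\!\pc}(\cmem)$ and hence $\pc~\sat$ (well-formedness property 2.1), at least one of the two disjuncts is satisfiable, and their union is
$$
(\pc \land \ceval{\e}{\sto}) \lor (\pc \land \lnot\ceval{\e}{\sto}) = \pc \land (\ceval{\e}{\sto} \lor \lnot\ceval{\e}{\sto}) = \pc .
$$
Dropping an unsatisfiable disjunct (equivalent to $\false$) leaves the other equal to $\pc$, so in every case $\getpc{\sscf'} = \pc$ and the implication holds. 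The one external fact this argument relies on is the standard evaluation identity $\ceval{\neg\e}{\sto} = \lnot\ceval{\e}{\sto}$, which I would invoke from the definition of the expression evaluation function.
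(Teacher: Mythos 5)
Your proof is correct and takes essentially the same route as the paper's, which simply observes that the only branching cases are the conditional goto (covered jointly by the two \textsc{IfGoto} rules, compensating for the lack of coverage of $\kwT{assume}$) and the memory action (covered by property 3.5 lifted to state actions). Your additional material --- the composition observation and the explicit check of the non-branching rules --- just spells out details the paper leaves implicit.
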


\begin{proof}
The only relevant cases are the goto and the memory action, as they may branch. The goto is fully covered because of the two rules, each of which covers one of the two possible branches (`fixing' the (non-)issue that the $\kwT{assume}$ state action does not have full coverage), and the memory action has full coverage because the corresponding state action has full coverage.
\end{proof}

As a corollary, we are able to obtain coverage  for multi-step and to-termination \gil execution.

\begin{figure}[!t]
\begin{mathpar}%
\inferrule[Config: OStep]
{
\sscf' = \{ \cf' \mid \prog \vdash  \cf  \semarrow \cf' \}
}{
\prog \vdash \cf \rightarrow \sscf'
}
\and
\inferrule[Collection: OStep]
{
\sscf_f = \{\cf \in \sscf \mid \isfinal(\cf)\} \quad
\sscf_{\mathit{nf}} = \{\cf \in \sscf \mid \lnot\isfinal(\cf)\} \neq \emptyset
\quad
\\
\sscf'' = \bigcup_{\cf \in \sscf_{\mathit{nf}}} \{ \sscf'' \mid \prog \vdash \cf \rightarrow \sscf''\} \quad \sscf' = \sscf_f \cup \sscf''
}
{
\prog \vdash \sscf \rightarrow \sscf'
}
\\
\inferrule[Collection: MStep Reflexive]
{
}
{
\prog \vdash \sscf \rightarrow^0 \sscf
}
\and
\inferrule[Collection: MStep Transitive]
{
\sscf \rightarrow \sscf''
\quad 
\sscf'' \rightarrow^n \sscf'
}
{
\prog \vdash \sscf \rightarrow^{n+1} \sscf'
}
\and
\inferrule[Collection: To-Termination]
{
\sscf \rightarrow^n \sscf'
\quad \isfinal(\sscf')
}
{
\prog \vdash \sscf \csemarrowfin \sscf'
}
\end{mathpar}
\vspace*{-0.2cm}
\caption{The \gil Collecting Semantics}
\vspace*{-0.2cm}
\label{fig:colsem}
\end{figure}

\subsection{Correctness: Under-Approximation: True Bug-Finding}
\label{sec:cor:tbf}

We connect the \gil symbolic execution with the \gil concrete execution via \emph{interpretations}, and
formulate a \emph{completeness} property that the symbolic memory actions need to satisfy with respect to concrete memory actions so that the \gil symbolic execution yields no false positive bug reports.

\subsubsection{Under-approximation}
The \gil execution is inherently suitable for under-approximation, \linebreak given the way in which \gil states are constructed:
\begin{propos}[Monotonicity of Contexts] Given a state model and the associated \gil semantics, the following monotonicity property holds for state actions and configurations:
$$
\begin{array}{r@{\implies}l}
\st.\act(\gv) \rightsquigarrow (\st', \gv')^\result & (\access{\st'}{ctx} \Rightarrow \access{\st}{ctx}) \\
\prog \vdash \cf \rightsquigarrow^n \cf' & \access{(\access{\cf'}{st})}{ctx} \Rightarrow \access{(\access{\cf}{st)}}{ctx}
\end{array}
$$
\end{propos}

This corresponds to the well-known property of traditional symbolic execution, in which path conditions can only get strengthened. 


\subsubsection{Treatment of Symbols} Before stating the correctness results for \gillian bug-finding, we address the issue of freshly allocated locations and symbolic variables; this issue arises because these are meant to be created exclusively using allocators (cf.~\S\ref{sub:alloc}). It requires us to restrict the possible memory and state action executions, as follows.

\begin{desiderata}[Symbolic Memory Actions and Symbols] 
\label{req:memsymvar}
If $\faction \smem \act \sexp {} {\smem'} {\sexp'} {\pc'} \result$, then $\smem'$, $\sexp'$ and $\pc'$ must not contain locations/symbolic variables outside of those present in $\smem$ and $\sexp$.
\end{desiderata}

Given this, we can prove the following lemma by straightforward induction and case analysis:

\begin{lemma}[\gil Symbolic Execution and Symbols]
Given a state model whose memory actions satisfy Requirements~\ref{req:memsymvar}, and a symbolic configuration $\scf = \tup{\sst, \hat\cs, -}$ such that all locations/symbolic variables in  $\sst$ and in $\hat\cs$ have been allocated by $\sst$, and if $\prog \vdash \scf \rightsquigarrow^n \scf'$, then:
\begin{itemize}
\item all locations/symbolic variables in  $\sst'$ and in $\hat{\cs'}$ have been allocated by $\sst'$, for $\scf = \tup{\sst', \hat{\cs'}, -}$;
\item all executed state actions, $\sst.\act(\sexp) \rightsquigarrow (\sst', \sexp')^\result$, respect the property that $\sexp$ does not contain locations/symbolic variables outside of those allocated by $\sst$, and $\sexp'$ does not contain loca-tions/symbolic variables outside of those allocated by $\sst'$.
\end{itemize}
\end{lemma}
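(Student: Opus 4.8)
The plan is to prove both conclusions simultaneously by induction on the number of steps $n$, with the real work concentrated in showing that a \emph{single} step preserves the relevant invariant. The invariant I carry is precisely the hypothesis: every location and symbolic variable occurring in the state $\sst$ and in every store recorded in the call stack $\hat\cs$ belongs to the set of symbols recorded by the allocation record $\access{\sst}{al}$. The base case $n = 0$ is immediate, since $\scf' = \scf$ and the invariant is exactly the assumption. For the inductive step I decompose $\prog \vdash \scf \rightsquigarrow^{n+1} \scf'$ as $\scf \rightsquigarrow \scf'' \rightsquigarrow^{n} \scf'$, show that the single step $\scf \rightsquigarrow \scf''$ re-establishes the invariant at $\scf''$, and then appeal to the induction hypothesis.

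The first ingredient I would isolate is a preservation property for a single state action: if all symbols of $\sst$ and of the input value $\sexp$ are recorded by $\access{\sst}{al}$ and $\sst.\act(\sexp) \rightsquigarrow (\sst', \sexp')^\result$, then (i) the set of recorded symbols only grows from $\sst$ to $\sst'$; (ii) all symbols of $\sst'$ are recorded by $\access{\sst'}{al}$; and (iii) all symbols of $\sexp'$ are recorded by $\access{\sst'}{al}$. This is proved by case analysis on $\act$ over the defining equations of $\kwT{ea}$. The store and evaluation actions ($\kwT{setVar}$, $\kwT{setStore}$, $\kwT{getStore}$, $\kwT{eval}$) neither touch the allocator nor manufacture symbols, since $\ceval{\e}{\sto}$ only reuses symbols already present in the store; $\kwT{assume}$ merely conjoins to the context the Boolean value it receives, which is itself allocated; $\kwT{uSym}$ and $\kwT{iSym}$ do introduce fresh symbols, but by Definition~\ref{def:am} the allocation function simultaneously records them, so they are allocated by $\sst'$; and the memory-action case is exactly where Requirement~\ref{req:memsymvar} is used, guaranteeing that $\cmem'$, $\sexp'$ and $\pc'$ introduce no symbols beyond those of $\cmem$ and the (allocated) input. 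I would then lift this property to the composite actions of Definition~\ref{def:act:comp}: because composition threads the state and value through an intermediate configuration, the three clauses chain transitively, so the property holds verbatim for every composite action appearing in Figure~\ref{fig:gilrules}.

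With the state-action lemma in hand, the single-step case analysis over the rules of Figure~\ref{fig:gilrules} is routine for every rule that only modifies the state: \textsc{Assignment}, \textsc{IfGoto}, \textsc{Memory Action}, \textsc{uSym}/\textsc{iSym}, \textsc{Fail} and \textsc{Top Return} are discharged directly, and the second conclusion of the lemma (that each executed action sees an input allocated by its pre-state and returns a value allocated by its post-state) falls out of clauses (i) and (iii) applied along the composition. The genuinely load-bearing cases are \textsc{Call} and \textsc{Return}, which manipulate the call stack rather than just the state, and this is where I expect the main obstacle to lie. In \textsc{Call}, the store $\sto'$ obtained by $\kwT{getStore}$ is pushed onto the stack, and its symbols are allocated by the current state, so the new frame satisfies the invariant. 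In \textsc{Return}, the frame's store is restored into the state, and the subtlety is that this store was allocated \emph{at push time}, possibly many steps earlier. The argument closes by the monotonicity clause (i): since the set of recorded symbols only ever grows along execution, any symbol allocated when a frame was pushed is still allocated when that frame is popped, so restoring the store keeps every symbol allocated by the current state. Composing this with the induction hypothesis yields both conclusions of the lemma.
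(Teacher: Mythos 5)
Your proposal is correct and takes essentially the same route as the paper, which dispatches this lemma with the single remark that it follows ``by straightforward induction and case analysis'' given Requirement~1. Your elaboration --- induction on $n$, a per-action preservation property (with the memory-action case resting on the symbol requirement and the allocator recording fresh symbols for $\kwT{uSym}$/$\kwT{iSym}$), lifted through action composition, with monotonicity of the allocation record closing the \textsc{Return} case --- is exactly the argument the paper leaves implicit.
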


Therefore, we require the following when considering state actions in isolation:

\begin{desiderata}[Symbolic State Actions and Symbols]
\label{req:stasymvar}
If
$\sst.\act(\sexp) \rightsquigarrow (\sst', \sexp')^\result$, then $\sexp$ 
must not contain locations/symbolic variables outside of those present in $\sst$. 
\end{desiderata}

\subsubsection{Interpretations}
We relate symbolic values and states to their concrete counterparts via \linebreak \emph{interpretations}. At the core of interpretations are symbolic environments, $\varepsilon : \lxs \pmap \sexps$, which map symbolic variables to concrete values.
Given a symbolic environment $\varepsilon$, a \emph{value interpretation}, $\sintp : \sexps \pmap \vals$, is then obtained inductively:
$$
\begin{array}{c}
\sintp{(\vl)}{}  \defeq  \vl \qquad
\sintp{(\lx)}{}  \defeq  \varepsilon(\lx) \qquad
\sintp{(\unop{\sexp})}{}  \defeq  \ceval{\unop{\sintp{(\sexp)}{}}}{} \qquad
\sintp{(\binop{\sexp_1}{\sexp_2}}){}  \defeq   \ceval{\binop{\sintp{(\sexp_1)}{}}{\sintp{(\sexp_2)}{}}}{\emptyset}
\end{array}
$$
where the underline notation for unary and binary operators denotes concrete evaluation. Memory interpretation, $\sintp : \sset{\smemory} \pmap \sset{\cmemory}$, is parametric on the value interpretation~$\varepsilon$ and needs to be defined per instantiation. Value interpretations are then extended to stores, allocators, call stacks, and outcomes in the standard way, overloading notation.

A \emph{state interpretation}, $\sintp : \sset{\hat\gstate} \pmap
\sset{\gstate}$, maps symbolic to concrete states, as follows:
$$
     \sintp(\tup{\smem, \ssto, \sarec, \pc}) \defeq
     {\begin{cases}
     	\tup{\sintp(\smem), \sintp(\ssto), \sintp(\sarec)}, & \text{$\sintp(\smem)$, $\sintp(\ssto)$, $\sintp(\sarec)$ defined, $\sintp(\pc) = \true$} \\
	   \text{not defined}, & \text{otherwise}
	 \end{cases}}
$$
and is extended to a \emph{configuration interpretation} as follows:
$$\sintp(\tup{\sst, \hat\cs, i}) = \tup{\sintp(\sst), \sintp(\hat\cs), i},$$
being undefined if ${\sintp(\sst)}$ is undefined. With these definitions in place, the \emph{models} of a symbolic state $\sst$ are defined in the standard way:
$$
\modls(\sst) = \{ \sintp(\sst) \mid \Vint\sintp\sst \},
$$
and are extended to configurations as follows:
$$
\modls(\tup{\sst, \hat\cs, i}^o) = \{ \tup{\sintp(\sst), \sintp(\hat\cs), i}^{\sintp(o)} \mid \Vint\sintp\sst \}.
$$

\begin{lemma}[Monotonicity of Interpretations]
Given a state model and the associated \gil semantics, the following monotonicity property holds for interpretations:
$$
\begin{array}{r@{\implies}l}
\st.\act(\gv) \rightsquigarrow (\st', \gv')^\result \land \st' \in \domain(\sintp) & \st, \gv,  \gv' \in \domain(\sintp) \\
\prog \vdash \cf \rightsquigarrow^n \cf' \land \cf' \in \domain(\sintp) & \cf \in \domain(\sintp)
\end{array}
$$
\end{lemma}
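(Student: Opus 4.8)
The plan is to prove both implications together, reducing the second (multi-step) statement to the first (single-action) one. For the single-action statement I would show that $\sintp$ is \emph{backward closed} along a state action: if $\sintp$ is defined on the output state $\st'$, then it is already defined on the input state $\st$ and on both the input value $\gv$ and the output value $\gv'$. Recall that $\st = \tup{\smem,\ssto,\sarec,\pc} \in \domain(\sintp)$ amounts to four conditions — $\sintp(\smem)$, $\sintp(\ssto)$, $\sintp(\sarec)$ defined and $\sintp(\pc) = \true$ — and that a value lies in $\domain(\sintp)$ exactly when every symbolic variable occurring in it is covered by the underlying environment $\varepsilon$. The two facts that drive the argument are (i) \emph{monotonicity of contexts}, giving $\getpc{\st'} \Rightarrow \getpc{\st}$ (Proposition~\ref{prop:esm}, property~1), and (ii) the symbol-tracking discipline: by the Lemma on symbolic execution and symbols, every symbolic variable appearing anywhere in a reachable state is recorded by that state's allocator, and by the Allocation rule the allocator only ever grows, so $\mathrm{symbols}(\sarec) \subseteq \mathrm{symbols}(\sarec')$.

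I would then proceed by case analysis on the action. Since every symbol of $\smem$, $\ssto$, $\pc$ (and of $\sarec$ itself) is tracked by $\sarec$, and $\mathrm{symbols}(\sarec) \subseteq \mathrm{symbols}(\sarec')$, definedness of $\sintp(\sarec')$ — which holds because $\st' \in \domain(\sintp)$ — means $\varepsilon$ covers all symbols of $\st$; hence $\sintp(\smem)$, $\sintp(\ssto)$, $\sintp(\sarec)$ are all defined and $\sintp(\pc)$ is defined. For the path condition itself, $\sintp(\pc') = \true$ together with $\getpc{\st'} \Rightarrow \getpc{\st}$ and the fact that interpretation preserves implication yields $\sintp(\pc) = \true$, so $\st \in \domain(\sintp)$. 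For the values, Requirement~\ref{req:stasymvar} gives $\mathrm{symbols}(\gv) \subseteq \mathrm{symbols}(\st)$, so $\gv \in \domain(\sintp)$; and the output value $\gv'$ is, by the same symbols lemma, built from symbols tracked in $\sarec'$, hence $\gv' \in \domain(\sintp)$.

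Next I would lift this to the composed actions of Definition~\ref{def:act:comp}: backward closure is preserved by both $\circ$ and the list composition, by chaining the closure of the second component into that of the first. With composed actions handled, one-step backward closure at the level of configurations follows by case analysis on the rules of Figure~\ref{fig:gilrules}; the only extra ingredient beyond the state action is the call stack, whose interpretation is componentwise and whose symbols are likewise allocator-tracked, so the same allocator-monotonicity argument covers the frame pushed by \textsc{Call} and the frame exposed by \textsc{Return}. The multi-step statement then follows by induction on $n$: the reflexive case is immediate, and the transitive case composes the inductive hypothesis ($\cf'' \in \domain(\sintp)$) with one-step backward closure to obtain $\cf \in \domain(\sintp)$.

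The step I expect to be delicate is the memory component, because $\sintp$ on memories is left abstract (defined per instantiation). The symbol-coverage argument establishes only that $\varepsilon$ covers every symbol of $\smem$; to conclude that $\sintp(\smem)$ is \emph{defined} I must assume that memory interpretation is monotone with respect to symbol coverage — that is, that it is defined whenever the environment covers the symbols of the memory (equivalently, that shrinking the symbol footprint cannot turn a defined interpretation undefined). I would make this well-behavedness requirement on instantiations explicit; granting it, every case reduces to the two clean ingredients of monotone contexts and a monotone, all-tracking allocator.
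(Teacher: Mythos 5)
Your proof is correct and takes essentially the same route as the paper, whose entire proof reads ``Trivial, given symbolic variable requirements on memory and state actions'' --- that is, Requirements~\ref{req:memsymvar} and~\ref{req:stasymvar} together with the symbols lemma (allocator-tracked symbols, monotone allocators) and monotonicity of contexts, which are exactly the ingredients you assemble, followed by the same lifting through action composition, one-step case analysis, and induction on $n$. The one point where you go beyond the paper --- making explicit that the abstract, per-instantiation memory interpretation must be defined whenever the environment covers the memory's symbols --- is a genuine implicit assumption in the paper's one-line proof, and flagging it as a requirement on instantiations is the right call rather than a deviation.
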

\begin{proof}
Trivial, given symbolic variable requirements on memory and state actions.
\end{proof}

For all correctness results, the following requirements needs to hold:
\begin{desiderata}[Well-Formedness and Interpretations]\label{des:memwf}
Given a concrete and a symbolic memory model, $\cmemory(A) = \tup{\sset{\cmemory}, \sinv, \cea}$ and $\smemory(A) = \tup{\sset{\smemory}, \hat\sinv, \sea}$, interpretations must preserve well-formedness:
$$
\sinv_\pc(\smem) \land \smem \in \domain(\sintp) \land \sintp(\pc) = \true \implies \sinv(\sintp(\smem))
$$
\end{desiderata}

It is straightforward to prove, given these requirements, that state and configuration interpretation also preserve well-formedness.

\subsubsection{Backward Completeness} Our goal for the core symbolic execution of \gillian is to prove that it generates no false positive bug reports. We capture this with the following property:

\begin{theorem}[Backward Completeness: GIL]\label{des:bc:gil}
Given a concrete and a symbolic state model, $\gstate(\actions, \cmemory(\actions)) \defeq \langle\sset{\gstate}, \kwT{ea} \rangle$, and  $\hat\gstate(\actions, \smemory(\actions)) \defeq \langle\sset{\hat\gstate}, \kwT{ea} \rangle$, respectively, the coresponding \gil symbolic execution obeys the following property:
\begin{mathpar}
\small
          \inferrule[GIL - Backward Completeness (GIL-BC)]{}{
        {\begin{array}{l}
\prog \vdash  \scf \semarrow^n_C \scf' \land
      \cf' \in \modls(\hat\cf')
        \implies \exists \cf.~\cf \in \modls(\hat\cf) \land \prog \vdash  \cf \semarrow^n_C {\cf'}
                \end{array}}}
\end{mathpar}
\end{theorem}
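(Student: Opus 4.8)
The plan is to fix a single interpretation and build a concrete trace that mirrors the symbolic one command-for-command, exploiting the fact that the witness interpretation for the \emph{final} concrete configuration already pins down a suitable \emph{initial} concrete configuration, so no re-choice of interpretation is ever needed along the trace. Concretely, I would first unpack $\cf' \in \modls(\scf')$: by the definition of $\modls$ this yields an interpretation $\sintp$ with $\Vint\sintp{\access{\scf'}{st}}$ and $\cf' = \sintp(\scf')$. Since $\prog \vdash \scf \semarrow^n_C \scf'$ and $\scf' \in \domain(\sintp)$, the Monotonicity of Interpretations lemma gives $\scf \in \domain(\sintp)$ and, applied to each suffix of the trace, that \emph{every} intermediate configuration lies in $\domain(\sintp)$. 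I then set $\cf \defeq \sintp(\scf)$. To discharge $\cf \in \modls(\scf)$ it suffices that $\sintp$ models $\scf$: the memory, store, and allocator components are interpreted by construction, $\sintp(\getpc{\scf}) = \true$ follows from $\sintp(\getpc{\scf'}) = \true$ together with the Monotonicity of Contexts property $\getpc{\scf'} \Rightarrow \getpc{\scf}$, and well-formedness of $\sintp(\access{\scf}{mem})$ follows from Desiderata~\ref{des:memwf}.

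The heart of the argument is a single-step commutation lemma: if $\prog \vdash \scf \semarrow_c \scf'$ with $\scf, \scf' \in \domain(\sintp)$ and $\sintp$ modelling $\scf'$, then $\prog \vdash \sintp(\scf) \semarrow_c \sintp(\scf')$ using the \emph{same} command $c$. I would prove this by case analysis on the rules of Figure~\ref{fig:gilrules}. Because interpretation leaves the procedure identifiers and indices of the call stack untouched, $\cmd(\prog,-,i)$ returns the same $c$ on $\sintp(\scf)$, so the executed command matches automatically. For the non-branching rules (\textsc{Assignment}, \textsc{Call}, \textsc{Return}, \textsc{Top Return}, \textsc{Fail}, \textsc{uSym}/\textsc{iSym}), the work reduces to showing that $\sintp$ commutes with the basic state actions and with the two composition operators of Definition~\ref{def:act:comp}: evaluation commutes with interpretation by a routine induction over expressions, and $\kwT{setVar}$, $\kwT{setStore}$, and $\kwT{getStore}$ commute because $\sintp$ acts componentwise on stores.

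For the two \textsc{IfGoto} rules, the branch taken concretely is forced to coincide with the symbolic one: since $\sintp$ models the post-state, $\sintp(\getpc{\scf'}) = \true$ fixes $\ceval{\e}{\sintp(\ssto)}$ to the branch value, so exactly the matching concrete rule fires and lands at the same index. For the \textsc{Memory Action} rules, the corresponding concrete step follows from the completeness obligation required of the symbolic memory actions in this section: the concrete memory action on $\sintp(\smem)$ with argument $\sintp(\sexp)$ yields the interpreted outcome $(\sintp(\smem'), \sintp(\sexp'))^\result$ precisely because $\sintp$ satisfies the branch context $\pc'$. I would then lift this single-step lemma to $n$ steps by induction on $n$, reusing the \emph{same} $\sintp$ throughout; this is legitimate because each intermediate configuration was shown to lie in $\domain(\sintp)$ and to be modelled by $\sintp$ via context monotonicity (and Desiderata~\ref{des:memwf}). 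The base case $n=0$ is immediate with $\cf = \cf'$, and the composed concrete trace uses exactly $C$ since each step reuses the symbolic command.

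I expect the main obstacle to be the allocation cases, and in particular \textsc{uSym}. Unlike interpreted-symbol allocation, where the \textsc{Fresh} ($\vals$) rule permits any concrete value (so $\sintp$ of the allocated symbol is trivially allocatable), freshly allocated \emph{locations} must remain fresh under interpretation: I must argue that $\sintp$ sends the symbolically fresh location to a concretely fresh one and that $\sintp(\access{\scf}{al})$ can in fact allocate it, so that the allocator interpretation commutes with a concrete allocation step. This is exactly where the treatment of symbols is needed, via Requirements~\ref{req:memsymvar} and~\ref{req:stasymvar} and the accompanying \gil Symbolic Execution and Symbols lemma, which guarantee the symbolic/concrete freshness correspondence underpinning this commutation.
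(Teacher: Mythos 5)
Your proposal is correct and follows essentially the same route as the paper: fix the single interpretation $\sintp$ obtained from $\cf' \in \modls(\hat\cf')$, take $\cf = \sintp(\hat\cf)$ (justified via monotonicity of interpretations, context monotonicity, and Desiderata~\ref{des:memwf}), and lift a per-rule single-step commutation argument---grounded in MA-BC for memory actions, the symbol requirements for allocation, and commutation of interpretation with the basic state actions and compositions---to $n$ steps by induction with the same $\sintp$ throughout, exactly as in the paper's chain Lemma~\ref{des:stateact}, Lemma~\ref{lem:sactsc}, Lemma~\ref{lem:gilbcone}, Lemma~\ref{lem:gilbcmult}. The only difference is organizational: the paper factors the state-action case analysis (\textsc{Assume}, \textsc{uSym}, \textsc{iSym}, memory action) into a standalone SA-BC lemma plus a composition lemma before treating the semantics rules, whereas you inline that analysis into the one-step case split over Figure~\ref{fig:gilrules}.
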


The backward completeness property states that the set of concrete
 states reachable from $\modls({\hat\cf})$ by
executing ${C}$ is \emph{under-approximated} by $\modls(\scf')$,
meaning that any concrete state in $\modls(\scf')$ is reachable
by executing $C$ from some concrete state in 
$\modls({\hat\cf})$. This property resonates strongly with the meaning of incorrectness logic triples~\cite{incohearn,isl}.
 If this property holds, this means that any
symbolic trace that results in an error state has a corresponding
concrete trace that also results in an error state: that is,
that all of the bugs found by \gillian are true.

We proceed by formulating requirements on memory actions, which then lift to state actions, which then, in turn, lift to the \gil symbolic execution, yielding the required property, as per Requirements~\ref{des:bc:gil}.

\begin{desiderata}[Backward Completeness: Memory Actions]\label{des:memact}
Given a concrete and a symbolic memory model, $\cmemory(A) = \tup{\sset{\cmemory}, \sinv, \cea}$ and $\smemory(A) = \tup{\sset{\smemory}, \hat\sinv, \sea}$, the following  property must hold for all
memory actions $\act \in \mactions$:
\begin{mathpar}
\inferrule[Memory Action - Backward Completeness (MA-BC)]
  {}{
  {\hspace*{-0.15cm}\begin{array}{l}
\faction{\smem}{\act}{\sexp}{\pc}{\smem'}{\sexp'}{\pc'}\result
  \land \sint(\pc') = \true \land \smem, \sexp, \pc \in \domain(\sintp)
  \\ \qquad \implies
        \caction{\sintp(\smem)}{\act}{\sintp({\sexp})}{(\sintp(\smem'), \sintp({\sexp'}) )^\result}
        \end{array}}
 }
\end{mathpar}
\end{desiderata}

These requirements have to be proven by the tool developer for each \gillian instantiation. Observe that they are slightly stronger than the analogous ones for the \gil symbolic semantics given in Requirements~\ref{des:bc:gil}, in that the witness, in this case $\sintp(\smem)$, is explicitly provided.
Note also that the interpretation $\sintp$ has to realise $\pc'$, which would be, in effect, the context of the symbolic state after the action execution, as per Definition~\ref{def:esm}. We can now prove the following property for symbolic state actions by case analysis:

\begin{lemma}[Backward Completeness: State Actions]\label{des:stateact}
Given a concrete and a symbolic state model, $\gstate(\actions, \cmemory(\actions)) \defeq \langle\sset{\gstate}, \kwT{ea} \rangle$, and  $\hat\gstate(\actions, \smemory(\actions)) \defeq \langle\sset{\hat\gstate}, \kwT{ea} \rangle$, respectively,
if the memory actions satisfy Requirements~\ref{des:memact}, then the following property holds for all state actions $\act \in \actions \uplus \sactions$:
\begin{mathpar}
%
          \inferrule[State Action - Backward Completeness (SA-BC)]{}{
          {\begin{array}{l}
        \action{\sst}{\act}{\sexp}{(\sst', \sexp')^\result} \land \sst' \in \domain(\sintp) 
        \\
        \qquad  \implies  \action{\sint(\sst)}{\act}{\sintp(\sexp)}{(\sintp(\sst'), \sintp(\sexp'))^\result}
        \end{array}}
        }
  \end{mathpar}
\end{lemma}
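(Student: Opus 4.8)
The plan is to proceed by a case analysis on the state action $\act \in \actions \uplus \sactions$, following the clause of the state action execution function (Definition~\ref{def:esm}) that applies. Before splitting into cases, I would record three preliminary facts. First, by the Monotonicity of Interpretations lemma, the hypothesis $\sst' \in \domain(\sintp)$ already yields $\sst, \sexp, \sexp' \in \domain(\sintp)$, so all inputs are interpretable. Second, writing $\sst = \tup{\smem, \ssto, \sarec, \pc}$ and $\sst' = \tup{\smem', \ssto', \sarec', \pc'}$ and unfolding the domain conditions, I get that $\sintp$ is defined on each component and that $\sintp(\pc) = \sintp(\pc') = \true$; in particular, by the well-formedness requirement~\ref{des:memwf}, $\sintp(\smem)$ is well-formed, so $\sintp(\sst)$ is a genuine concrete state. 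Third, throughout I would use that $\sintp$ distributes over the state tuple, i.e.\ $\sintp(\tup{\smem, \ssto, \sarec, \pc}) = \tup{\sintp(\smem), \sintp(\ssto), \sintp(\sarec)}$ once the side-conditions hold, to reduce each goal to a statement about the individual components.

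For a memory action $\act \in \mactions$, the symbolic clause only changes the memory and context, leaving $\ssto$ and $\sarec$ fixed, so I would feed $\faction{\smem}{\act}{\sexp}{\pc}{\smem'}{\sexp'}{\pc'}{\result}$, together with $\smem, \sexp, \pc \in \domain(\sintp)$ and $\sintp(\pc') = \true$, into the Memory Action Backward Completeness requirement~(\ref{des:memact}) to obtain $\caction{\sintp(\smem)}{\act}{\sintp(\sexp)}{(\sintp(\smem'), \sintp(\sexp'))^\result}$; lifting this through the concrete state-action clause (carrying store and allocator unchanged) gives exactly $\action{\sintp(\sst)}{\act}{\sintp(\sexp)}{(\sintp(\sst'), \sintp(\sexp'))^\result}$, since $\sintp(\sst') = \tup{\sintp(\smem'), \sintp(\ssto), \sintp(\sarec)}$. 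The store-manipulating actions $\kwT{setVar}_\x$, $\kwT{setStore}$, and $\kwT{getStore}$ are routine, relying only on $\sintp$ acting pointwise on stores (e.g.\ $\sintp(\ssto[\x \mapsto \sexp]) = \sintp(\ssto)[\x \mapsto \sintp(\sexp)]$) and on $\sintp(\true) = \true$. For $\kwT{eval}_\e$ I would prove the small commutation sub-lemma $\sintp(\ceval{\e}{\ssto}) = \ceval{\e}{\sintp(\ssto)}$ by induction on $\e$, immediate from the inductive definition of the value interpretation, whose operator cases are defined precisely by concrete evaluation. For $\kwT{assume}$, the key observation is that $\sintp$ discards the context, so $\sintp(\sst') = \sintp(\sst)$; the concrete $\kwT{assume}$ then succeeds because its guard $\sintp(\sexp) = \true$ follows from $\sintp(\pc') = \true$ with $\pc' = \pc \land \sexp$.

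The main obstacle is the allocation cases, $\kwT{uSym}$ and $\kwT{iSym}$, where the symbolic and concrete allocators act over different ranges and need not a priori produce matching outputs. Here I would exploit the non-determinism built into $\mathsf{fresh}$ to choose the concrete allocation outcome that matches the interpretation of the symbolic one. For $\kwT{uSym}$, locations are \gil values and interpret to themselves, so the symbolically allocated $\lst{\loc}$ remain fresh for the interpreted record $\sintp(\sarec)$ and can be re-allocated concretely, giving $\sintp(\lst{\loc}) = \lst{\loc}$ and a matching updated record. For $\kwT{iSym}$, the symbolic allocator returns fresh symbolic variables $\lst{\lx}$ whereas the concrete $\mathsf{fresh}_\vals$ permits \emph{any} values; I would pick the concrete witnesses to be $\sintp(\lst{\lx}) = \varepsilon(\lst{\lx})$. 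The remaining and genuinely delicate point is checking that $\sintp(\sarec')$ coincides with the concretely updated allocation record; this forces the (deliberately unspecified) interpretation of allocation records to be set up compatibly with the two allocation ranges $\lxs$ and $\vals$, and is where I expect the proof to require the most care. Non-success results arise only for memory actions, since basic state actions always succeed, so they are already subsumed by the memory-action case via~\ref{des:memact}.
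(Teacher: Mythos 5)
Your proposal is correct and takes essentially the same route as the paper's proof: a case analysis on the action, with memory actions dispatched through the MA-BC requirement (Requirements~\ref{des:memact}), the store and $\kwT{eval}_\e$ cases handled via the commutation of interpretation with expression evaluation (proved by induction on $\e$), $\kwT{assume}$ via $\sintp(\pc') = \true$ for $\pc' = \pc \land \sexp$ together with the interpretation discarding the context, and the $\kwT{uSym}$/$\kwT{iSym}$ cases by exploiting the non-determinism of $\mathsf{fresh}$ so that the concrete allocator can re-allocate exactly the interpreted symbols (locations being fixed by interpretations, and $\mathsf{fresh}_{\vals}$ admitting arbitrary values). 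The only difference is one of explicitness: you flag the compatibility of the allocation-record interpretation with the two allocation ranges as the delicate point, which the paper's proof asserts without comment.
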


\begin{proof}
By case analysis on $\alpha$. The $\kwT{setVar}_{\x}$, $\kwT{setStore}$, $\kwT{getStore}$, and $\kwT{eval}_{\e}$ cases are straightforward, noting that $\ceval{\e}{\sint(\ssto)} = \sint({\ceval{\e}\ssto})$ (proven by induction on $\e$). 
We address the non-trivial cases, which alter the context and the allocators. 
We have the hypotheses: \hyp{1}~$\action{\sst}{\act}{\sexp}{(\sst', \sexp')^\result}$; and \hyp{2}~$\sst' \in \domain(\sintp)$. By Requirements~\ref{req:stasymvar}, we have then that also \hyp{3}~$\sst, \sexp, \sexp' \in \domain(\sintp)$.

\medskip
\case{\prooflab{Assume}}
\begin{proofpf}
  \pflongnumbers
  \step{s2} {[by the hypotheses]
    \begin{pfenum}
  	  \item $\hat\st = \tup{\smem, \ssto, \sarec, \pc}$
	  \item $\sexp = \pc'$
  	  \item $\hat\st' = \tup{\smem, \ssto, \sarec, \pc \land \pc'}$
	  \item $\pc \land \pc'~\sat$
	  \item $\sexp' = \true$
	  \item $\result = \osucc$
    \end{pfenum} }
  \step{s3} {$\sintp(\sst') = \sintp(\sst)$ [by H2, monotonicity of interpretation, 1.1 and 1.3]}
  \step{s4} {$\sint({\pc'}) = \true$ [by H2]}
  \step{sx} {$\action{\sint(\sst)}{\kwT{assume}}{\seval{\pc'}\lenv}{(\sintp(\sst'), \sintp(\sexp'))^\osucc$} (G1) [by definition of $\kwT{assume}$ and previous steps]}
\end{proofpf}

\case{\prooflab{uSym}}
\begin{proofpf}
  \pflongnumbers
  \step{s2} {[by the hypotheses]
    \begin{pfenum}
  	  \item $\hat\st = \tup{\smem, \ssto, \sarec, \pc}$
	  \item $\alloc{\sarec}{n}{\sarec', \lst\loc}{\clocs}$
  	  \item $\hat\st' = \tup{\smem, \ssto, \sarec', \pc}$
	  \item $\sexp' = \lst\loc$ 
	  \item $\result = \osucc$
    \end{pfenum} }
  \step{s4} {$\alloc{\sint(\hat\arec)}{n}{\sint({\hat\arec'}) , \sintp(\lst\loc)}{\clocs}
$ [by 1.2 and the fact that interpretations do not affect locations]}
  \step{sx} {$\action{\sint(\sst)}{\kwT{uSym}}{n}{(\sintp(\sst'), \sint(\sexp'))^\osucc$} (G1) [by definition of $\kwT{uSym}$ and previous steps]}
\end{proofpf}

\case{\prooflab{iSym}}
\begin{proofpf}
  \pflongnumbers
  \step{s2} {[by the hypotheses]
    \begin{pfenum}
  	  \item $\hat\st = \tup{\smem, \ssto, \sarec, \pc}$
	  \item $\alloc{\sarec}{n}{\sarec', \lst\lx}{\lxs}$
  	  \item $\hat\st' = \tup{\smem, \ssto, \sarec', \pc}$
	  \item $\sexp' = \lst\lx$ 
	  \item $\result = \osucc$
    \end{pfenum} }
  \step{s4} {$\alloc{\sint(\hat\arec)}{n}{\sint({\hat\arec'}) , \sintp(\lst\lx)}{\vals}
$ [by 1.2 and the fact that any \gil values can be allocated]}
  \step{sx} {$\action{\sint(\sst)}{\kwT{iSym}}{n}{(\sintp(\sst'), \sint(\sexp')^\osucc$} (G1) [by definition of $\kwT{uSym}$ and previous steps]}
\end{proofpf}

\case{\prooflab{Memory Action}}
\begin{proofpf}
  \pflongnumbers
  \step{s2} {[by Hs]
    \begin{pfenum}
  	  \item $\hat\st = \tup{\smem, \ssto, \sarec, \pc}$
	  \item $\faction{\smem}{\act}{\sexp}{\pc}{\cmem'}{\sexp'}{\pc'} \result$
  	  \item $\hat\st' = \tup{\smem', \ssto, \sarec, \pc'}$
    \end{pfenum} }
  \step{sx1} {$\hat\sinv_\pc(\smem)$ [by H2]}
  \step{sx2} {$\sintp(\pc') = \true$ [by H2]}
  \step{sx3} {$\smem, \sexp \in \domain(\sintp)$ [by H2 and Requirements~\ref{req:stasymvar}]}
      \step{sx4} {$\caction{\sint(\smem)}{\act}{\sint({\sexp})}{(\sint(\smem'), \sint({\sexp'}) )}^\result
$ [by \prooflab{MA-C}, 2, 1.2, 4 and 5]}
  \step{sx} {$\action{\sint(\sst)}{\act}{\sint(\sexp)}{(\sint(\sst'), \sint(\sexp'))^\result$} (G1) [by definition of memory action and previous steps]}
\end{proofpf}
\end{proof}

The next part to prove is that backward completeness is preserved by state action composition:

\begin{lemma}[Action Composition]
\label{lem:sactsc}
If two state actions $\act_1, \act_2 \in \actions$ 
satisfy \prooflab{SA-BC}, then their compositions, $\acomp{\act_1}{\act_2}$ and $\lstacomp{\act_1}{\act_2}$,
 also satisfy \prooflab{SA-BC}.  
\end{lemma}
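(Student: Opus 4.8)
The plan is to handle the two composition operators of Definition~\ref{def:act:comp} separately, each time unfolding the defining disjunction and doing a case analysis, appealing to \prooflab{SA-BC} for $\act_1$ and $\act_2$ together with the Monotonicity of Interpretations lemma. The guiding idea is to reason \emph{backwards} through the composite transition: \prooflab{SA-BC} for a composite action only hands us realisability of the \emph{final} state $\sst'$, so the work is to transport that realisability back onto the intermediate state before \prooflab{SA-BC} for each component becomes applicable.

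For $\acomp{\act_1}{\act_2}$, assume $\action{\sst}{(\acomp{\act_1}{\act_2})}{\sexp}{(\sst', \sexp')^\result}$ with $\sst' \in \domain(\sintp)$, and split on the defining disjunction. In the failure disjunct $\act_1$ itself produces $(\sst', \sexp')^\result$ with $\result \neq \osucc$, so \prooflab{SA-BC} for $\act_1$ (using $\sst' \in \domain(\sintp)$) yields $\action{\sintp(\sst)}{\act_1}{\sintp(\sexp)}{(\sintp(\sst'), \sintp(\sexp'))^\result}$, and the first disjunct of the composition reassembles the goal. In the success disjunct there is an intermediate $(\sst'', \sexp'')^\osucc$ with $\action{\sst''}{\act_2}{\sexp''}{(\sst', \sexp')^\result}$. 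The crucial step is to secure $\sst'' \in \domain(\sintp)$: applying the Monotonicity of Interpretations lemma to this second action together with $\sst' \in \domain(\sintp)$ gives exactly $\sst'', \sexp'', \sexp' \in \domain(\sintp)$. Then \prooflab{SA-BC} for $\act_2$ interprets the second step, \prooflab{SA-BC} for $\act_1$ (now legitimate because its output $\sst''$ is realisable) interprets the first, and the second disjunct of the composition recombines them through the interpreted intermediate $\sintp(\sst'')$.

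The list composition $\lstacomp{\act_1}{\act_2}$ follows the same two-case pattern on input $[\sexp_1, \sexp_2]$ and output $[\sexp_1', \sexp_2']$, using that the value interpretation distributes over lists, i.e.\ $\sintp([\sexp_1, \sexp_2]) = [\sintp(\sexp_1), \sintp(\sexp_2)]$ and similarly for the outputs. The success disjunct is identical to the $\acomp$ argument, the only cosmetic difference being that $\act_2$ consumes the independent second input $\sexp_2$ rather than the output of $\act_1$; Monotonicity of Interpretations applied to $\action{\sst''}{\act_2}{\sexp_2}{(\sst', \sexp_2')^\result}$ still delivers $\sst'' \in \domain(\sintp)$. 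In the failure disjunct, where $\act_1$ fails and $\sexp_2' = \bot$, I additionally use that $\sintp(\bot) = \bot$, which holds because $\bot$ is an (uninterpreted) value and the value interpretation is the identity on values, so the interpreted transition again satisfies the first disjunct of $\lstacomp{\act_1}{\act_2}$.

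The one genuine obstacle is this backward transport of realisability: \prooflab{SA-BC} for $\act_1$ demands that its \emph{output} state lie in $\domain(\sintp)$, yet the hypothesis only offers realisability of the final state $\sst'$, which is the output of $\act_2$. The Monotonicity of Interpretations lemma closes precisely this gap by pushing realisability of $\sst'$ back across the $\act_2$ step onto $\sst''$; once $\sst'' \in \domain(\sintp)$ is available, both applications of \prooflab{SA-BC} fire and everything else is routine reassembly of the composition.
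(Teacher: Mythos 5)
Your proposal is correct and follows essentially the same route as the paper's own proof: the error disjunct reduces directly to \prooflab{SA-BC} for $\act_1$, and in the success disjunct the paper likewise applies Monotonicity of Interpretations to the $\act_2$ step to push realisability of $\sst'$ back onto the intermediate state before invoking \prooflab{SA-BC} for each component and reassembling the composition (identically for $\lstacomp{\act_1}{\act_2}$, where $\act_2$ consumes the independent second input). Your explicit remarks that interpretation distributes over lists and that $\sintp(\bot) = \bot$ in the failure disjunct of $\lstacomp{\act_1}{\act_2}$ are details the paper leaves implicit, but they do not change the argument.
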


\begin{proof}
For both compositions, the cases in which $\act_1$ results in an error trivially hold, as they amount to the appropriate property of $\act_1$; we prove the cases in which the first action succeeds.
We first prove the property for $\acomp{\act_1}{\act_2}$. 

\begin{proofpf}
     \pflongnumbers
 \step{1}{
      \assume{
        \begin{pfenum}
          \item (H1) $\action{\sst}{(\acomp{\act_1}{\act_2})}{\sexp}{(\sst', \sexp')^\result}$
          \item (H2) $\Vint{\sint}{\sst''}$
        \end{pfenum}
      }
    }
      \step{2}{$ \exists \sst_1, \sexp_1$ [From Action Composition on H1]}
        \begin{pfenum*}
           \item $\action{\sst}{\act_1}{\sexp}{(\sst_1, \sexp_1)^\osucc}$  \label{1m1}
           \item $ \action{\sst_1}{\act_2}{\sexp_1}{(\sst', \sexp')^\result}$ \label{1m2}
      \end{pfenum*}
      \step{3}{$\Vint\sint{\sst_1}$ [by 2.2 and monotonicity of interpretations]}
      \step{4}{$\action{\sint(\sst)}{\act_1}{\sintp(\sexp)}{(\sintp(\sst_1), \sintp(\sexp_1))^\osucc$} [by [SA-BC] applied to 2.1, and 3]}
      \step{6}{$\action{\sint(\sst_1)}{\act_2}{\sintp(\sexp_1)}{(\sintp(\sst'), \sintp(\sexp'))^\result$} [by [SA-BC] applied to 4 and H2]}
      \step{7}{$\action{\sintp(\sst)}{(\acomp{\act_1}{\act_2})}{\sintp(\sexp)}{(\sintp(\sst'), \sintp(\sexp'))^\result}$ [by 4 and ]}
          \end{proofpf} 
  
  \noindent We conclude by proving the property for $\lstacomp{\act_1}{\act_2}$. 
\begin{proofpf}
     \pflongnumbers
 \step{1}{
      \assume{
        \begin{pfenum}
          \item (H1) $\action{\sst}{(\lstacomp{\act_1}{\act_2})}{[\sexp_1, \sexp_2]}{(\sst', [\sexp_1', \sexp_2'])^\result}$
          \item (H2) $\Vint{\sint}{\sst''}$
        \end{pfenum}
      }
    }
      \step{2}{$ \exists \sst_1, \sexp_1$ [From Action Composition on H1]}
        \begin{pfenum*}
           \item $\action{\sst}{\act_1}{\sexp_1}{(\sst_1, \sexp_1')^\osucc}$  \label{1m1}
           \item $ \action{\sst_1}{\act_2}{\sexp_2}{(\sst', \sexp_2')^\result}$ \label{1m2}
      \end{pfenum*}
      \step{3}{$\Vint\sint{\sst_1}$ [by 2.2 and monotonicity of interpretations]}
      \step{4}{$\action{\sint(\sst)}{\act_1}{\sintp(\sexp_1)}{(\sintp(\sst_1), \sintp(\sexp_1'))^\osucc$} [by [SA-BC] applied to 2.1 and 3]}
      \step{6}{$\action{\sint(\sst_1)}{\act_2}{\sintp(\sexp_2)}{(\sintp(\sst'), \sintp(\sexp_2'))^\result$} [by [SA-BC] applied to 4 and H2]}
      \step{7}{$\action{\sintp(\sst)}{(\lstacomp{\act_1}{\act_2})}{[\sintp(\sexp_1), \sintp(\sexp_2)]}{(\sintp(\sst'), [\sintp(\sexp_1'), \sintp(\sexp_2')])^\result}$ [by 4 and 5]}
          \end{proofpf} 
\end{proof}

We can now move on to \gil execution:

\begin{lemma}[Backward Completeness: \gil One-Step]\label{lem:gilbcone}
Given a concrete and a symbolic state model, $\gstate(\actions, \cmemory(\actions)) \defeq \langle\sset{\gstate}, \kwT{ea} \rangle$, and  $\hat\gstate(\actions, \smemory(\actions)) \defeq \langle\sset{\hat\gstate}, \kwT{ea} \rangle$, respectively,
if the memory actions satisfy Requirements~\ref{des:memact}, then the corresponding \gil one-step symbolic semantics satisfies the following:
\begin{mathpar}
\small
          \inferrule[\gil One-Step - Backward Completeness (\gil-OS-BC)]{}{
          {\begin{array}{l}
\prog \vdash \hat\cf \semarrow_c \hat\cf' \gand \hat\cf' \in \domain(\sintp) 
        \implies \prog \vdash \sint({\hat\cf}) \semarrow_c  \sint({\hat\cf'}) 
        \end{array}}
        }
  \end{mathpar}
\end{lemma}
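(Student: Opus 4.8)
The plan is to proceed by case analysis on the \gil one-step rule that fires, equivalently on the command $c = \cmd(\prog, \hat\cs, i)$. The first observation is that command lookup is insensitive to interpretation: since $\sintp$ acts on a call stack only by interpreting the stored stores, leaving procedure identifiers and return indices untouched, we have $\cmd(\prog, \sintp(\hat\cs), i) = \cmd(\prog, \hat\cs, i)$. Hence exactly the same rule is applicable on the concrete side, and it suffices to reconstruct its premises under $\sintp$. Note also that for the one-step transition to fire the input outcome is the (elided) $\cont$, which $\sintp$ preserves.

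Each rule in Figure~\ref{fig:gilrules} is driven by the execution of a single, possibly composed, state action, together with purely syntactic manipulation of the call stack, index, and outcome. By the proof of Lemma~\ref{des:stateact}, every basic state action satisfies \prooflab{SA-BC}, and by Lemma~\ref{lem:sactsc} so does every composition ($\acomp{\cdot}{\cdot}$ and $\lstacomp{\cdot}{\cdot}$) appearing in the rules. The core move in each case is therefore: from the symbolic premise $\action{\sst}{\act}{\sexp}{(\sst', \sexp')^\result}$ and the hypothesis $\hat\cf' \in \domain(\sintp)$ — which gives $\sst' \in \domain(\sintp)$ — apply \prooflab{SA-BC} to obtain the matching concrete execution $\action{\sintp(\sst)}{\act}{\sintp(\sexp)}{(\sintp(\sst'), \sintp(\sexp'))^\result}$. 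This is precisely the action premise required to fire the same rule concretely, yielding a transition into $\sintp(\hat\cf')$.

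It then remains to check that the remaining configuration components line up under $\sintp$. The index is a literal, unchanged on both sides. Call-stack push/pop (in \textsc{Call}, \textsc{Return}, \textsc{Top Return}) is syntactic and commutes with $\sintp$ because $\sintp$ touches only the store component of each frame: e.g.\ $\sintp(\tup{\f, \x, \sto', i{+}1} \cons \hat\cs) = \tup{\f, \x, \sintp(\sto'), i{+}1} \cons \sintp(\hat\cs)$. Outcomes are interpreted componentwise, so $\sintp(\ofail{\gv'}) = \ofail{\sintp(\gv')}$ and $\sintp(\onormal{\gv}) = \onormal{\sintp(\gv)}$, matching the outcomes produced by the \textsc{Memory Action - Error}, \textsc{Fail}, and \textsc{Top Return} rules. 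Throughout, the domain side-conditions required by \prooflab{SA-BC} (and, for composed actions, those for the intermediate states inside the composition) are discharged by the \emph{Monotonicity of Interpretations} lemma, which propagates membership in $\domain(\sintp)$ backwards from the final state to the initial and intermediate states and values; the same lemma, applied to the one-step transition, gives $\hat\cf \in \domain(\sintp)$, so that $\sintp(\hat\cf)$ on the right-hand side is well-defined.

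The main obstacle is the \textsc{Call} rule. Here the driving action is the nested list composition $\lstacomp{\kwT{eval}_{\e}}{(\lstacomp{\kwT{eval}_{\e'}}{\kwT{getStore}})}$, followed by a separate $\kwT{setStore}$, and the rule both pushes a new frame built from the evaluated callee, argument, and captured store, and resets the index to $0$. The care needed is to thread the monotonicity argument through the two-stage action (composition, then $\kwT{setStore}$) so that every intermediate symbolic state sits in $\domain(\sintp)$, and then to verify that the interpreted pushed frame and the interpreted result of $\kwT{setStore}$ coincide with what the concrete \textsc{Call} rule produces from $\sintp(\sst)$ — in particular that $\sintp$ commutes with the pairing of the evaluated function identifier, argument value, and captured store. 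Once this bookkeeping is in place, the case closes exactly as the others.
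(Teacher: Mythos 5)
Your proposal is correct and takes essentially the same route as the paper's proof: case analysis on the fired rule, with the single-action and single-composition cases discharged immediately by \prooflab{SA-BC} (Lemma~\ref{des:stateact}) together with Lemma~\ref{lem:sactsc}, and the \textsc{Call} (and analogously \textsc{Return}) case handled by propagating $\domain(\sintp)$ membership backwards from $\sst'$ through the $\kwT{setStore}$ stage to the intermediate state via monotonicity of interpretations before applying \prooflab{SA-BC} to each stage. Your bookkeeping observations --- that command lookup, frame pushing, indices, and outcomes all commute with $\sintp$ --- correspond exactly to the paper's explicit steps in its \textsc{Call} case.
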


\begin{proof}
The cases which involve only one state action execution or only one action composition follow trivially from Lemma~\ref{des:stateact} and Lemma~\ref{lem:sactsc}. The only two remaining cases are the function call rule and the return rule. We prove the function call rule; the return rule is proven analogously:
\case{\prooflab{Call}}
\begin{proofpf}
  \pflongnumbers
  \step{s1} {Hypotheses:
    \begin{pfenum}
      \item[(H1)] $\prog \vdash \hat\cf \semarrow_c \hat{\cf'}$
      \item[(H2)] $\Vint\sint{\hat\cf'}$
    \end{pfenum}}
  \step{s2} {[by H1, \prooflab{Call} case]
    \begin{pfenum}
  	  \item $\hat\cf = \tup{\sst, \hat\cs, i}$
  	  \item $\cmd(\prog, \hat\cs, i) = \x := \e(\e')$
	  \item $\action{\sst}{(\lstacomp{\kwT{eval}_{\e}}{(\lstacomp{\kwT{eval}_{\e'}}{\kwT{getStore}})})}{-}{(\sst_1, [\f, [\sexp, \ssto']])}$
	  \item $\action{\sst_1}{\kwT{setStore}}{\btup{[\f.\kwT{arg}, \sexp]}}{(\sst', -)}$
	  \item $\hat\cf' = \tup{\sst', \tup{\f, \x, \ssto', i{+}1} \cons \hat\cs, 0}^\cont$
    \end{pfenum} }
%
%
  \step{s4b} {$\Vint\sint{\sst'}$ [by H2 and definition of interpretation for configurations]}
   \step{4b}{$\Vint\sint{{\sst_1}}$ [Applying monotonicity of interpretations to 2.4 + 3]}
  \step{s5} {$\action{\seval{\sst}{\lenv}}{(\lstacomp{\kwT{eval}_{\e}}{(\lstacomp{\kwT{eval}_{\e'}}{\kwT{getStore}})})}{-}{(\sint({\sst_1}), [\f, [\sint(\sexp), \sint(\ssto')]])}$ [by \prooflab{SA-BC} +  Lemma~\ref{lem:sactsc} + \stepref{s2}.3]}
  \step{xx} {$\action{\sint({\sst_1})}{\kwT{setStore}}{\btup{[\f.\kwT{arg}, \sint(\sexp)]}}{(\sint({{\sst'}{}}), -)}$
   [by \prooflab{SA-BC}, given \stepref{s2}.4 + 4]\footnote{We are overloading the notation for an actual store and its serialised representation.}}
  \step{s6} {$\cmd(\prog, \sint(\hat\cs), i) = \x := \e(\e')$ [by \stepref{s2}.2 + definition of $\sint(\hat\cs)$]}
  \step{s7} {$\semtrans{\sint(\sst), \sint(\hat\cs), i}{\sint({\sst'}), \tup{\f, \x, \sint(\ssto'), i{+}1} \cons \sint(\hat\cs), 0}{}{\prog}{\cont}$ [by \stepref{s6} + \stepref{s5} + 6]}  
  \step{s8} {$\prog \vdash \sint({\hat\cf}) \semarrow  \sint({\hat\cf'})$ (G1) [by \stepref{s7} + \stepref{s2}.1 + \stepref{s2}.5 + definition of interpretation for configurations]}
\end{proofpf}

\end{proof}

This property can be easily generalised to an $n$-step execution:

\begin{lemma}[Backward Completeness: \gil Multi-Step]\label{lem:gilbcmult}
Given a concrete and a symbolic state model, $\gstate(\actions, \cmemory(\actions)) \defeq \langle\sset{\gstate}, \kwT{ea} \rangle$, and  $\hat\gstate(\actions, \smemory(\actions)) \defeq \langle\sset{\hat\gstate}, \kwT{ea} \rangle$, respectively,
if the memory actions satisfy Requirements~\ref{des:memact}, then the corresponding \gil multi-step symbolic semantics satisfies the following:
\begin{mathpar}
\small
          \inferrule[\gil One-Step - Backward Completeness (\gil-OS-BC)]{}{
          {\begin{array}{l}
\prog \vdash \hat\cf \semarrow^n_C \hat\cf' \gand \hat\cf' \in \domain(\sintp) 
        \implies \prog \vdash \sint({\hat\cf}) \semarrow^n_C  \sint({\hat\cf'}) 
        \end{array}}
        }
  \end{mathpar}
\end{lemma}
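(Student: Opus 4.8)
The plan is to proceed by induction on the number of steps $n$, reducing each single step via the one-step result (Lemma~\ref{lem:gilbcone}, \gil-OS-BC) and using the Monotonicity of Interpretations lemma to transport the domain hypothesis backwards along the trace. For the base case $n = 0$, the only applicable rule is \textsc{Single: Multi-Step Reflexive}, which forces $\hat\cf = \hat\cf'$ and an empty command sequence $C$; since then $\sint(\hat\cf) = \sint(\hat\cf')$, the same reflexive rule immediately yields $\prog \vdash \sint(\hat\cf) \semarrow^0_C \sint(\hat\cf')$ and discharges the goal.

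For the inductive step $n = m + 1$, I would invert the derivation using \textsc{Single: Multi-Step Transitive}: there is an intermediate configuration $\hat\cf''$ and a command $c$ with $\prog \vdash \hat\cf \semarrow_c \hat\cf''$ and $\prog \vdash \hat\cf'' \semarrow^m_{C'} \hat\cf'$, where $C$ decomposes into the head command $c$ followed by the tail $C'$. The essential move is to recover the domain hypothesis for the intermediate configuration: since $\prog \vdash \hat\cf'' \semarrow^m \hat\cf'$ and $\hat\cf' \in \domain(\sintp)$, the Monotonicity of Interpretations lemma gives $\hat\cf'' \in \domain(\sintp)$. With this in hand, I apply Lemma~\ref{lem:gilbcone} to the single step $\prog \vdash \hat\cf \semarrow_c \hat\cf''$ (whose side condition $\hat\cf'' \in \domain(\sintp)$ is now met) to obtain $\prog \vdash \sint(\hat\cf) \semarrow_c \sint(\hat\cf'')$, and I apply the induction hypothesis to $\prog \vdash \hat\cf'' \semarrow^m_{C'} \hat\cf'$ (using $\hat\cf' \in \domain(\sintp)$) to obtain $\prog \vdash \sint(\hat\cf'') \semarrow^m_{C'} \sint(\hat\cf')$. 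Recombining the two with \textsc{Single: Multi-Step Transitive} gives $\prog \vdash \sint(\hat\cf) \semarrow^{m+1}_C \sint(\hat\cf')$, the head command $c$ and tail $C'$ reassembling into $C$.

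The only real obstacle is the direction in which the domain condition flows. Because backward completeness is an under-approximation property, the hypothesis $\hat\cf' \in \domain(\sintp)$ is supplied only at the \emph{end} of the trace, whereas the one-step lemma must be applied at the \emph{front}; the induction is unusable until the domain information is first propagated backwards to $\hat\cf''$. This is precisely what the Monotonicity of Interpretations lemma delivers, and it is the reason that lemma is stated in the backwards-implication form $\cf' \in \domain(\sintp) \implies \cf \in \domain(\sintp)$ rather than forwards. Everything else is routine rule inversion and recombination, with the command labels $c$ and $C'$ tracked so that the exposed command sequence $C$ is faithfully preserved by the reconstructed concrete execution.
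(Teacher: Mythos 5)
Your proof is correct and follows essentially the same route as the paper's: induction on $n$, inversion of \textsc{Single: Multi-Step Transitive} into a head step $\hat\cf \semarrow_c \hat\cf''$ and tail $\hat\cf'' \semarrow^m_{C'} \hat\cf'$, use of Monotonicity of Interpretations on the tail to recover $\hat\cf'' \in \domain(\sintp)$, then Lemma~\ref{lem:gilbcone} on the head, the induction hypothesis on the tail, and recombination with $C = c :: C'$. Your explicit treatment of the base case and your remark on why the domain condition must be propagated backwards before the one-step lemma can fire are both consistent with (and slightly more detailed than) the paper's proof.
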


\begin{proof} By induction on $n$. The case when $n = 0$ is trivial; assume $n = k+1$.
\begin{proofpf}
  \pflongnumbers
    \step{2ass}{
      \begin{pfenum}
        \item[(H1)] $\prog \vdash \hat\cf \semarrow^{k+1} \hat\cf'$
        \item[(H2)] $\Vint\sint{\hat\cf'}$
      \end{pfenum}
    }
		\step{2a}{$\exists \hat\cf'', c, C'.$}
		\begin{pfenum*}
			\item $\prog \vdash \hat\cf \semarrow_c \hat\cf''$ [From H2]  \label{2a1}
			\item $\prog \vdash \hat\cf'' \semarrow^k_{C'} \hat\cf'$ \label{2a2} [From H2]
			\item $C = c :: C'$ \label{2a3}
			\item $\Vint\sint{\hat\cf''}$ [From monotonicity of interpretations] \label{2a5} 
		\end{pfenum*}
		\step{2b}{$\prog \vdash \sint({\hat\cf}) \semarrow_c  \sint(\hat\cf'')$ [by Lemma~\ref{lem:gilbcone}, \stepref{2a}.\ref{2a1}, and \stepref{2a}.\ref{2a5}]}
		\step{2c}{$\prog \vdash \sint({\hat\cf''}) \semarrow^k_{C'}  \sint(\hat\cf')$ [by IH,  \stepref{2a}.\ref{2a2}, and H2]}
		\step{2d}{$\prog \vdash \sint({\hat\cf}) \semarrow^{k+1}_{C} \sint(\hat\cf')$ (G1) [From \stepref{2b}, \stepref{2c}, and \stepref{2a}.\ref{2a3}]} 
	\end{proofpf}
\end{proof}

We can now prove that Theorem~\ref{des:bc:gil} holds. Assume \hyp{1}~$\prog \vdash  \scf \semarrow^n_C \scf'$, and \hyp{2}~$\cf' \in \modls(\hat\cf')$. By definition of models, H2 means that $\cf' =\sintp(\hat\cf')$, for some \hyp{3}~$\hat\cf' \in \domain(\sintp)$. Then, applying Lemma~\ref{lem:gilbcmult} to \hyp{1}, and \hyp{3}, we obtain that $\prog \vdash \sint({\hat\cf}) \semarrow^n_C  \sint({\hat\cf'})$. Given the definition of models, we have that $\sint({\hat\cf}) \in \modls(\hat\cf)$, that is, that $\sint({\hat\cf})$ is the witness in the conclusion of Theorem~\ref{des:bc:gil}. Finally, from Theorem~\ref{des:bc:gil}, we can derive the following, more general, property, using the definition of the \gil collecting semantics:
\begin{mathpar}
           \inferrule[GIL - Collecting Backward Completeness (GIL-CBC)]{}{
        {\begin{array}{l}
\prog \vdash \ssscf \rightarrow^n \ssscf' \gand 
       \cf' \in \modls(\ssscf') 
\implies \exists \cf, m.~\cf \in \modls({\ssscf}) \gand m \leq n \gand \prog \vdash  \cf \semarrow^m {\cf'}
                \end{array}}}
\end{mathpar}


\subsection{Correctness: Forward Soundness}
\label{sec:cor:bc}

In contrast to backward completeness, which requires an existence of a single concrete trace corresponding to a given single symbolic trace, forward soundness states that a given set of symbolic traces covers the set of all corresponding concrete traces.
We first formulate this property for memory actions and then prove that it holds for state actions:

\begin{desiderata}[Forward Soundness: Memory Actions] 
\label{req:mem:fs}
Given a concrete and a symbolic memory model, $\cmemory(A) \defeq \tup{\sset{\cmemory}, \sinv, \cea}$ and $\smemory(A) \defeq \tup{\sset{\smemory}, \sinv, \sea}$, the following soundness property must hold for all
memory actions $\act \in \mactions$:
\begin{mathpar}
\inferrule[Memory Actions: Forward Soundness (MA-FS)]{}{
          {\begin{array}{l}
\fcolactionx \smem \act \sexp {\pc} {\{ (\smem_i, \sexp_i)^{\result_i}_{\pc_i} |_{i \in I}\}}  \gand  \sintp(\pc) = \true \gand 
\action{\sint(\smem)}{\act}{\sintp(\sexp)}{(\cmem', \vl')^\result} \\ \qquad \implies \exists i.~(\cmem', \vl')^\result_\true = \sintp((\smem_i, \sexp_i)^{\result_i}_{\pc_i})
 \end{array}}}
\end{mathpar}
\end{desiderata}

As an interlude, we give the following proposition, which is straightforwardly proven given the coverage property of memory actions:
\begin{propos}[Interpretations and Coverage] 
\label{prop:intermezzo}
$$
\fcolactionx \smem \act \sexp {\pc} {\{ (\smem_i, \sexp_i)^{\result_i}_{\pc_i} |_{i \in I}\}}  \gand  \sintp(\pc) = \true \implies \exists i.~\sintp(\pc_i) = \true
$$ 
\end{propos}

We now proceed to prove forward soundness of state actions.

\begin{lemma}[Forward Soundness: State Actions]
\label{req:st:fs}
Given a concrete and a symbolic state model, $\gstate(\actions, \cmemory(\actions)) \defeq \langle\sset{\gstate}, \kwT{ea} \rangle$, and  $\hat\gstate(\actions, \smemory(\actions)) \defeq \langle\sset{\hat\gstate}, \kwT{ea} \rangle$, respectively,
if the memory actions satisfy Requirements~\ref{req:mem:fs}, then the following property holds for all state actions $\act \in \actions \uplus \sactions$:
\begin{mathpar}
\inferrule[State Actions: Forward Soundness: Interpretations (SA-FS-I)]{}{
  {\begin{array}{l}
\fcolactionx \sst \act \sexp {} \ssst \gand \action{\sint(\sst)}{\act}{\sintp(\sexp)}{(\st', \vl')^\result} \\ \qquad \implies \exists \varepsilon' \geq \varepsilon.~(\st', \vl')^\result \in \sintpe{\varepsilon'}(\ssst)
 \end{array}}}
\end{mathpar}
\end{lemma}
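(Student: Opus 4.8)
The plan is to proceed by case analysis on the state action $\act \in \actions \uplus \sactions$, reusing the same case split as for \prooflab{SA-BC} in Lemma~\ref{des:stateact}. In each case I would unfold both the symbolic collecting action $\fcolactionx \sst \act \sexp {} \ssst$ and the concrete action $\action{\sint(\sst)}{\act}{\sintp(\sexp)}{(\st', \vl')^\result}$ using Definition~\ref{def:esm}, writing $\sst = \tup{\smem, \ssto, \sarec, \pc}$ and recalling that $\sint(\sst)$ being defined forces $\sint(\pc) = \true$, where $\sint = \sintpe{\varepsilon}$. The guiding observation is that for every action \emph{except} $\kwT{iSym}$ the witness environment can be taken to be $\varepsilon$ itself, so that $\varepsilon' = \varepsilon \geq \varepsilon$; only interpreted-symbol allocation genuinely requires a strict extension of the environment.

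For the store-and-expression actions $\kwT{setVar}_{\x}$, $\kwT{setStore}$, $\kwT{getStore}$ and $\kwT{eval}_{\e}$, the symbolic collecting action is a singleton and the concrete action is deterministic, so I would read off the unique symbolic outcome and check that its $\varepsilon$-interpretation is the concrete one, using $\ceval{\e}{\sint(\ssto)} = \sint({\ceval{\e}\ssto})$ as in Lemma~\ref{des:stateact}. For $\kwT{assume}$ with assumed expression $\sexp$, the concrete action succeeds only when $\sint(\sexp) = \true$; since also $\sint(\pc) = \true$ we get $\sint(\pc \land \sexp) = \true$, which simultaneously witnesses $\pc \land \sexp~\sat$ (so the branch $(\tup{\smem, \ssto, \sarec, \pc \land \sexp}, \true)^\osucc$ is present in $\ssst$) and shows that its $\varepsilon$-interpretation is exactly $(\st', \true)^\osucc$. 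For a memory action $\act$, the concrete state action reduces to a concrete memory action on $\sint(\smem)$, so I would apply \prooflab{MA-FS} (Requirements~\ref{req:mem:fs}) directly; it hands back an index $i$ with $(\cmem', \vl')^\result_\true = \sintp((\smem_i, \sexp_i)^{\result_i}_{\pc_i})$, and in particular $\sint(\pc_i) = \true$, so the symbolic outcome state $\tup{\smem_i, \ssto, \sarec, \pc_i}$ has a defined interpretation equal to $\st'$, again with $\varepsilon' = \varepsilon$.

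The two allocation cases are where the argument becomes specific. For $\kwT{uSym}$ both the symbolic and concrete allocators draw from $\clocs$; since locations are uninterpreted symbols, interpretation is the identity on them and $\sint(\sarec)_o(\clocs) = \sarec_o(\clocs)$. Hence the concrete fresh locations $\lst{\loc'}$ are also fresh for the symbolic allocation record, so the non-deterministic symbolic allocation admits an outcome with $\lst\loc = \lst{\loc'}$; picking that branch gives a matching interpretation under $\varepsilon' = \varepsilon$. For $\kwT{iSym}$, the symbolic model allocates fresh symbolic variables $\lst\lx$ from $\lxs$ whereas the concrete model allocates arbitrary values $\lst{v'}$ from $\vals$ (any value, any number of times, by the $\mathsf{fresh}_{\vals}$ rule). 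Here I would choose the symbolic branch allocating some fresh $\lst\lx$ and define the extension $\varepsilon' \defeq \varepsilon[\lx_1 \mapsto v'_1, \dots, \lx_n \mapsto v'_n]$, so that $\sintpe{\varepsilon'}(\lst\lx) = \lst{v'} = \vl'$, yielding $\varepsilon' \geq \varepsilon$ and the required membership.

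The main obstacle is precisely this $\kwT{iSym}$ extension, and it is a soundness-of-extension argument rather than a deep one. I must argue that the freshly allocated $\lst\lx$ lie outside $\domain(\varepsilon)$ and do not occur in $\smem$, $\ssto$, or the pre-record $\sarec$, which is exactly what the $\mathsf{fresh}_{\lxs}$ clause of the allocator (\S\ref{sub:alloc}) together with the symbol-allocation discipline of \S\ref{sec:cor:tbf} guarantee; this ensures that $\sintpe{\varepsilon'}$ agrees with $\sint = \sintpe{\varepsilon}$ on the unchanged components, so that only the returned value vector and the updated allocation record are affected by the extension. The one remaining bookkeeping point is that the $\varepsilon'$-interpretation of the updated symbolic record $\sarec'$ coincides with the updated concrete record, which follows from the standard definition of allocator interpretation combined with the alignment of freshly allocated symbols just established.
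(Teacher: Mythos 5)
Your proof is correct and matches the paper's almost step for step: the same case analysis with witness $\varepsilon' = \varepsilon$ for all actions except $\kwT{iSym}$, the same direct appeal to \prooflab{MA-FS} for memory actions, the same selection of a symbolic allocation branch matching the concrete locations for $\kwT{uSym}$ (justified by interpretations being the identity on locations), and the same extension $\varepsilon' = \varepsilon[\lst\lx \mapsto \lst\vl]$ for $\kwT{iSym}$. Your explicit freshness bookkeeping for the $\kwT{iSym}$ case --- that the newly allocated symbolic variables do not occur in the memory, store, or prior record, so that $\sintpe{\varepsilon'}$ agrees with $\sintpe{\varepsilon}$ on the unchanged components --- is only implicit in the paper's ``because $\varepsilon'$ extends $\varepsilon$'' step, but it is exactly the intended justification.
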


\begin{proof}
The $\kwT{setVar}$, $\kwT{setStore}$, $\kwT{getStore}$, $\kwT{eval}_{\e}$, and $\kwT{assume}$, cases follow trivially from the definitions, as they do not introduce new symbolic variables and are deterministic. We address the non-trivial cases of symbol allocation and memory action. 
We have the hypotheses: 
	\hyp{1}~$\fcolactionx \sst \act \sexp {} \ssst$; and 
	\hyp{2}~$\action{\sint(\sst)}{\act}{\sintp(\sexp)}{(\st', \vl')^\result}$.
	
\medskip
\case{\prooflab{uSym}}
\begin{proofpf}
  \pflongnumbers
    \step{s3} {[by H2 and the definition of concrete $\kwT{uSym}$]
    \begin{pfenum}
  	  \item $\sintp(\sst) = \tup{\sintp(\smem), \sintp(\ssto), \sintp(\sarec), \true}$
	  \item $\sintp(\hat e) = k$
	  \item $\alloc{\sintp(\sarec)}{k}{\arec', \lstls}{\clocs}$ 
  	  \item $\st' = \tup{\sintp(\smem), \sintp(\ssto), \sarec', \true}$
	  \item $\vl' = \lstls$ 
	  \item $\result = \osucc$
    \end{pfenum} }
  \step{s2} {[by H1 and the definition of symbolic $\kwT{uSym}$]
    \begin{pfenum}
  	  \item $\hat\st = \tup{\smem, \ssto, \sarec, \pc}$
	  \item $\hat e = k$, for some $k \in \mathbb{N}$
	  \item $\alloc{\sarec}{k}{\sarec', \lstls}{\clocs}$ 
	  \item $\sst' = \tup{\smem, \ssto, \sarec', \pc} $
	  \item $\{\sst', \lstls\}^\osucc \in \ssst$
    \end{pfenum} }
  \step{s4} {Take $\varepsilon' = \varepsilon$. Then:
      \begin{pfenum}
      \item $\sintpe{\varepsilon'}(\smem) = \sintp(\smem)$, $\sintpe{\varepsilon'}(\ssto) = \sintp(\ssto)$, and $\sintpe{\varepsilon'}(\pc) = \true$ [because $\varepsilon' = \varepsilon$]
      \item $\sintpe{\varepsilon'}(\sarec') = \arec'$ [by 2.3, 1.3, the definition of $\varepsilon'$, and the definition of allocation]
      \item $\st' = \sintpe{\varepsilon'}(\tup{\smem, \ssto, \sarec', \pc})$ [by 1.4, 3.1, and 3.2]
  	  \item $\vl' = \sintpe{\varepsilon'}(\lstls)$ [by 1.5, and definition of $\varepsilon'$]
	  \item (G1) [by 2.4, 3.3, 3.4, and 1.6]
    \end{pfenum}}
\end{proofpf}
\case{\prooflab{iSym}}
\begin{proofpf}
  \pflongnumbers
  \step{s2} {[by H1 and the definition of symbolic $\kwT{iSym}$]
    \begin{pfenum}
  	  \item $\hat\st = \tup{\smem, \ssto, \sarec, \pc}$
	  \item $\hat e = k$, for some $k \in \mathbb{N}$
	  \item $\alloc{\sarec}{k}{\sarec', [\lx_{1} \ldots, \lx_{k}]}{\lxs}$ 
	  \item $\sst' = \tup{\smem, \ssto, \sarec', \pc}$
	  \item $\{\sst', [\lx_{1} \ldots, \lx_{k}]\}^\osucc \in \ssst$
    \end{pfenum} }
  \step{s3} {[by H2 and the definition of concrete $\kwT{iSym}$]
    \begin{pfenum}
  	  \item $\sintp(\sst) = \tup{\sintp(\smem), \sintp(\ssto), \sintp(\sarec), \true}$
	  \item $\sintp(\hat e) = k$
	  \item $\alloc{\sintp(\sarec)}{k}{\arec', [\vl_1 \ldots, \vl_m]}{\vals}$ 
  	  \item $\st' = \tup{\sintp(\smem), \sintp(\ssto), \sarec', \true}$
	  \item $\vl' = [\vl_1 \ldots, \vl_m]$ 
	  \item $\result = \osucc$
    \end{pfenum} }
    
  \step{s4} {Take $\varepsilon' = \varepsilon[\lx_{i} \mapsto \vl_i|_{i=1}^n]$. Then:
      \begin{pfenum}
      \item $\sintpe{\varepsilon'}(\smem) = \sintp(\smem)$, $\sintpe{\varepsilon'}(\ssto) = \sintp(\ssto)$, and $\sintpe{\varepsilon'}(\pc) = \true$ [because $\varepsilon'$ extends $\varepsilon$]
      \item $\sintpe{\varepsilon'}(\sarec') = \arec'$ [by 1.3, 2.3, the definition of $\varepsilon'$, and the definition of allocation]
      \item $\st' = \sintpe{\varepsilon'}(\tup{\smem, \ssto, \sarec', \pc})$ [by 2.4, 3.1, and 3.2]
  	  \item $\vl' = \sintpe{\varepsilon'}([\lx_{1} \ldots, \lx_{k}])$ [by 2.5, and definition of $\varepsilon'$]
	  \item (G1) [by 1.4, 3.3, 3.4, and 2.6]
    \end{pfenum}}
\end{proofpf}

\case{\prooflab{Memory Action}}
\begin{proofpf}
  \pflongnumbers
  \step{s2} {[by H1 and the definition of symbolic memory action]
    \begin{pfenum}
  	  \item $\hat\st = \tup{\smem, \ssto, \sarec, \pc}$
	  \item $\fcolactionx \smem \act \sexp {} {\{ (\smem_i, \sexp_i)^{\result_i}_{\pc_i} |_{i \in I}\}} $
	  \item $\ssst = \{ (\tup{\smem_i, \ssto, \sarec, \pc_i}, \sexp_i)^{\result_i}|_{i \in I} \}  $
    \end{pfenum} }
    \step{s3} {[by H2 and the definition of concrete memory action]
    \begin{pfenum}
  	  \item $\sintp(\sst) = \tup{\sintp(\smem), \sintp(\ssto), \sintp(\sarec), \true}$ 
	  \item $\action{\sint(\smem)}{\act}{\sintp(\sexp)}{(\cmem', \vl')^\result}$
  	  \item $\st' = \tup{\cmem', \sintp(\ssto), \sintp(\sarec'), \true}$
    \end{pfenum} }

  \step{sx2} {$\exists i.~(\cmem', \vl')^\result_\true = \sintp((\smem_i, \sexp_i)^{\result_i}_{\pc_i})$ [by MA-FS applied to  1.2, 2.1, and 2.2]}
  
      \step{sx4} {$\exists \varepsilon' \geq \varepsilon.~(\st', \vl')^\result \in \sintpe{\varepsilon'}(\ssst)$ (G1) [by 2.3, 3, and 1.3, taking $\varepsilon' = \varepsilon$]}
\end{proofpf}
\end{proof}

The next part to prove is that forward soundness is preserved by state action composition:

\begin{lemma}[Action Composition]
\label{lem:sactfs}
If two state actions $\act_1, \act_2 \in \actions$ 
satisfy \prooflab{SA-FS}, then their compositions, $\acomp{\act_1}{\act_2}$ and $\lstacomp{\act_1}{\act_2}$,
 also satisfy \prooflab{SA-FS}.  
\end{lemma}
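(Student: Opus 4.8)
The plan is to mirror the structure of the backward-completeness action-composition argument (Lemma~\ref{lem:sactsc}), but now threading the \emph{interpretation extension} $\varepsilon' \geq \varepsilon$ through the two invocations of \prooflab{SA-FS}, rather than propagating interpretation-definedness. I would treat $\acomp{\act_1}{\act_2}$ in full and then observe that $\lstacomp{\act_1}{\act_2}$ is a near-verbatim variant in which the threaded value $\gv''$ is replaced by the independent second input and the returned value is paired as $[\gv_1', \gv_2']$.

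The starting point is to read off the collecting symbolic outcome set of the composition. By Definition~\ref{def:act:comp}, unfolded over all branches, $\fcolactionx \sst {(\acomp{\act_1}{\act_2})} \sexp {} \ssst$ decomposes $\ssst$ as the union of: (i) all \emph{non-successful} outcomes of $\act_1$ on $(\sst, \sexp)$; and (ii) for each \emph{successful} symbolic outcome $(\sst_j, \sexp_j)^\osucc$ of $\act_1$, all outcomes of $\act_2$ run on $(\sst_j, \sexp_j)$. I then case-split on the concrete execution $\action{\sintp(\sst)}{(\acomp{\act_1}{\act_2})}{\sintp(\sexp)}{(\st', \vl')^\result}$ using the same definition read concretely.

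In the first case, $\act_1$ fails concretely with $\result \neq \osucc$, producing $(\st', \vl')^\result$ directly. Applying \prooflab{SA-FS} for $\act_1$ to this concrete execution and to the collecting execution of $\act_1$ yields some $\varepsilon' \geq \varepsilon$ and a non-successful symbolic outcome of $\act_1$ whose $\varepsilon'$-interpretation is $(\st', \vl')^\result$; by clause (i) that outcome already lies in $\ssst$, discharging the goal. In the second case, $\act_1$ succeeds concretely, producing an intermediate $(\st'', \gv'')^\osucc$ on which $\act_2$ is then run concretely. I would first apply \prooflab{SA-FS} for $\act_1$ to obtain $\varepsilon_1 \geq \varepsilon$ and a \emph{successful} symbolic outcome $(\sst_j, \sexp_j)^\osucc$ of $\act_1$ with $\st'' = \sintpe{\varepsilon_1}(\sst_j)$ and $\gv'' = \sintpe{\varepsilon_1}(\sexp_j)$. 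The concrete execution of $\act_2$ is then on $\sintpe{\varepsilon_1}(\sst_j)$ with value $\sintpe{\varepsilon_1}(\sexp_j)$, so a second application of \prooflab{SA-FS}, now for $\act_2$ with base environment $\varepsilon_1$, yields $\varepsilon_2 \geq \varepsilon_1$ and a symbolic outcome of $\act_2$ on $(\sst_j, \sexp_j)$ whose $\varepsilon_2$-interpretation is $(\st', \vl')^\result$. By clause (ii) that outcome belongs to $\ssst$, and since $\varepsilon_2 \geq \varepsilon_1 \geq \varepsilon$ I conclude with witness $\varepsilon' = \varepsilon_2$.

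The main obstacle is the correct threading of the growing environment, which requires care on two points: (a) \prooflab{SA-FS} must be applied with $\varepsilon_1$, rather than the original $\varepsilon$, as the base interpretation for the second action, which is legitimate because the statement is schematic in the ambient symbolic environment; and (b) extending $\varepsilon_1$ to $\varepsilon_2$ must not disturb the already-fixed interpretation of the first-stage outcome, i.e.\ $\sintpe{\varepsilon_2}(\sst_j) = \sintpe{\varepsilon_1}(\sst_j) = \st''$. The latter follows from monotonicity of interpretations together with the symbol-freshness discipline of Requirements~\ref{req:stasymvar}, which guarantees that $\act_2$ introduces only symbols not already constrained in $\sst_j$, so that $\varepsilon_2$ agrees with $\varepsilon_1$ on every symbol appearing in the intermediate outcome. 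The $\lstacomp{\act_1}{\act_2}$ case proceeds identically once $\gv''$ is replaced by the second input $\gv_2$ and the failing branch sets $\gv_2' = \bot$.
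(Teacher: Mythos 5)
Your proposal is correct and follows essentially the same route as the paper's proof: decompose the composition, apply \prooflab{SA-FS} to $\act_1$ to obtain $\varepsilon_1 \geq \varepsilon$ and a matching successful symbolic branch, then apply \prooflab{SA-FS} to $\act_2$ with base environment $\varepsilon_1$ to obtain $\varepsilon_2 \geq \varepsilon_1$, concluding by membership of the second-stage outcome in the union. Your explicit care about $\varepsilon_2$ agreeing with $\varepsilon_1$ on already-interpreted symbols is exactly the point the paper records (for the $\lstacomp{\act_1}{\act_2}$ case) via $\sintpe{\varepsilon''}(\sexp_2) = \sintp(\sexp_2)$ and $\sintpe{\varepsilon'}(\sexp_1) = \sintp(\sexp_1)$, so the two arguments match in substance.
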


\begin{proof}
For both compositions, the cases in which $\act_1$ results in an error trivially hold, as they amount to the appropriate property of $\act_1$; we therefore assume, without loss of generality, that $\act_1$ always succeeds.
We first prove the property for $\acomp{\act_1}{\act_2}$. 

\begin{proofpf}
     \pflongnumbers
 \step{1}{
      \assume{
        \begin{pfenum}
          \item (H1) $\fcolactionx \sst {(\acomp{\act_1}{\act_2})} \sexp {} \ssst
          $
          \item (H2) $\action{\sint(\sst)}{(\acomp{\act_1}{\act_2})}{\sintp(\sexp)}{(\st', \vl')^\result}$
        \end{pfenum}
      }
    }
      \step{2}{[From Action Composition on H1 and the assumption that $\alpha_1$ always succeeds]}
        \begin{pfenum*}
           \item $\fcolactionx \sst {\act_1} \sexp {} {\left\{ (\sst_i, \sexp_i)^{\osucc}|_{i \in I}  \right\}} $  \label{1m1}
           \item $(\uaction{\sst_i}{\act_2}{\sexp_i}{\ssst_i})|_{i \in I}$ \label{1m2}
           \item $\ssst = \bigcup_{i \in I} \ssst_i$
      \end{pfenum*}
      \step{3}{[From Action Composition on H2 and the assumption that $\alpha_1$ always succeeds]}
        \begin{pfenum*}
           \item $\action{\sint(\sst)}{\act_1)}{\sintp(\sexp)}{(\st'', \vl'')^\osucc}$
           \item $\action{\st''}{(\act_2)}{\vl''}{(\st', \vl')^\result}$ 
      \end{pfenum*}

      \step{4}{Obtain $\varepsilon'' \geq \varepsilon$, $k \in I$: $(\st'', \vl'') \in \sintpe{\varepsilon''}(\left\{ (\sst_i, \sexp_i)^{\osucc}|_{i \in I}  \right\}) \gand \st'' = \sintpe{\varepsilon''}(\sst_k) \gand \vl'' = \sintpe{\varepsilon''}(\sexp_k)$ [by [SA-FS] applied to 2.1 and 3.1]}
      \step{6b}{Obtain $\varepsilon' \geq \varepsilon''$: $(\st', \vl')^\result \in \sintpe{\varepsilon'}(\ssst_k)$ [by [SA-FS] applied to 4 and 3.2]}
      \step{7}{(G1) [by 5 and 2.3]}
          \end{proofpf} 
  
  \noindent We conclude by proving the property for $\lstacomp{\act_1}{\act_2}$. 
\begin{proofpf}
     \pflongnumbers
 \step{1}{
      \assume{
        \begin{pfenum}
          \item (H1) ${\sst}.(\darkmath{\lstacomp{\act_1}{\act_2}}) \rightarrow \left\{ (\sst_i, [\sexp_{1_i}, \sexp_{2_i}])^{\result_i}|_{i \in I} \right\}$
          \item (H2) $\action{\sint(\sst)}{(\lstacomp{\act_1}{\act_2})}{[\sintp(\sexp_1), \sintp(\sexp_2)]}{(\st', [\vl_1', \vl_2'])^\result}$
        \end{pfenum}
      }
    }
      \step{2}{[From Action Composition on H1 and the assumption that $\alpha_1$ always succeeds]}
        \begin{pfenum*}
           \item $\sst.\darkmath{\act_1}\!(\sexp) \rightarrow \left\{ (\sst_k, \sexp_{1_k})^{\osucc}|_{k \in K} \right\}$  \label{1m1}
           \item $(\sst_i.\darkmath{\act_2}\!(\sexp_2) \rightarrow \left\{ (\sst_{kj}, \sexp_{2_{kj}})^{\result_{kj}}|_{j \in J_k} \right\})|_{k \in K}$ \label{1m2}
           \item $\left\{ (\sst_i, [\sexp_{1_i}, \sexp_{2_i}])^{\result_i}|_{i \in I} \right\} = \bigcup_{k \in K} \bigcup_{j \in J_k}(\sst_{kj}, [\sexp_{1_k}, \sexp_{2_{kj}}])^{\result_{kj}}$
      \end{pfenum*}
      \step{3}{[From Action Composition on H2 and the assumption that $\alpha_1$ always succeeds]}
        \begin{pfenum*}
           \item $\action{\sint(\sst)}{\act_1)}{\sintp(\sexp_1)}{(\st'', \vl_1')^\osucc}$
           \item $\action{\st''}{(\act_2)}{\sintp(\sexp_2)}{(\st', \vl_2')^\result}$ 
      \end{pfenum*}

      \step{4}{Obtain $\varepsilon'' \geq \varepsilon$, $k \in K$: $(\st'', \vl_1')^\osucc \in \sintpe{\varepsilon''}\!\left(\left\{ (\sst_k, \sexp_{1_k})^{\osucc}|_{k \in K} \right\}\right) \gand \st'' = \sintpe{\varepsilon''}(\sst_k) \gand \vl_1' = \sintpe{\varepsilon''}(\sexp_{1_k})$ [by [SA-FS] applied to 2.1 and 3.1]}
      \step{6a}{$\sintpe{\varepsilon''}(\sexp_2) = \sintp(\sexp_2)$ [since $\varepsilon'' \geq \varepsilon$]}
      \step{6b}{Obtain $\varepsilon' \geq \varepsilon''$, $j \in J_k$: $(\st', \vl_2')^\result \in \left\{ (\sst_{kj}, \sexp_{2_{kj}})^{\result_{kj}}|_{j \in J_k} \right\} \gand \st' = \sintpe{\varepsilon'}(\sst_{kj}) \gand \vl_2' = \sintpe{\varepsilon'}(\sexp_{2_{kj}})$ [by [SA-FS] applied to 2.2 and 3.2, given 4 and 5]}
            \step{6c}{$\sintpe{\varepsilon'}(\sexp_1) = \sintp(\sexp_1)$ [since $\varepsilon' \geq \varepsilon'' \geq \varepsilon$]}
      \step{7}{(G1) [by 6, 7, and 2.3]}
          \end{proofpf} 
\end{proof}

We can now move on to \gil execution:

\begin{lemma}[Forward Soundness: \gil One-Step]\label{lem:gilfsone}
Given a concrete and a symbolic state model, $\gstate(\actions, \cmemory(\actions)) \defeq \langle\sset{\gstate}, \kwT{ea} \rangle$, and  $\hat\gstate(\actions, \smemory(\actions)) \defeq \langle\sset{\hat\gstate}, \kwT{ea} \rangle$, respectively,
if the memory actions satisfy Requirements~\ref{req:mem:fs}, then the corresponding \gil one-step symbolic semantics satisfies the following:
\begin{mathpar}
\inferrule[GIL: Forward Soundness: Interpretations (GIL-FS-CFG-I)]{}{
          {\begin{array}{l}
\prog \vdash \scf \rightarrow_c \ssscf'  \gand  
\prog \vdash  \sintp(\scf)  \semarrow_c {\cf'}  \implies \exists \varepsilon' \geq \varepsilon, \scf' \in \ssscf'.~\cf' = \sintpe{\varepsilon'}(\scf')
 \end{array}}}
\end{mathpar}
\end{lemma}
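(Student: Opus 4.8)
The plan is to proceed by case analysis on the command $c = \cmd(\prog, \hat\cs, i)$ being executed, mirroring the structure of the backward-completeness one-step argument (Lemma~\ref{lem:gilbcone}) but running in the forward direction. Since the command at a given index is uniquely determined and configuration interpretation preserves the procedure identifiers and the index stored in the call stack, the concrete step $\prog \vdash \sintp(\scf) \semarrow_c \cf'$ fires a \gil rule for the \emph{same} command $c$. The task in each case is therefore to produce a symbolic successor $\scf' \in \ssscf'$ and an extended environment $\varepsilon' \geq \varepsilon$ with $\cf' = \sintpe{\varepsilon'}(\scf')$, recalling that by definition of the collecting one-step semantics $\ssscf' = \{ \scf' \mid \prog \vdash \scf \semarrow_c \scf' \}$.

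The cases whose rule reduces to a single state action or a single state-action composition --- \prooflab{Assignment}, \prooflab{uSym}/\prooflab{iSym}, \prooflab{Call}, \prooflab{Return}, \prooflab{Top Return}, and \prooflab{Fail} --- are handled uniformly. In each, I apply Forward Soundness for state actions (Lemma~\ref{req:st:fs}) together with its closure under composition (Lemma~\ref{lem:sactfs}) to the state action underlying the rule; this yields an $\varepsilon' \geq \varepsilon$ and a symbolic action branch whose interpretation under $\varepsilon'$ is the concrete branch taken. The accompanying index change, outcome, and call-stack update are determined structurally by the rule and commute with interpretation (using $\ceval{\e}{\sintp(\ssto)} = \sintp(\ceval{\e}{\ssto})$, and the fact that extending $\varepsilon$ leaves already-interpreted call-stack components unchanged by the symbolic-variable freshness requirements), so the matching symbolic successor lies in $\ssscf'$ and interprets to $\cf'$. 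The only rule that genuinely grows the environment is \prooflab{iSym}, where $\varepsilon'$ maps the freshly allocated symbolic variables to the concrete values chosen by the concrete allocator, exactly as in the \prooflab{iSym} case of Lemma~\ref{req:st:fs}; elsewhere $\varepsilon' = \varepsilon$.

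The branching commands, \prooflab{IfGoto} and the memory action $\x := \act(\e)$, are where the real work lies, because the concrete step follows exactly one branch whereas $\ssscf'$ records every satisfiable one. For \prooflab{IfGoto}, the concrete guard value under $\sintp$ is either $\true$ or $\false$, selecting \prooflab{IfGoto - True} or \prooflab{IfGoto - False}; I match it against the symbolic rule of the same polarity. The crucial point is that this polarity is not pruned symbolically: since $\sintp$ models $\scf$ we have $\sintp(\getpc{\scf}) = \true$, and commutation of evaluation sends the symbolic guard to $\true$ as well, so the assumed context $\getpc{\scf} \land \ceval{\e}{\ssto}$ (resp.\ with $\neg\e$) is satisfiable and $\kwT{assume}$ keeps the branch; Lemmas~\ref{req:st:fs} and~\ref{lem:sactfs} then supply the interpretation equality. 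For the memory action, the concrete result $\result$ distinguishes \prooflab{Memory Action - Success} ($\result = \osucc$, next index $i{+}1$, store updated by $\kwT{setVar}_\x$) from \prooflab{Memory Action - Error} ($\result \neq \osucc$, index unchanged, outcome $\ofail{\gv'}$); applying Lemma~\ref{req:st:fs} --- whose memory-action case rests on \prooflab{MA-FS} (Requirements~\ref{req:mem:fs}) --- and Lemma~\ref{lem:sactfs} to the corresponding composition yields a symbolic branch carrying the \emph{same} result $\result$, which therefore appears in $\ssscf'$ with the matching index, store, and outcome (the failure value interpreting correctly because $\sintp(\ofail{\sexp'}) = \ofail{\sintp(\sexp')}$ and $\sintp(\sexp') = \gv'$ by \prooflab{MA-FS}).

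The main obstacle is precisely this branch-matching. I must guarantee both that the concretely-taken branch survives symbolically --- the satisfiability argument for \prooflab{IfGoto}, which hinges on $\sintp$ realising the context of $\scf$ --- and that the result tag, next index, store, and outcome agree between the concrete and symbolic instances --- the case split on $\result$ for the memory action, resting on the fact that \prooflab{MA-FS} preserves the result. Every remaining ingredient (evaluation/interpretation commutation and invariance of the interpretation of pre-existing components under environment extension) is routine and already isolated inside Lemmas~\ref{req:st:fs} and~\ref{lem:sactfs}.
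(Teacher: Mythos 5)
Your proposal is correct and follows the same skeleton as the paper's proof: case analysis on the command, with every action-based rule discharged by Lemma~\ref{req:st:fs} and its closure under composition, Lemma~\ref{lem:sactfs}. The one discrepancy is in where you locate the work, and it flows from a misclassification: \prooflab{Call} and \prooflab{Return} do \emph{not} reduce to a single state-action composition --- each rule chains two separate action premises (the evaluation composition producing $[\f,[\gv,\sto']]$, then $\kwT{setStore}$, resp.\ the $\kwT{setStore}$--$\kwT{setVar}$ composition) whose value flow does not fit the composition operators of Definition~\ref{def:act:comp}, plus a call-stack update. This is exactly why the paper treats them as the only non-trivial cases and proves \prooflab{Call} explicitly, applying \prooflab{SA-FS} twice and chaining environments $\varepsilon \leq \varepsilon'' \leq \varepsilon'$, with the step $\sintpe{\varepsilon'}(\ssto') = \sintpe{\varepsilon''}(\ssto')$ keeping the stacked store stable under extension. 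Conversely, the branching cases you single out as ``the real work'' are, for the paper, immediate: \prooflab{IfGoto} is literally the single composition $\kwT{assume} \circ \kwT{eval}_{\e}$ and the memory-action rules are likewise compositions, so they fall under Lemmas~\ref{req:st:fs} and~\ref{lem:sactfs} directly; your satisfiability argument that the concretely-taken branch survives symbolically is the content of the $\kwT{assume}$ and memory-action cases \emph{inside} the proof of Lemma~\ref{req:st:fs} (via Proposition~\ref{prop:intermezzo} and \prooflab{MA-FS}), and need not be replayed at the one-step level. Since your uniform treatment does name the ingredients the explicit \prooflab{Call} case needs --- per-action application of \prooflab{SA-FS} and invariance of already-interpreted components under $\varepsilon' \geq \varepsilon$ --- this is a classification slip and a redistribution of detail, not a gap in the argument.
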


\begin{proof}
The cases which involve only one state action execution or only one action composition follow trivially from Lemma~\ref{req:st:fs} and Lemma~\ref{lem:sactfs}. The only two remaining cases are the function call rule and the return rule. We prove the function call rule; the return rule is proven analogously: 
\case{\prooflab{Call}}
\begin{proofpf}
  \pflongnumbers
  \step{s1} {Hypotheses:
    \begin{pfenum}
      \item[(H1)] $\prog \vdash \scf \rightarrow \ssscf'$
      \item[(H2)] $\prog \vdash  \sintp(\scf)  \semarrow {\cf'}$
    \end{pfenum}}
  \step{s2} {[by H1, \prooflab{Call} case]
    \begin{pfenum}
  	  \item $\hat\cf = \tup{\sst, \hat\cs, i}$
  	  \item $\cmd(\prog, \hat\cs, i) = \x := \e(\e')$
	  \item $\uaction{\sst}{(\lstacomp{\kwT{eval}_{\e}}{(\lstacomp{\kwT{eval}_{\e'}}{\kwT{getStore}})})}{-}{\{ (\sst_1, [\f, [\sexp, \ssto']])^\osucc\}}$
	  \item $\uaction{\sst_1}{\kwT{setStore}}{\btup{[\f.\kwT{arg}, \sexp]}}{\{(\sst', \true)^\osucc\}}$
	  \item $\hat\cf' = \tup{\sst', \tup{\f, \x, \ssto', i{+}1} \cons \hat\cs, 0}^\cont$
	  \item $\ssscf' = \{ \hat\cf' \}$ [by the determinism of all of the used state actions]
    \end{pfenum} }
  \step{s2} {[by H2, \prooflab{Call} case]
    \begin{pfenum}
  	  \item $\sintp(\cf) = \tup{\sintp(\sst), \sintp(\hat\cs), i}$
  	  \item $\cmd(\prog, \sintp(\hat\cs), i) = \x := \e(\e')$
	  \item $\action{\sintp(\sst)}{(\lstacomp{\kwT{eval}_{\e}}{(\lstacomp{\kwT{eval}_{\e'}}{\kwT{getStore}})})}{-}{ (\st_1, [\f', [\vl, \sto']])^\osucc}$
	  \item $\uaction{\st_1}{\kwT{setStore}}{\btup{[\f.\kwT{arg}, \vl]}}{\{(\st', \true)^\osucc\}}$
	  \item $\cf' = \tup{\st', \tup{\f, \x, \sto', i{+}1} \cons \sintp(\hat\cs), 0}^\cont$
    \end{pfenum} }
  \step{s5} {$\exists \varepsilon'' \geq \varepsilon.~\st_1 = \sintpe{\varepsilon''}({\sst_1}) \gand \f' = \f \gand \vl = \sintpe{\varepsilon''}(\sexp) \gand \sto' = \sintpe{\varepsilon''}(\ssto')$ [by \prooflab{SA-FS} +  Lemma~\ref{lem:sactfs} + 2.3 + 3.3]}
  \step{xx} {$\exists \varepsilon' \geq \varepsilon''.~\st' = \sintpe{\varepsilon'}(\sst')$
   [by \prooflab{SA-FS}, given \stepref{s2}.4 + 3.4]}
            \step{6c}{$\sintpe{\varepsilon'}(\ssto') = \sintpe{\varepsilon''}(\ssto') \gand \sintpe{\varepsilon'}(\ssto') = \sintpe{\varepsilon}(\hat\cs)$ [since $\varepsilon' \geq \varepsilon'' \geq \varepsilon$]}
  \step{s8} {$\cf' = \sintpe{\varepsilon'}(\scf')$ [by 2.5, 3.5, 4, 5, and 6]}
  \step{s9} {(G1) [by 1.6 and 7, taking $\varepsilon' = \varepsilon'$ and $\scf' = \scf'$]}
\end{proofpf}
\end{proof}

We next rephrase the last property in terms of models and extend it to the one-step collecting semantics. In order to do this, we extend the notion of models to sets of configurations in the standard way:
$$
\modls(\ssscf) \defeq \{ \sintp(\scf) \mid \scf \in \ssscf, \Vint\sintp\scf \}
$$

\begin{lemma}[\gil One-Step Collecting Semantics: Forward Soundness] 
\label{lem:goscsfs:mod}
The following soundness property holds for the \gil one-step collecting semantics:
\begin{mathpar}
\inferrule[GIL: Forward Soundness: Config (GIL-FS-CFG)]{}{
          {\begin{array}{l}
\prog \vdash \scf \rightarrow_c \ssscf' \gand \cf \in \modls({\scf})\gand 
\prog \vdash  \cf  \semarrow_c {\cf'}  \implies \exists \scf' \in \ssscf'.~\cf' \in
 \modls(\scf') 
 \end{array}}}
\end{mathpar}
\begin{mathpar}
\inferrule[GIL: Forward Soundness: Collecting One-Step (GIL-FS-COS)]{}{
          {\begin{array}{l}
\prog \vdash \ssscf \rightarrow \ssscf'  \gand \scf \in \ssscf \gand \cf \in \modls({\scf})\gand 
 \prog \vdash  \cf  \semarrow_c {\cf'}  \\
 \qquad \implies \exists \scf' \in \ssscf'.~\cf' \in
 \modls(\scf') \gand \prog \vdash  \scf  \semarrow_c {\scf'}  \\[1mm]
\prog \vdash \ssscf \rightarrow \ssscf'  \gand \cf \in \modls({\ssscf})\gand 
 \isfinal(\cf)  \implies \cf \in
 \modls(\ssscf') 
 \end{array}}}
\end{mathpar}
\end{lemma}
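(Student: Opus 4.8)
The plan is to derive all three implications from Lemma~\ref{lem:gilfsone} (rule \prooflab{GIL-FS-CFG-I}) by unfolding the definition of $\modls$ and re-reading its interpretation-level witness as a model, and then to lift the per-configuration result through the structure of the \textsc{Collection: OStep} rule.

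For \prooflab{GIL-FS-CFG}, I would first use $\cf \in \modls(\scf)$ to obtain an environment $\varepsilon$ with $\Vint{\sintpe{\varepsilon}}{\scf}$ and $\cf = \sintpe{\varepsilon}(\scf)$. Since interpretation preserves the call-stack component and the index of a configuration, the command determined by $\scf$ and by $\cf$ coincide, so the concrete step $\prog \vdash \cf \semarrow_c \cf'$ is literally $\prog \vdash \sintpe{\varepsilon}(\scf) \semarrow_c \cf'$. Applying Lemma~\ref{lem:gilfsone} to $\prog \vdash \scf \rightarrow_c \ssscf'$ and this step yields $\varepsilon' \geq \varepsilon$ and $\scf' \in \ssscf'$ with $\cf' = \sintpe{\varepsilon'}(\scf')$. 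Because the right-hand side is defined, $\sintpe{\varepsilon'}$ validates $\scf'$ (in particular $\sintpe{\varepsilon'}(\getpc{\scf'}) = \true$), so $\Vint{\sintpe{\varepsilon'}}{\scf'}$ and hence $\cf' \in \modls(\scf')$, as required.

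For the first implication of \prooflab{GIL-FS-COS}, the key is to extract a single-configuration collecting step from $\prog \vdash \ssscf \rightarrow \ssscf'$ and then invoke the just-proved \prooflab{GIL-FS-CFG}. I would first argue that $\scf$ is non-final: since $\cf \in \modls(\scf)$ takes a step $\prog \vdash \cf \semarrow_c \cf'$, the configuration $\cf$ is non-final, and because interpretation fixes the continuation outcome ($\sintpe{\varepsilon}(\cont) = \cont$) and sends every terminal outcome to a terminal one, we have $\isfinal(\cf) \iff \isfinal(\scf)$; thus $\scf$ lies in the non-final part of $\ssscf$. The \textsc{Collection: OStep} rule then contains the per-configuration step $\prog \vdash \scf \rightarrow \sscf_0$, with $\sscf_0 = \{\scf' \mid \prog \vdash \scf \semarrow \scf'\}$ (rule \textsc{Config: OStep}) and $\sscf_0 \subseteq \ssscf'$. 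Applying \prooflab{GIL-FS-CFG} to this step, to $\cf \in \modls(\scf)$, and to $\prog \vdash \cf \semarrow_c \cf'$ gives some $\scf' \in \sscf_0$ with $\cf' \in \modls(\scf')$; unfolding $\scf' \in \sscf_0$ yields $\prog \vdash \scf \semarrow_c \scf'$ (the exposed command is the one determined by $\scf$, which matches the $c$ of the concrete step by the call-stack/index preservation noted above), and $\scf' \in \ssscf'$ follows by the inclusion $\sscf_0 \subseteq \ssscf'$.

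The second implication of \prooflab{GIL-FS-COS} is the simplest: from $\cf \in \modls(\ssscf)$ obtain $\scf \in \ssscf$ and $\varepsilon$ with $\Vint{\sintpe{\varepsilon}}{\scf}$ and $\cf = \sintpe{\varepsilon}(\scf)$; from $\isfinal(\cf)$ and the finality correspondence conclude $\isfinal(\scf)$, so $\scf$ is in the final part of $\ssscf$, which \textsc{Collection: OStep} preserves, whence $\scf \in \ssscf'$ and $\cf = \sintpe{\varepsilon}(\scf) \in \modls(\ssscf')$. I expect the only delicate points to be (i) the finality correspondence $\isfinal(\cf) \iff \isfinal(\scf)$, which rests on interpretation fixing $\cont$ and mapping each terminal outcome to a terminal one, and (ii) checking that the command exposed by the symbolic and concrete steps agree; both are immediate once one observes that interpretation leaves the call stack and program counter of a configuration untouched. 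Everything else is bookkeeping over the definition of $\modls$ and the two rules of the collecting semantics.
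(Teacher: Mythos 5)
Your proposal is correct and follows essentially the same route as the paper, which derives \prooflab{GIL-FS-CFG} from Lemma~\ref{lem:gilfsone} by unfolding the definition of $\modls$ and obtains both parts of \prooflab{GIL-FS-COS} directly from the \textsc{Collection: OStep} rule. You merely spell out the bookkeeping the paper leaves implicit (the finality correspondence under interpretation and the agreement of the exposed command), and those details are accurate.
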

\begin{proof}
The first part follows straightforwardly from the statement of Lemma~\ref{lem:gilfsone}, given that, by definition of models, $\sintpe{\varepsilon}(\scf) \in \modls(\scf)$ and $\sintpe{\varepsilon'}(\scf') \in \modls(\scf')$. The second part follows directly from the definition of the one-step collecting semantics.
\end{proof}

\begin{lemma}[\gil Multi-Step Collecting Semantics: Forward Soundness] 
\label{lem:gmscsfs:mod}
\begin{mathpar}
\inferrule[GIL: Forward Soundness: Collecting Multi-Step (GIL-FS-CMS)]{}{
          {\begin{array}{l}
\sinv(\ssscf) \gand \prog \vdash \ssscf \rightarrow^n \ssscf'  \gand \scf \in \ssscf \gand \cf \in \modls({\scf})~\gand \\ 
\quad \left( \prog \vdash  \cf  \semarrow^n_C {\cf'} \lor \left(m < n \gand \prog \vdash  \cf  \semarrow^m_C {\cf'} \gand \isfinal(\cf') \right) \right) \\
\qquad \implies \exists \scf' \in \ssscf'.~\cf' \in
 \modls(\scf') \gand \left( \prog \vdash  \scf  \semarrow^n_C {\scf'} \lor \left(m < n \gand \prog \vdash  \scf  \semarrow^m_C {\scf'} \gand \isfinal(\scf') \right) \right) 
 \end{array}}}
\end{mathpar}
\end{lemma}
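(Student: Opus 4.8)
The plan is to prove the statement by \emph{induction on $n$}, lifting the single collecting step of Lemma~\ref{lem:goscsfs:mod} repeatedly while threading the well-formedness hypothesis $\sinv(\ssscf)$ and the early-termination witness $m$ through the recursion. The two disjuncts of the premise --- the full run $\prog \vdash \cf \semarrow^n_C \cf'$ and the terminating run $\prog \vdash \cf \semarrow^m_C \cf'$ with $m < n$ and $\isfinal(\cf')$ --- are carried in parallel, each yielding the matching symbolic disjunct with the \emph{same} step count.

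First I would record two auxiliary facts. (i) Interpretation preserves finality: a configuration interpretation acts on the outcome by $\sintp(o)$, and since $\sintp$ fixes $\cont$ and sends every non-continuation outcome to a non-continuation outcome, $\cf \in \modls(\scf)$ yields $\isfinal(\cf) \Leftrightarrow \isfinal(\scf)$. (ii) Final configurations persist and well-formedness is maintained: the collecting one-step rule always copies the final subset of a configuration set into its successor, so a final config that occurs in one collecting set occurs in every later one, and the well-formedness preservation of state actions (lifted from Definition~\ref{def:emm}) gives $\sinv(\ssscf'')$ whenever $\prog \vdash \ssscf \rightarrow \ssscf''$ and $\sinv(\ssscf)$; this is what licenses the induction hypothesis on the tail set.

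For $n = 0$ the collecting relation is reflexive ($\ssscf' = \ssscf$), the only admissible concrete disjunct is the reflexive $\cf \semarrow^0 \cf'$ (so $\cf' = \cf$), and $\scf' := \scf$ discharges the goal. For $n = k+1$ I would decompose $\prog \vdash \ssscf \rightarrow \ssscf'' \rightarrow^k \ssscf'$ and split on whether $\scf$ (equivalently, by (i), $\cf$) is final. If it is, the concrete run cannot move, so we are in the terminating disjunct with $m = 0$ and $\cf' = \cf$; taking $\scf' := \scf$, persistence (ii) places it in $\ssscf'$, and $\scf \semarrow^0 \scf$ with $\isfinal(\scf)$ closes the goal. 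If $\scf$ is non-final, then $\cf$ is non-final too, so the concrete run makes a genuine first step $\cf \semarrow_c \cf_1$ (in the terminating subcase this forces $m \geq 1$); applying the step-lifting part (\textsc{GIL-FS-COS}) of Lemma~\ref{lem:goscsfs:mod} to the head collecting step gives $\scf_1 \in \ssscf''$ with $\cf_1 \in \modls(\scf_1)$ and $\prog \vdash \scf \semarrow_c \scf_1$, after which the induction hypothesis on $\ssscf'' \rightarrow^k \ssscf'$ with the residual concrete run (of length $k$ in the full case, $m-1$ in the terminating case) returns the symbolic tail. Prepending $\scf \semarrow_c \scf_1$ reconstitutes a full $(k{+}1)$-step run, respectively an $m$-step run to a final $\scf'$; the arithmetic $m < k+1 \Rightarrow m-1 < k$ is exactly what keeps the witness aligned.

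The step I expect to be the main obstacle is the bookkeeping that keeps the premise and conclusion disjuncts in lock-step: I must reproduce the termination witness $m$ \emph{exactly} on the symbolic side rather than merely bounding it, which is what forces the finality case split at the head of every inductive step and a \emph{separate} persistence argument for already-final configurations --- these cannot be routed through Lemma~\ref{lem:goscsfs:mod}, whose step-lifting part presupposes an actual step. The remaining technical care is in re-establishing $\sinv(\ssscf'')$ and the membership $\scf_1 \in \ssscf''$ so that the induction hypothesis genuinely applies to the tail.
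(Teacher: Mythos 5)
Your proposal is correct and takes essentially the same route as the paper: induction on $n$, decomposition of the head collecting step, a case split on finality of $\cf$, lifting the genuine first step via the step-lifting part of Lemma~\ref{lem:goscsfs:mod}, and applying the induction hypothesis to the tail with the termination witness decremented from $m$ to $m-1$. The only (harmless) divergence is in the final branch, where you keep $\scf$ itself in all later sets by literal persistence of final configurations under the collecting one-step rule, whereas the paper routes the same fact through the model-level persistence clause of Lemma~\ref{lem:goscsfs:mod} and the induction hypothesis.
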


\begin{proof}
By induction on $n$. The base case, when $n = 0$, is trivial. We prove the inductive case, for $n = k + 1$, as follows:
\begin{proofpf}
  \pflongnumbers
    \step{2ass}{
      \begin{pfenum}
        \item[(H1)] $\prog \vdash \ssscf \rightarrow^{k+1} \ssscf'$
        \item[(H2)] $\cf \in \modls({\ssscf})$
        \item[(H3)] $\left( \prog \vdash  \cf  \semarrow^{k+1}_C {\cf'} \lor \left(m < {k+1} \gand \prog \vdash  \cf  \semarrow^m_C {\cf'} \gand \isfinal(\cf') \right) \right)$
      \end{pfenum}
    }
		\step{2a}{$\exists \ssscf''.$}
		\begin{pfenum*}
			\item $\prog \vdash \ssscf \rightarrow \ssscf''$ [From H1]  \label{2a1}
			\item $\prog \vdash \ssscf'' \rightarrow^k \ssscf'$ \label{2a2} [From H1]
		\end{pfenum*}
		\step{xb}{BRANCH: $\isfinal(\cf)$}
		\begin{pfenum*}
		\item {$\cf \in \modls(\ssscf'')$ [by 3 and Lemma~\ref{lem:goscsfs:mod}]}
		\item {$\prog \vdash  \cf  \semarrow^0 {\cf} \gand \isfinal(\cf) $ [by H3 and 3]}
		\item {$\cf \in \modls(\ssscf')$ [by IH, 2.2, 3.1, and 3.2]}
		\end{pfenum*}
		\step{xc}{BRANCH: $\lnot\isfinal(\cf): \exists \cf''.$}
		\begin{pfenum*}
		\item $\prog \vdash  \cf  \semarrow_c \cf''$ [by H3 and 4]
		\item $\prog \vdash  \cf''  \semarrow^{k}_{C'} {\cf'}$ [by H3 and 4]
		\item {$\exists \scf''.~\scf'' \in \modls(\ssscf'') \gand \cf'' \in \modls(\scf'') \land \prog \vdash \scf \semarrow_c \scf''$ [by 4.1 and Lemma~\ref{lem:goscsfs:mod}]}
		\item $\left( \prog \vdash  \cf''  \semarrow^{k}_{C'} {\cf'} \lor \left(m < {k} \gand \prog \vdash  \cf''  \semarrow^m_{C'} {\cf'} \gand \isfinal(\cf') \right) \right)$ [by 4.2]
		\item {$\exists \scf' \in \ssscf'.~\cf' \in
 \modls(\scf') \gand  \prog \vdash  \scf''  \semarrow^k_{C'} {\scf'}$ [by IH, 2.2, 4.2, and 4.3]}
 		\item $\cf' \in
 \modls(\scf') \gand \left( \prog \vdash  \scf  \semarrow^n_C {\scf'} \lor \left(m < n \gand \prog \vdash  \scf  \semarrow^m_C {\scf'} \gand \isfinal(\scf') \right) \right)$ (G1) [by 4.1 and 4.5] 
		\end{pfenum*}
	\end{proofpf}
\end{proof}

Finally, we extend the forward soundness property to the to-termination \gil collecting semantics, which follows straightforwardly from its definition and Lemma~\ref{lem:gmscsfs:mod}, and gives the partial verification guarantee for the \gil semantics: 

\begin{theorem}[\gil Forward Soundness] 
\label{lem:gttcsfs:mod}
\gil execution satisfies the following property:
\begin{mathpar}
\inferrule[GIL: Forward Soundness]{}{
          {\begin{array}{l}
\prog \vdash \{ \scf \} \csemarrowfin \ssscf' \gand \cf \in \modls({\scf}) \gand~\prog \vdash  \cf  \csemarrowfin_C {\cf'} \\ \qquad \implies \exists \scf'.~\scf' \in \ssscf' \gand \cf' \in \modls(\scf') \gand \prog \vdash \scf \csemarrowfin_C \scf'
 \end{array}}}
\end{mathpar}
\end{theorem}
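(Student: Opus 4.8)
The plan is to unfold both to-termination judgements into their underlying bounded runs and then discharge the goal by a single application of Lemma~\ref{lem:gmscsfs:mod}, taking $\ssscf = \{\scf\}$. From $\prog \vdash \{\scf\} \csemarrowfin \ssscf'$ I would first extract, via the collecting to-termination rule, a bound $n$ with $\prog \vdash \{\scf\} \rightarrow^n \ssscf'$ and $\isfinal(\ssscf')$; from $\prog \vdash \cf \csemarrowfin_C \cf'$ I would extract a bound $m$ with $\prog \vdash \cf \semarrow^m_C \cf'$ and $\isfinal(\cf')$. With $\ssscf = \{\scf\}$ the side hypotheses of Lemma~\ref{lem:gmscsfs:mod} are immediate: $\scf \in \{\scf\}$ trivially, $\sinv(\{\scf\})$ because $\scf$ is a well-formed configuration by the definition of states, and $\cf \in \modls(\scf)$ is exactly the second hypothesis.

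The only genuine gap is the last premise of Lemma~\ref{lem:gmscsfs:mod}, namely the disjunction asserting either $\prog \vdash \cf \semarrow^n_C \cf'$, or that $m < n$ together with $\prog \vdash \cf \semarrow^m_C \cf'$ and $\isfinal(\cf')$. Since $\isfinal(\cf')$ already holds, this disjunction is satisfied as soon as $m \le n$ (use the first disjunct when $m = n$ and the second when $m < n$). Establishing $m \le n$ is therefore the crux, and is where I expect the real work to lie. I would argue it by contradiction: suppose $m > n$ and split the concrete run, using the command-deterministic one-step decomposition, into a length-$n$ prefix $\prog \vdash \cf \semarrow^n_{C_n} \cf_n$ followed by $m - n$ further steps to $\cf'$. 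Applying Lemma~\ref{lem:gmscsfs:mod} to this prefix (its first disjunct, which does not require finality) yields some $\scf' \in \ssscf'$ with $\cf_n \in \modls(\scf')$. Because $\isfinal(\ssscf')$, the configuration $\scf'$ is final, and since configuration interpretation commutes with outcomes ($\sintp(\cont) = \cont$ and non-continuation outcomes interpret to non-continuation outcomes), every model of a final symbolic configuration is itself final; hence $\cf_n$ is final. This contradicts $m > n$, as a final configuration has no outgoing one-step transition and the concrete run could not have continued for $m - n > 0$ further steps. Thus $m \le n$.

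With $m \le n$ established, Lemma~\ref{lem:gmscsfs:mod} applies directly and returns some $\scf' \in \ssscf'$ with $\cf' \in \modls(\scf')$, together with the symbolic disjunction stating either $\prog \vdash \scf \semarrow^n_C \scf'$, or that $m < n$, $\prog \vdash \scf \semarrow^m_C \scf'$ and $\isfinal(\scf')$. It then remains to repackage this as a to-termination run. In the first branch $\scf'$ is final because $\scf' \in \ssscf'$ and $\isfinal(\ssscf')$; in the second branch finality of $\scf'$ is given outright. In either case $\scf'$ is reached from $\scf$ by a bounded command-trace run and is final, so the to-termination rule yields $\prog \vdash \scf \csemarrowfin_C \scf'$, which, together with $\scf' \in \ssscf'$ and $\cf' \in \modls(\scf')$, is precisely the goal. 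The remaining obligations — the determinism of the one-step command decomposition and the persistence of final configurations in the collecting set (the $\sscf_f$ component of the collecting one-step rule) — are routine and follow directly from the definitions in Figures~\ref{fig:gilrules} and~\ref{fig:colsem}.
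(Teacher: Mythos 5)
Your proposal is correct and takes essentially the same route as the paper, which dispatches this theorem in a single line as following ``straightforwardly from its definition and Lemma~\ref{lem:gmscsfs:mod}''. Your unfolding of the two to-termination judgements and single application of that lemma with $\ssscf = \{\scf\}$ is precisely that argument, with the one piece of real bookkeeping the paper leaves implicit --- establishing $m \le n$ via the observation that interpretations map final symbolic configurations to final concrete ones, so the concrete run cannot outlast the symbolic one --- filled in correctly.
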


This property captures partial verification of the \gil symbolic execution: it states that the set of concrete configurations reachable from $\modls({\hat\cf})$ by executing $C$ is \emph{over-approximated} by $\modls(\ssscf')$, meaning that any final concrete configuration reachable from any concrete state in $\modls({\hat\cf})$ by executing $C$ has to be in $\modls(\ssscf')$.

\section{\gillian Compositional Execution}
\label{sec:gilspec}

We introduce \emph{compositional memory models}, which are suitable for \emph{compositional} symbolic execution. They are defined in terms of \emph{core predicates} and their \emph{consumers} and \emph{producers}. Core predicates represent separation-logic assertions that describe the fundamental units of the target-language memory models. a consumer frames off the resource of a given core predicate, and a producer frames that resource on. We demonstrate how compositional memory models induce execution memory models. In addition, we show that all of correctness results of execution memory models, and prove a frame preservation property for \gil execution, essential for compositionality.

\subsection{Compositional Memory Models} 
\label{sec:cmms}

Compositional memory models are parametric on a value set $\gval$ and a core predicate set $\corePreds \ni \corePred$, where every core predicate $\corePred$ is associated with a list of in-parameters, $\gv_i$, and a list of out-parameters, $\gv_o$, giving rise to the {\em core
  predicate resource} (CPR), $\cPred{\gv_i \cdot \gv_o}$.
The core predicate set induces a predicate action set,
$\actions_\corePreds \defeq \bigcup_{\corePred \in \corePreds} \{
\fgetter{\corePred}, \fsetter{\corePred} \}$, comprising
\emph{consumer} and \emph{producer} actions for each core predicate,
and a memory action set, comprising \emph{getter} and \emph{setter}
actions, $\actions \defeq \bigcup_{\corePred \in \corePreds} \{
\fgetterx{\corePred}, \fsetterx{\corePred} \}$,  for each
core predicate.

\begin{definition}[Compositional Memory Model]\label{def:smm}
Given a value set $\gval$ and a core predicate set $\corePreds$, a 
compositional memory model, $\cmemory \defeq \tup{\sset{\cmemory}, \sinv, 
\cea_{\!\corePreds}}$  comprises:
\begin{enumerate}[leftmargin=*,label=\arabic*)]
\item a partial commutative monoid (PCM), $\sset{\cmemory} = (\sset\cmemory, \stcomp,
  \stzero)$, with a carrier set of \myuline{memories},  $\sset{\cmemory} \ni \cmem$, 
a partial, associative and
    commutative binary composition operator,~$\stcomp$, and the unit 
    element, $\stzero$, denoting the empty memory,  which is indivisible:
    that is, $\pc \vdash \mu \stcomp - = 0 \Rightarrow \pc \vdash \mu  = 0$;

\item a \myuline{well-formedness relation}, $\sinv \subseteq
    \sset{\cmemory} \times \pcs$, with  $\sinv_{\!\pc}(\cmem)$ denoting 
    that memory~$\cmem$ is well-formed in context $\pc$, which, in addition to the properties 2.1-2.2 of Definition~\ref{def:emm}, has the following properties, which describe its relationship to the memory PCM:
    \begin{enumerate}[left=0cm,label=\arabic{enumi}.\arabic*)]
    \setcounter{enumii}{2}
    \item well-formedness is compatible with memory
  composition: $$\sinv_{\!\pc}(\cmem_1 \stcomp \cmem_2) \implies
  \sinv_{\!\pc}(\cmem_1) \land \sinv_{\!\pc}(\cmem_2);$$
\item the empty memory,  $\stzero$, is   well-formed in any
  satisfiable context:  $\pi \;\sat \implies \sinv_{\!\pc}(\stzero)$; 

  \end{enumerate}

\item 
a \myuline{predicate action execution function}, $\cea_\corePreds : \actions_\corePreds \times \sset{\cmemory} \times \gval \times \pcs \pmap \power{\sset{\cmemory} \times \gval \times \results \times \pcs}$,
  where $\cea_\corePreds  (\mu, \alpha, \gv, \pc) =  \{ (\cmem_i, \gv_i , \result_i
  ,\pc_i)     : i  \in I \}$, for countable $I$, with notation 
  $\fcolactionx \cmem \act \gv {\pc} {\left\{ (\cmem_i,
      \gv_i)_{\pc_i}^{\result_i}|_{i \in I} \right\}}$ for all
  outcomes and  $\faction{\cmem}{\act}{\gv}{}{\cmem_i}{\gv_i}{\pc_i }
{\result_i}$ for a specific outcome, satisfies, in addition to the properties 3.1-3.6 of Definition~\ref{def:emm}, the following properties:
\begin{enumerate}[left=0cm,label=\arabic{enumi}.\arabic*)]
\setcounter{enumii}{6}
\item the producers succeed without branching in the empty memory in any satisfiable context, producing the CPR:
$$
\begin{array}{r@{.~}l}
\pc~\sat \implies \faction{\stzero}{\fsetter{\corePred}}{\gv_i \cdot \gv_o}{\pc}{{\cPred{\gv_i \cdot \gv_o}}}{\true}{\pc}{\osucc} 
\end{array}
$$
\item if the producers succeed, they add on the CPR:
$$
\begin{array}{c}
\faction \cmem {\fsetter{\corePred}} {\gv_i \cdot \gv_o} {\pc} {\cmem'} {\true} {\pc'} \osucc \implies \cmem' = \cmem \stcomp \cPred{\gv_i \cdot \gv_o} 
\end{array}
$$
\item the producers cannot return a missing information error:
$$
\faction \cmem {\fsetter{\corePred}} {\gv_i \cdot \gv_o} {\pc} {\cmem'} {\gv'} {\pc'} \result \implies \result \neq \omiss
$$

\item the consumers succeed if and only if the CPR is present in memory:
$$
\begin{array}{r@{~}c@{~}l}
\faction \cmem {\fgetter{\corePred}} {\gv_i} {\pc} {\cmem'} {\gv_o} {\pc'} \osucc & \implies & \pc' \Leftrightarrow \pc \land (\cmem = \cmem' \stcomp \cPred{\gv_i \cdot \gv_o)} \\[1mm]
\sinv_{\!\pc}(\cmem) \gand \pc' \Leftrightarrow (\cmem = \cmem' \stcomp \cPred{\gv_i \cdot \gv_o}) \land (\pc \land \pc'~\sat) & \implies & \faction \cmem {\fgetter{\corePred}} {\gv_i} {\pc} {\cmem'} {\gv_o} {\pc \land \pc'} \osucc 
\end{array}
$$

\item 
consumed CPR can always be successfully re-produced:
$$
\begin{array}{l}
\faction{\cmem}{\fgetter{\corePred}}{\gv_i}{\pc}{\cmem'}{\gv_o}{\pc'}{\osucc} \implies~\faction{\cmem'}{\fsetter{\corePred}}{\gv_i \cdot \gv_o'} {\pc'} {\cmem''} \true {\pc'} {\osucc}  
\end{array}
$$
\item producers fail in a memory in which
  consumers succeed, and succeed in a memory in which consumers
  return a missing information error (and vice versa for the latter):
$$
\begin{array}{r@{~}c@{~}l}
\faction \cmem {\fgetter{\corePred}} {\gv_i} {\pc} {\cmem'} {\gv_o} {\pc'} \osucc & \implies & \faction \cmem {\fsetter{\corePred}} {\gv_i \cdot -} {\pc} {\cmem} {\false} {\pc'} \oerr \\[1mm]
\faction \cmem {\fgetter{\corePred}} {\gv_i} {\pc} {\cmem} {\false} {\pc'} \omiss & \iff & \faction \cmem {\fsetter{\corePred}} {\gv_i \cdot \gv_o} {\pc} {\cmem \stcomp \cPred{\gv_i \cdot \gv_o}} {\true} {\pc'} \osucc 
\end{array}
$$

\item erroneous executions of the consumers and producers are frame-preserving: for all $\act \in \actions_\corePreds$:
$$
{\begin{array}{l}
\faction \cmem {\act} \gv {\pc} {\cmem'} {\gv'} {\pc'} \oerr 
 \gand \sinv_{\!\pc \land \pc_f}(\cmem \stcomp \cmem_{f}) \gand (\pc' \land \pc_f~\sat)
\\ \qquad
 \implies 
\faction {(\cmem \stcomp \cmem_f)} {\act} \gv {\pc \land \pc_f} {{\cmem'} \stcomp \cmem_f} {\gv'} {\pc' \land \pc_f} \oerr
\end{array}}
$$
\item successful executions of the producers are frame-cancelling:
$$
\begin{array}{l}
\faction {(\cmem \stcomp \cmem_f)} {\fsetter{\corePred}} {\gv_i \cdot \gv_o} {\pc} {\cmem \stcomp \cmem_f \stcomp \cPred{\gv_i \cdot \gv_o}} {\gv'} {\pc'} \result \\[1mm] \qquad\qquad \implies \faction {\cmem} {\fsetter{\corePred}} {\gv_i \cdot \gv_o} {\pc'} {\cmem \stcomp \cPred{\gv_i \cdot \gv_o}} {\gv'} {\pc'} \osucc 
\end{array}
$$

\end{enumerate}
\end{enumerate}
\end{definition}

\begin{definition}[Concrete and Symbolic Compositional Memory Model]\label{def:cssmm}
A {\em concrete} compositional memory model, 
$\cmemory \defeq \tup{\sset{\cmemory}, \sinv, 
\cea_{\!\corePreds}}$, is a compositional memory model with 
value set given by the \gil value set, $\vals$. 
A {\em symbolic} compositional memory model, $\smemory= \tup{\sset{\smemory}, \sinv, 
\sea_{\!\corePreds}}$, is a compositional memory model with value set given by the  set of \gil symbolic expressions, $\sexps$.
\end{definition}

\myparagraph{Discussion and Observations} 
The PCM requirement is standard for compositional reasoning, and the two new well-formedness properties ensure compatibility of well-formedness with the PCM. Property 3.7 pinpoints the CPR, and properties 3.8 and 3.10 capture its relationship with the consumers and producers. Note that it is not possible to characterise erroneous behaviours in terms of CPR presence of absence, due to TL-specific errors, such as the C access out of bounds. Instead, we require properties 3.11-3.12, with the former being essential later on for proving frame preservation of getters and setters, as well as for  specification use. As for the frame preservation requirement, one can notice that it is incompatible with resource allocation (as the allocated resource could be part of the frame, causing resource duplication in $\cmem' \stcomp \cmem_f$). This will be further discussed in the context of the frame preservation of the \gil semantics. Note that the requirement $\pc' \land \pc_f~\sat$ means that the branch that the execution took (from $\pc$ to $\pc'$) must remain feasible with the added frame.

\myparagraph{Getters and Setters} The getter and setter memory actions are defined in terms of consumers and producers as follows:
\begin{mathpar}
\inferrule[\textsc{Getter: Success}]
  {
    \faction{\cmem}{\fgetter{\corePred}}{\gv_i} \pc {\cmem'}{\gv_o}{\pc'}{\osucc}  \\\\
    \faction{\cmem'}{\fsetter{\corePred}}{\gv_i \cdot \gv_o} {\pc'} {\cmem''}{\true}{\pc'}{\osucc} 
  }
  {\faction{\cmem}{\fgetterx{\corePred}}{\gv_i} \pc {\cmem''}{\gv_o}{\pc'} {\osucc} }  
 \and
 \inferrule[\textsc{Setter: Success}]
  {
    \faction{\cmem}{\fgetter{\corePred}}{\gv_i}\pc{\cmem'}{-}{\pc'}{\osucc} \\\\
    \faction{\cmem'}{\fsetter{\corePred}}{\gv_i \cdot \gv_o}{\pc'}{\cmem''}{\true}{\pc'}{\osucc} 
  }
    {\faction{\cmem}{\fsetterx{\corePred}}{\gv_i \cdot \gv_o}\pc {\cmem''}{\true}{\pc'}{\osucc} }  
\end{mathpar}
\begin{mathpar}
  \inferrule[\textsc{Getter: Non-Success}]
  {
    \faction{\cmem}{\fgetter{\corePred}}{\gv_i} \pc {\cmem}{\false}{\pc'} {\result} \and \result \neq \osucc
  }
  {\faction{\cmem}{\fgetterx{\corePred}}{\gv_i}\pc{\cmem}{\false}{\pc'}{\result} }  
 \and
 \inferrule[\textsc{Setter: Non-Success}]
  {
    \faction{\cmem}{\fgetter{\corePred}}{\gv_i} \pc {\cmem}{\false}{\pc'}{\result} \and \result \neq \osucc
  }
    {\faction{\cmem}{\fsetterx{\corePred}}{\gv_i \cdot \gv_o} \pc {\cmem}{\false}{\pc'}{\result} }  
\end{mathpar}
and correspond to CPR lookup and mutation, respectively. We can trivially show that, if the consumers and producers satisfy the properties 3.1-3.13 of Definition~\ref{def:smm}, then the getters and setters satisfy properties 3.1-3.6, meaning that they can be used in whole-program \gil execution.

\subsection{Compositional State Models}
\label{subsec:csms}

Compositional state models are defined analogously to execution state models. Recall the set of  basic state actions,  $\sactions
= \{ \kwT{setVar}_{\x} \}_{\x \in \xs} \cup \set{\kwT{setStore}, \kwT{getStore}} 
\cup \{ \kwT{eval}_{\e} \}_{\e \in \exprs} \cup \{\kwT{assume},$ $\kwT{uSym}, \kwT{iSym}\}
$, and recall that a set of core predicates $\corePreds$ induces a set of consumers and producers, $\actions_\corePreds$, and a set 
of getters and setters, $\actions$.

\begin{definition}[Compositional State Model]
\label{def:csm}

  Given a value set $\gval$, a set of interpreted symbols $\mathsf{I}
\subset \gval$,
a set of core predicates $\corePreds$,  a compositional memory model $\cmemory
= \tup{\sset{\cmemory}, \sinv,$  $\cea}$, and an allocator
$\allocator = \tup{\sset{\allocator}, \allocf}$,
a {\em compositional state model} is defined by 
$\gstate \defeq
\langle(\sset{\gstate}, \stcomp, \stzero), \ceval{~}{}, \kwT{ea} \rangle$, comprising:
\begin{itemize}[leftmargin=*]
\item a set of \myuline{states}, $\sset{\gstate} \subseteq 
  \sset{\cmemory} \times (\xs \pmap \gval) \times \sset\allocator
  \times \pcs$,  with  $\st = \tup{\cmem,
    \sto, \arec, \pc} \in \sset{\gstate}  $ { if and only if $\sinv_{\!\pc}(\cmem)$}, with state composition $\stcomp$ and the neutral element $\stzero$ defined as
  $$
 \st_1 \stcomp \st_2 = \begin{cases}
    \tup{\cmem_1 \stcomp \cmem_2, \sto_1, \arec_1 \stcomp \arec_2, \pc_1 \land \pc_2}, &
    \st_1 = \tup{\cmem_1, \sto_1, \arec_1, \pc_1}, \st_2 =
    \tup{\cmem_2, \emptyset, \arec_2, \pc_2}, \\
    &  \sinv_{\!\pc_1 \land \pc_2}(\cmem_1 \stcomp \cmem_2)~\text{(and vice versa with respect to stores)} \\
\text{undefined}, & \text{otherwise} 
\end{cases}
$$
       and $\stzero = \tup{\stzero, \emptyset, \emptyset, \true}$   , where allocator composition is defined as
   \begin{mathpar}
    \inferrule[]
{(\arec' \stcomp \arec'')(Y) = 
{\begin{cases} 
	\arec_{o}'(Y) \cup \arec_{o}''(Y), & \text{if $Y \in \domain(\arec_1)$ or $Y \in \domain(\arec_2)$} \\
	\text{undefined}, & \text{otherwise} 
\end{cases}}}{}
\end{mathpar}
\item an {\em expression evaluation function} $\ceval{~}{} : \exprs
  \times (\xs \pmap \gval) \tmap \gval$, where $\ceval{\e}{\sto} \in
  \gval$ denotes the evaluation of a \gil expression $e$ with respect
  to variable store $\sto$;

\item a \myuline{state action execution function}, $\kwT{ea} :
  \sset{\gstate} \times (\mactions_\corePreds \uplus A \uplus \sactions) \times \gval
  \pmap \power{\sset{\gstate} \times \gval \times \results}$,
  defined by:
  for $\st = \tup{\cmem,
    \sto, \arec, \pc}$ and $\alpha \in A_\corePreds \uplus A$,
  
\medskip
\noindent
\begin{tabular}{l@{~}c@{~}l}
     $\kwT{ea}(\st, \act, \gv)$
    & $\semeq$ & 
    $\{ (\tup{\cmem', \sto, \arec, \pc'}, \gv')^\result \mid  \faction \cmem {\act} \gv {\pc} {\cmem'} {\gv'} {\pc'} \result 
\} $ 
\\
  $\kwT{ea}(\st, \kwT{setVar}_{\x},  \gv)$ 
       & $\semeq$ & 
       $\set{ (\tup{\cmem, \sto[\x \mapsto \gv], \arec, \pc}, \true)^\osucc}$ \\ 
  $\kwT{ea}(\st, \kwT{setStore}, \sto')$
     & $\semeq$ & 
     $\set{(\tup{\cmem, \sto', \arec, \pc}, \true)^\osucc}$ \\
  $\kwT{ea}(\st, \kwT{getStore}, -)$
    & $\semeq$ & 
     $\set{ (\st, \sto)^\osucc}$ \\
  $\kwT{ea}(\st, \kwT{eval}_{\e}, -)$
    & $\semeq$ & 
     $\set{ (\st, \ceval{\e}{\sto})^\osucc}$ \\
  $\kwT{ea}(\st, \kwT{assume}, \pc')$
    & $\semeq$ & 
    $\lbrace (\tup{\cmem, \sto, \arec, \pc \land \pc'}, \true)^\osucc \mid \pc \land \pc'~\sat) \rbrace$ \\
    $\kwT{ea}(\st, \kwT{uSym}, n)$
    & $\semeq$ &
     $\lbrace (\tup{\cmem, \sto', \arec', \pc}, \lst \loc)^\osucc \mid \alloc{\arec}{n}{\arec', \lst \loc}{\clocs} \rbrace $ \\
    $\kwT{ea}(\st, \kwT{iSym}, n)$
    & $\semeq$ &
     $\lbrace (\tup{\cmem, \sto', \arec, \pc}, \lst \gv)^\osucc \mid
                 \alloc{\arec}{n}{\arec', \lst \gv}{\mathsf{I}}
                 \rbrace $ 
              
\end{tabular}
\end{itemize}
\end{definition}

\begin{definition}[Concrete and Symbolic Compositional State
  Model]\label{def:csesm}
  Concrete compositional state models are obtained by having the value set and the set of interpreted symbols be the set of \gil values, $\vals$, and symbolic ones are obtained by having the value set be the set of \gil symbolic expressions,~$\sexps$, and the set of interpreted symbolic be the set of symbolic variables, $\lxs$. The expression evaluation function is defined in the standard way for both concrete and symbolic~ execution.
\end{definition}

\myparagraph{Discussion and Observations}
There are other  ways in which state composition could have been
defined. Our choice was enough for our use case, as our compositionality boundary is at the level of procedures, and procedures
specifications do not talk about program variables.

\subsection{Correctness: Forward Soundness and Backward Completeness}
\label{sec:sublift}

The forward soundness and backward completeness for compositional memory and state models follow  from the corresponding correctness results for their execution counterparts. In particular, only the following proposition is required.

\begin{propos}\label{cor:spec:simpl} 
Given a concrete and a symbolic compositional memory model, $\cmemory(\vals, \corePreds)$ and $\smemory(\sexps, \corePreds)$,
if the consumers and producers induced by $\corePreds$ satisfy Requirements~\ref{des:memact}/Requirements~\ref{req:mem:fs}, then the  induced getters and setters satisfy those requirements as well.
\end{propos}

\begin{proof}
The proof for backward completeness is trivial. The proof for forward soundness uses Proposition~\ref{prop:intermezzo} and backward completeness (see proof of Proposition~\ref{prodcons:mass} for more details).
\end{proof}

With this proposition in place, the remaining properties of the state actions and the \gil execution follow directly, as all of the actions of the compositional memory model (the consumers, producers, getters, and setters) meet the appropriate requirements.

\subsection{Correctness: Frame Preservation}
\label{cor:fp}

\subsubsection{Compositional Memory Models}
The following properties of consumers and producers can be derived from the properties given in Definition~\ref{def:smm}.

\begin{propos}
\label{prop:succexc}
Successful executions of consumers are frame-preserving.
\end{propos}
\begin{proof}
Assume  \hyp 1~$\faction \cmem {\fgetter{\corePred}} {\gv_i} {\pc} {\cmem'} {\gv_o} {\pc'} \osucc$, \hyp 2~$\sinv_{\!\pc \land \pc_f}(\cmem \stcomp \cmem_{f})$, and \hyp 3~$\pc' \land \pc_f~\sat$.
Then, from H1 and 3.10, we have \hyp 4~$\pc' \Leftrightarrow \pc \land (\cmem = \cmem' \stcomp \cPred{\gv_i \cdot \gv_o})$.
From H3 and H4, we have \hyp 5~$\pc \land \pc_f \land (\cmem = \cmem' \stcomp \cPred{\gv_i \cdot \gv_o})~\sat$. 
From H5 and properties of $\stcomp$, we have \hyp 6~$\pc \land \pc_f \land (\cmem \stcomp \cmem_f = (\cmem' \stcomp \cmem_f) \stcomp \cPred{\gv_i \cdot \gv_o})~\sat$. 
Finally, from H2, H6, and property 3.10, we obtain the desired $\faction {(\cmem \stcomp \cmem_f)} {\fgetter{\corePred}} {\gv_i} {\pc \land \pc_f} {{\cmem'} \stcomp \cmem_f} {\gv_o} {\pc' \land \pc_f} \osucc$.
\end{proof}

\begin{propos}[Producer-Consumer Invertibility]
\label{prop:cprinv}
$$
\begin{array}{l}
\faction{\cmem}{\fsetter{\corePred}}{\gv_i \cdot \gv_o}{\pc}{\cmem'}{\true}{\pc'}{\osucc} \implies 
\faction{\cmem'}{\fgetter{\corePred}}{\gv_i} {\pc'} \cmem {\gv_o} {\pc'} {\osucc}
\end{array}
$$
\end{propos}
\begin{proof}
Assume \hyp 1 $\faction{\cmem}{\fsetter{\corePred}}{\gv_i \cdot \gv_o}{\pc}{\cmem'}{\true}{\pc'}{\osucc}$. Then, by 3.8, we have that \hyp 2 $\cmem' = \cmem \stcomp \cPred{\gv_i \cdot \gv_o}$, and by 3.1, we have that \hyp 3 $\sinv_{\!\pc'}(\cmem')$. From H2, we have that \hyp 4~$\pc' \vdash \cmem' = \cmem \stcomp \cPred{\gv_i \cdot \gv_o}$, and from 3.10, H4 and H3, we obtain $\faction{\cmem'}{\fgetter{\corePred}}{\gv_i} {\pc'} \cmem {\gv_o} {\pc'} {\osucc}$.
\end{proof}

\begin{propos}[CPR Non-Duplicability]
\label{prop:cprnon}
$$
\begin{array}{l}
\faction \cmem {\fgetter{\corePred}} {\gv_i} {\pc} {\cmem'} {\gv_o} {\pc'} \osucc \implies  \faction {\cmem'} {\fgetter{\corePred}} {\gv_i} {\pc'} {\cmem'} {\false} {\pc'} \omiss  \\[1mm]
\faction \cmem {\fsetter{\corePred}} {\gv_i \cdot \gv_o} {\pc} {\cmem'} {\true} {\pc'} \osucc \implies \faction {\cmem'} {\fsetter{\corePred}} {\gv_i \cdot \gv_o} {\pc'} {\cmem'} {\false} {\pc'} \oerr \\[1mm]
\sinv_{\!\pc}(\cmem_1 \stcomp \cmem_2) \implies \pc \vdash \cPred{\gv_i \cdot -} \not\sqsubseteq \cmem_1 \vee \pc \vdash \cPred{\gv_i \cdot -}\not\sqsubseteq \cmem_2
\end{array}
$$
\end{propos}
\begin{proof}
For the first property, assume \hyp 1~$\faction \cmem {\fgetter{\corePred}} {\gv_i} {\pc} {\cmem'} {\gv_o} {\pc'} \osucc$. From H1, 3.11, and 3.8 we have \hyp 2~$\faction{\cmem'}{\fsetter{\corePred}}{\gv_i \cdot \gv_o'} {\pc'} {\cmem' \stcomp \cPred{\gv_i \cdot \gv_o}} \true {\pc'} {\osucc}$. From H2 and 3.9, we obtain the desired $\faction {\cmem'} {\fgetter{\corePred}} {\gv_i} {\pc'} {\cmem'} {\false} {\pc'} \omiss$.

For the second property, assume \hyp 1~$\faction \cmem {\fsetter{\corePred}} {\gv_i \cdot \gv_o} {\pc} {\cmem'} {\true} {\pc'} \osucc$. From H1 and Proposition~\ref{prop:cprinv}, we have \hyp 2~$\faction{\cmem'}{\fgetter{\corePred}}{\gv_i} {\pc'} \cmem {\gv_o} {\pc'} {\osucc}$. From H2 and 3.12, we obtain the desired $\faction {\cmem'} {\fsetter{\corePred}} {\gv_i \cdot \gv_o} {\pc'} {\cmem'} {\false} {\pc'} \oerr $.

The third property we prove by contradiction. Assume \hyp 1~$\sinv_{\pc}(\cmem)$, where $\pc \vdash \cmem = \cmem' \stcomp \cmem'' \stcomp \cPred{\gv_i \cdot \gv_o'} \stcomp \cPred{\gv_i \cdot \gv_o''}$, for some $\corePred$, $\cmem'$, $\cmem''$, $\gv_i$, $\gv_o'$, and $\gv_o''$. Then, given 3.10, we would be able to consecutively successfully apply the consumer for $\corePred$ on $\gv_i$, which is a contradiction with the second property of this proposition.
\end{proof}

\begin{propos}
\label{app:prop:comp}
The composition of a successful consumer followed by a producer for the same CPR is frame preserving.
\end{propos}
\begin{proof}
For the forward frame property, assume 
\hyp 1~$\faction{\cmem}{\fgetter{\corePred}}{\gv_i}\pc{\cmem'}{-}{\pc'}{\osucc}$, 
\hyp 2~$\faction{\cmem'}{\fsetter{\corePred}}{\gv_i \cdot \gv_o}{\pc'}{\cmem' \stcomp \cPred{\gv_i \cdot \gv_o}}{\true}{\pc'}{\osucc}$, 
\hyp 3~$\sinv_{\!\pc \land \pc_f}(\cmem \stcomp \cmem_{f})$, and
\hyp 4~$\pc' \land \pc_f~\sat$. 
By frame preservation of successful consumers applied to H1, H3 and H4, we obtain \hyp 5~$\faction{(\cmem \stcomp \cmem_f)}{\fgetter{\corePred}}{\gv_i}{\pc \land \pc_f}{\cmem' \stcomp \cmem_f}{-}{\pc' \land \pc_f}{\osucc}$. Then, from H5 and property 3.11, we have \hyp 6~$\faction{(\cmem' \stcomp \cmem_f)}{\fsetter{\corePred}}{\gv_i \cdot \gv_o} {\pc' \land \pc_f} {\cmem' \stcomp \cmem_f \stcomp \cPred{\gv_i \cdot \gv_o}} \true {\pc' \land \pc_f} {\osucc}$, which, together with H5, gives the desired property.
\end{proof}

\begin{lemma}
For a given compositional memory model, if the execution of consumers and producers satisfies the properties 3.1-3.14 of Definition~\ref{def:emm}, then the corresponding getters and setters satisfy properties 3.1-3.6 and their non-missing executions are frame-preserving. 
\end{lemma}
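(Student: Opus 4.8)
The plan is to dispatch the two conclusions separately. The first one---that getters and setters satisfy properties 3.1--3.6---is the claim already announced when the getters and setters were introduced, and I would establish it by observing that the outcome set of a getter (resp.\ setter) is in bijection with that of its underlying consumer: each non-successful consumer outcome lifts through the \textsc{Non-Success} rule leaving the memory untouched (giving 3.6), while each successful consumer outcome is extended by a single, non-branching producer outcome (by 3.11/3.12 together with the disjointness 3.4 and monotonicity 3.3 of the producer, which force uniqueness of the full-context outcome). Hence the contexts of the getter/setter outcomes coincide with the consumer's, so satisfiability and monotonicity (3.3), separate disjointness (3.4), and full coverage (3.5) are inherited verbatim, while well-formedness preservation (3.1--3.2) follows by applying 3.1--3.2 to the consumer and producer steps in turn.

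For the second conclusion I would prove, for every $\act \in \actions$, the frame-preservation statement
$$
\faction \cmem {\act} \gv {\pc} {\cmem'} {\gv'} {\pc'} \result \gand \result \neq \omiss \gand \sinv_{\!\pc \land \pc_f}(\cmem \stcomp \cmem_f) \gand (\pc' \land \pc_f~\sat) \implies \faction {(\cmem \stcomp \cmem_f)} {\act} \gv {\pc \land \pc_f} {\cmem' \stcomp \cmem_f} {\gv'} {\pc' \land \pc_f} \result,
$$
by case analysis on the rule that fired. The $\omiss$ outcomes are excluded by hypothesis. The two $\oerr$ cases (getter and setter) both stem from an erroneous consumer through the \textsc{Non-Success} rules, so they reduce to property 3.13 applied to that consumer, followed by re-applying the \textsc{Non-Success} rule. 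The \textsc{Getter: Success} case is exactly Proposition~\ref{app:prop:comp}, a successful getter being the composition of a successful consumer with a producer that re-adds the \emph{same} CPR.

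The main obstacle is the \textsc{Setter: Success} case, in which the producer re-adds the CPR with a \emph{new} out-value $\gv_o$, so Proposition~\ref{app:prop:comp} no longer applies directly. I would argue in three steps: (i) frame the consuming consumer using Proposition~\ref{prop:succexc} to obtain $\faction {(\cmem \stcomp \cmem_f)} {\fgetter{\corePred}} {\gv_i} {\pc \land \pc_f} {\cmem' \stcomp \cmem_f} {-} {\pc' \land \pc_f} {\osucc}$; (ii) apply the first clause of Proposition~\ref{prop:cprnon} to this framed consumer to conclude that the consumer now \emph{misses} on $\cmem' \stcomp \cmem_f$ in context $\pc' \land \pc_f$---this is the crucial step, as it certifies that $\cPred{\gv_i \cdot \gv_o}$ is absent from $\cmem' \stcomp \cmem_f$ and hence cannot be duplicated; (iii) feed this miss into the backward direction of property 3.12 to obtain the framed producer $\faction {(\cmem' \stcomp \cmem_f)} {\fsetter{\corePred}} {\gv_i \cdot \gv_o} {\pc' \land \pc_f} {(\cmem' \stcomp \cmem_f) \stcomp \cPred{\gv_i \cdot \gv_o}} {\true} {\pc' \land \pc_f} {\osucc}$ for the new out-value (3.12 being universally quantified over the produced out-value). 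Since $(\cmem' \stcomp \cmem_f) \stcomp \cPred{\gv_i \cdot \gv_o} = \cmem'' \stcomp \cmem_f$, recombining (i) and (iii) through the \textsc{Setter: Success} rule yields the framed setter, completing the case. The subtlety that makes this the hard case is that a consumer \emph{miss} is not frame-preserving in general---adding frame may supply the missing resource and turn a miss into a success---so the soundness of step (iii) rests entirely on the post-consumption miss furnished by Proposition~\ref{prop:cprnon}, which pins the frame down to \emph{not} containing the core-predicate resource.
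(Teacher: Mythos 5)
Your proof is correct, and apart from one case it follows the same route as the paper, whose (one-line) proof dispatches properties 3.1--3.6 as trivial and derives frame preservation from Proposition~\ref{app:prop:comp} (success cases) together with property 3.13 for the error cases, which you handle identically. The genuine divergence is \textsc{Setter: Success}: you judged that Proposition~\ref{app:prop:comp} ``no longer applies directly'' because the producer re-adds the CPR with a new out-value, but in the paper it applies verbatim. Its hypotheses are a consumer whose out-value is discarded, $\faction{\cmem}{\fgetter{\corePred}}{\gv_i}{\pc}{\cmem'}{-}{\pc'}{\osucc}$, followed by a producer for an \emph{arbitrary} $\gv_o$, and its proof goes through precisely because property 3.11 is quantified over the re-produced out-parameters (the $\gv_o'$ in its statement): frame the consumer by Proposition~\ref{prop:succexc}, then apply 3.11 to the framed consumer --- the same two steps cover getter and setter alike. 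Your substitute for that second step --- extracting a post-consumption \emph{miss} on $\cmem' \stcomp \cmem_f$ from the first clause of Proposition~\ref{prop:cprnon} and feeding it into the right-to-left direction of property 3.12 --- is sound: the contexts chain correctly (the producer leaves $\pc' \land \pc_f$ unchanged) and PCM associativity gives $(\cmem' \stcomp \cmem_f) \stcomp \cPred{\gv_i \cdot \gv_o} = \cmem'' \stcomp \cmem_f$. It is more roundabout than one application of 3.11, but it makes explicit what the paper leaves implicit: a certificate that the frame cannot contain the consumed CPR, so re-production cannot duplicate resource. Your closing observation --- that a bare consumer miss is not frame-preserving, so the argument must route through the post-consumption miss --- is exactly the right subtlety to flag.
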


\begin{proof}
Properties 3.1-3.3 are proven trivially, and the frame preservation comes directly from Proposition~\ref{app:prop:comp} and property 3.11.
\end{proof}

The frame preservation property, in particular, means that getters and setters can also be used in compositional \gil execution.
Finally, it is straightforwardly shown that frame preservation holds for various forms of memory action composition:
\begin{propos}[Frame-Preservation and Composition]
\label{prop:fpcomp}
If two memory actions, $\act_1$ and $\act_2$, are frame-preserving for non-missing executions, then so is their composition, $\acomp{\act_1}{\act_2}$ defined by:
$$
\small
\begin{array}{r@{~}c@{~}l}
\faction{\cmem}{(\acomp{\act_1}{\act_2})}{\gv_1}{\pc}{\cmem'}{\gv'}{\pc'}\result & \iff &
 (\faction{\cmem}{\act_1}{\gv}\pc{\cmem'}{\gv'}{\pc'}\result \land r \neq \osucc)~\lor \\ &&
 (\exists \cmem'', \gv'', \pc'' . \, \faction{\cmem}{\act_1}{\gv}\pc{\cmem''}{\gv''}{\pc''}\osucc
   \ \wedge \ 
   \faction{\cmem''}{\act_2}{\gv_2}{\pc''}{\cmem'}{\gv'}{\pc'}\result) 
\end{array}
$$
\end{propos}

\begin{proof}
The error cases are trivial due to property 3.11. We consider the success cases only:
\begin{proofpf}
     \pflongnumbers
 \step{1}{
      \assume{
        \begin{pfenum}
          \item (H1) $\faction{\cmem}{(\acomp{\act_1}{\act_2})}{\gv_1}{\pc}{\cmem'}{\gv'}{\pc'}\result$
          \item (H2) $\sinv_{\pc \land \pc_f}(\cmem \stcomp \cmem_{f})$ 
          \item (H3) $\pc' \land \pc_f$~\sat       
         \end{pfenum}
      }
    }
      \step{2}{$ \exists \cmem'', \gv'', \pc''$ [From Action Composition on H1]}
        \begin{pfenum*}
           \item $\faction{\cmem}{\act_1}{\gv}{\pc}{\cmem''}{\gv''}{\pc''}\osucc$  \label{1m1}
           \item $ \faction{\cmem''}{\act_2}{\gv_2}{\pc''}{\cmem'}{\gv'}{\pc'}\result$ \label{1m2}
      \end{pfenum*}
       \step{3}{$\pc'' \land \pc_f~\sat$ [by H3 and monotonicity of contexts]}
      \step{4}{$\faction{(\cmem \stcomp \cmem_f)}{\act_1}{\gv}{\pc \land \pc_f}{\cmem'' \stcomp \cmem_f}{\gv_1}{\pc'' \land \pc_f}{\result}$ [by frame applied to 2.1, H2, and 3]}
            \step{3}{$\sinv_{\pc'' \land \pc_f}(\cmem'' \stcomp \cmem_f)$ [by 4]}

      \step{6}{$\faction{(\cmem'' \stcomp \cmem_f)}{\act_2}{\gv_2}{\pc'' \land \pc_f}{\cmem' \stcomp \cmem_f}{\gv'}{\pc' \land \pc_f}{\result}$ [by frame applied to 2.2, 5, and H3]}
      \step{7}{$\faction{(\cmem \stcomp \cmem_f)}{(\acomp{\act_1}{\act_2})}{\gv}{\pc \land \pc_f} {\cmem' \stcomp \cmem_f}{\gv'}{\pc' \land \pc_f}\result$ [by 4 and 6]}
          \end{proofpf} 
\end{proof}

The point of this property is that as long as the memories and the contexts are chained properly, then frame preservation holds regardless of how the passed-in values are obtained.

\subsubsection{Renamings} 
\newcommand{\Rnm}{\aleph}
\newcommand{\rnm}{\aleph}

We introduce renamings of memory locations and symbolic variables, to be used for the formulation of frame preservation of state actions and the \gil execution.

\begin{definition}
A \emph{renaming}, $\rnm : \mathcal{L} \uplus \lxs \tmap \mathcal{L} \uplus \lxs$ is a partial injective function that maps memory locations to memory locations and symbolic variables to symbolic variables.
\end{definition}

We extend the domain of renamings, similarly to interpretations, to all syntactic categories, including values, symbolic expressions, memories, allocators, states, and configurations in the standard way.
We require that the memory action reasoning is independent of the choice of memory locations or symbolic variables:
\begin{desiderata}[Memory Actions: Symbol Renaming]\label{des:memactsr}
Given a compositional concrete or symbolic memory model $\cmemory(\gval, \corePreds) = \tup{\sset{\cmemory}, \sinv, \cea_\corePreds}$ and a renaming $\rnm$, it must hold that:
\begin{mathpar}
\inferrule[]{}{
          {\begin{array}{l}
\faction {\cmem} \act {\gv} {\pc} {\cmem'} {\gv'} {\pc'} \result \gand \cmem, \gv, \pc \in \domain(\rnm) \implies \faction {\Rnm(\smem)} \act {\Rnm(\sexp)} {\Rnm{(\pc)}} {\Rnm(\smem')} {\Rnm(\sexp')} {\Rnm(\pc')} \result
 \end{array}}}
\end{mathpar}
\end{desiderata}
\noindent
from which we can prove the analogous property for state actions and the \gil execution, where all the symbols allocated in the final state must be in the domain of the renaming:
$$
\begin{array}{r@{~\implies~}l}
\st.\act(\gv) \rightsquigarrow (\st', \gv')^\result \gand \sst' \in \domain(\rnm) & \rnm(\st).\act(\rnm(\gv)) \rightsquigarrow (\rnm(\st'), \rnm(\gv'))^\result \\[1mm]
\prog \vdash \cf \rightsquigarrow^n \cf' \gand \cf'' \in \domain(\rnm) & \prog \vdash \rnm(\cf) \rightsquigarrow^n \rnm(\cf')
\end{array}
$$

\subsubsection{Compositional State Models}

\begin{proposition}[State Properties]\label{prop:ssm}
  A compositional concrete or symbolic state model, $\gstate =
\langle\sset{\gstate}, \ceval{~}{}, \kwT{ea} \rangle$,
satisfies the following properties:
\begin{enumerate}[leftmargin=*,label=\arabic*)]
\item $|\gstate| =   (|\gstate|, \stcomp,
  \stzero)$ is a PCM;

\item the state action execution function,  $\kwT{ea} :
  \sset{\gstate} \times (\mactions \uplus \sactions) \times \gval
  \pmap \power{\sset{\gstate} \times \gval \times \results}$,
  for
 $\fcolactionx \sigma \act \gv {} {\left\{ (\sigma_i,
     \gv_i)^{\result_i}|_{i \in I} \right\}}$, 
     satisfies properties 1-4 of Proposition~\ref{prop:esm}, as well as the following properties:
  \begin{enumerate}[left=0cm,label=\arabic*)]
  \setcounter{enumi}{4}
   \item non-missing executions of consumers, getters, and setters, as well as erroneous executions of producers and all executions of  actions in $\sactions$ are frame-preserving, with the latter given below formally, for $\act \in \sactions$: 
$$
\begin{array}{l}
\st.\act(\gv) \rightsquigarrow (\st', \gv')^\result \gand
(\st \stcomp \st_{f}) \text{ defined} \gand (\access{\Rnm(\st')}{ctx} \land \access{\st_f}{ctx}~\sat)
\\[0.5mm] \qquad 
\implies (\st \stcomp \st_{f}).\act(\Rnm(\gv)) \rightsquigarrow (\Rnm(\st') \stcomp \st_{f}, \Rnm(\gv'))^\result
\end{array}
$$
  where $\Rnm$ renames any symbols allocated by $\act$ that clash with those in $\sst_f$ to fresh symbols and keeps the other ones (including the fresh ones introduced by itself) the same.
  \end{enumerate}
\end{enumerate}
\end{proposition}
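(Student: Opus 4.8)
The plan is to prove the two parts separately, reducing each to facts already established for the memory PCM, for execution state models (Proposition~\ref{prop:esm}), and for the memory-level frame properties of Section~\ref{cor:fp}. For Part~1 I would verify the PCM axioms componentwise. Commutativity of $\stcomp$ follows from commutativity of memory composition, of allocator composition (which is union-based, hence commutative) and of context conjunction, together with the symmetric ``vice versa with respect to stores'' clause of Definition~\ref{def:csm}, which forces exactly one of the two stores to be empty and takes the other as the composite. For the unit $\stzero = \tup{\stzero, \emptyset, \emptyset, \true}$, the identities $\stzero \stcomp \cmem = \cmem$, $\emptyset \stcomp \arec = \arec$ and $\true \land \pc = \pc$ give $\stzero \stcomp \st = \st$, the well-formedness side-condition reducing to $\sinv_{\!\pc}(\cmem)$, which holds since $\st \in \sset{\gstate}$. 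The only delicate point is associativity: the final composite is bracketing-independent by associativity of the component operations, but the \emph{intermediate} definedness conditions must be reconciled. I would do this using well-formedness monotonicity (property~2.2 of Definition~\ref{def:emm}) to pass from an intermediate context such as $\pc_1 \land \pc_2$ to the full $\pc_1 \land \pc_2 \land \pc_3$, and property~3.3 of Definition~\ref{def:smm} to recover well-formedness of sub-memories, so that both bracketings are defined exactly when the triple composite is.

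For Part~2, properties~1--4 are inherited essentially verbatim from Proposition~\ref{prop:esm}: the state action execution function is defined identically to the execution case, and the consumers and producers satisfy properties~3.1--3.6 by Definition~\ref{def:smm} while the getters and setters do so as noted when they were introduced, so the same arguments apply unchanged. The substance is property~5, which I would prove by case analysis on $\act$. The memory-action cases (non-missing consumers, getters, setters, and erroneous producers) allocate no symbols by Requirements~\ref{req:memsymvar}, so there $\Rnm$ is the identity and the claim is a direct lift of the memory-level frame properties: definedness of $\st \stcomp \st_f$ supplies the hypothesis $\sinv_{\!\pc \land \pc_f}(\cmem \stcomp \cmem_f)$, the satisfiability assumption supplies $\pc' \land \pc_f~\sat$, and the store and allocator ride along unchanged. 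I then invoke Proposition~\ref{prop:succexc} and property~3.11 for consumers and producers, and the getter/setter frame lemma of Section~\ref{cor:fp} for getters and setters, with well-formedness of the output state following from property~3.2 of Definition~\ref{def:emm}.

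The basic actions $\kwT{setVar}_{\x}$, $\kwT{setStore}$, $\kwT{getStore}$ and $\kwT{eval}_{\e}$ affect at most the store and leave the memory, allocator and context untouched; taking, w.l.o.g., the frame $\st_f$ to carry the empty store (procedure frames do not mention program variables), the update commutes with composition and $\Rnm$ is the identity. For $\kwT{assume}$, only the context is strengthened, and the framed context $\pc \land \pc' \land \pc_f$ is satisfiable by the hypothesis and well-formed by monotonicity (property~2.2), again with $\Rnm$ the identity.

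The genuinely non-trivial cases are $\kwT{uSym}$ and $\kwT{iSym}$, and I expect these to be the main obstacle. Here the action allocates symbols $\lst\loc$ fresh relative to $\arec$, but in $\st \stcomp \st_f$ the allocator is $\arec \stcomp \arec_f$, which also records the frame's symbols, so a clash is possible. The plan is to let $\Rnm$ rename exactly those allocated symbols clashing with $\arec_f$ to symbols fresh for $\arec \stcomp \arec_f$, leaving all others (including non-clashing fresh ones) fixed, and then exploit the nondeterminism of $\mathsf{fresh}$: since the freshness relation may return \emph{any} tuple fresh with respect to its argument, the allocation on $\arec \stcomp \arec_f$ can be chosen to return precisely $\Rnm(\lst\loc)$. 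The remaining work is to check, from the definition of allocator composition in Definition~\ref{def:csm}, that the resulting allocation record is $\Rnm(\arec') \stcomp \arec_f$, so that the post-state is $\Rnm(\st') \stcomp \st_f$ as required. This is the step that must be carried out carefully, as it is precisely where the ``up to renaming of allocated symbols'' qualification of compositional frame preservation originates; I anticipate leaning on the symbol-renaming principle for memory actions (Requirements~\ref{des:memactsr}) and its stated lifting to state actions to discharge the bookkeeping uniformly.
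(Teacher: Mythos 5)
Your proposal takes essentially the same route as the paper's proof: the PCM part is reduced to the underlying memory PCM together with commutativity/associativity of allocator composition, memory actions inherit frame preservation from the memory-level lemmas with $\Rnm$ the identity, store- and expression-manipulating actions are frame-preserving because they are untouched by the frame, and the crux --- $\kwT{uSym}$/$\kwT{iSym}$ when the allocated symbol clashes with the frame's allocator --- is handled exactly as in the paper, by renaming the clashing symbol to one fresh for $\arec \stcomp \arec_f$ and exploiting the nondeterminism of allocation so that the composed allocator returns $\Rnm(\lst\loc)$ with resulting record $\Rnm(\arec') \stcomp \arec_f$. Your additional care over associativity definedness and the explicit appeal to the symbol-renaming requirement are finer-grained than the paper's ``trivial'' dismissal but do not constitute a different approach.
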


\begin{proof}
The proof that $(|\gstate|, \stcomp, \stzero)$ is a PCM is trivial, because of the PCM on the underlying memories and the commutativity and associativity of the allocator composition. The proof for frame preservation for memory actions comes from the frame preservation on memories, and the store and expression manipulating functions are automatically frame-preserving because they are unaffected by frame or the renaming. Note that in the case of single actions, $\rnm(\gv)$ equals $\gv$, but the more general approach is needed in order for the state action composition to work. We show the case for single uninterpreted symbol allocation, when the allocator of the frame contains the allocated symbol; the remaining cases are analogous:

\medskip
\case{\prooflab{uSym}}
\begin{proofpf}
  \step{s1}{
  \begin{pfenum}
  \item (H1) $\st.\kwT{uSym}(1) \rightsquigarrow (\st', \gv')^\result$
  \item (H2) $(\st \stcomp \st_{f}) \text{ defined}$ 
  \item (H3) $(\access{\Rnm(\st')}{ctx} \land \access{\st_f}{ctx}~\sat)$
  \end{pfenum}
  }
  \pflongnumbers
  \step{s2} {[by the hypotheses]
    \begin{pfenum}
  	  \item $\st = \tup{\cmem, \sto, \arec, \pc}$
	  \item $\alloc{\arec}{1}{\arec', [ \loc ]}{\clocs}$
  	  \item $\st' = \tup{\cmem, \sto, \arec', \pc}$
	  \item $\gv' = [ \loc ]$ 
	  \item $\result = \osucc$
    \end{pfenum} }
 \step{s3} {[by the hypotheses]
    \begin{pfenum}
      \item $\st_f = \tup{\cmem_f, \emptyset, \arec_f, \pc_f}$
  	  \item $\st \stcomp \st_f = \tup{\cmem \stcomp \cmem_f, \sto, \arec \stcomp \arec_f, \pc \land \pc_f}$
	  \item $\loc \in \arec(\clocs)$
	  \item $\rnm = \{ \loc \mapsto \loc' \}$, with $\loc' \notin (\arec \stcomp \arec_f)(\clocs)$
	  \end{pfenum}}
	  
 \step{s4} {$\alloc{(\arec \stcomp \arec_f)}{1}{\Rnm(\arec') \stcomp \arec_f, [ \loc' ]}{\clocs}$ [this is possible by non-determinism of allocation]}
 \step{s5} {$\rnm(\st') = \tup{\smem, \ssto, \rnm(\sarec'), \pc}$ [since $\loc$ is fresh w.r.t. $\st'$]}
 \step{s6} {$(\st \stcomp \st_{f}).\act(\gv) \rightsquigarrow (\Rnm(\st') \stcomp \st_{f}, \Rnm(\gv'))^\result$ [by 3.2, 4 and 5 and the definition of the $\kwT{uSym}$ action]}
\end{proofpf}
\end{proof}

We conclude by proving the following lemma:

\begin{lemma} 
\label{lem:fpfp}
The composition of two frame-preserving state actions is frame-preserving.
\end{lemma}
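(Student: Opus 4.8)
The plan is to mirror the frame-preservation-and-composition argument already carried out for memory actions (Proposition~\ref{prop:fpcomp}), but lifted to the state level and threading the renamings through. I treat both composition operators of Definition~\ref{def:act:comp}, spelling out $\acomp{\act_1}{\act_2}$ in detail, since $\lstacomp{\act_1}{\act_2}$ is analogous and in fact simpler. The cases in which $\act_1$ does not succeed are immediate: the composition collapses to $\act_1$ alone, so frame preservation of the composition reduces to that of $\act_1$. I therefore concentrate on the successful case.

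Assume $\action{\st}{(\acomp{\act_1}{\act_2})}{\gv}{(\st', \gv')^\result}$, that $\st \stcomp \st_f$ is defined, and that $\access{\Rnm(\st')}{ctx} \land \access{\st_f}{ctx}~\sat$, where $\Rnm$ renames the symbols allocated by the composition that clash with $\st_f$. Unfolding the composition yields $\st''$, $\gv''$ with (A)~$\action{\st}{\act_1}{\gv}{(\st'', \gv'')^\osucc}$ and (B)~$\action{\st''}{\act_2}{\gv''}{(\st', \gv')^\result}$. First I apply frame preservation of $\act_1$ to~(A) with frame $\st_f$, obtaining a renaming $\Rnm_1$ (acting on the symbols allocated by $\act_1$) and $\action{(\st \stcomp \st_f)}{\act_1}{\Rnm_1(\gv)}{(\Rnm_1(\st'') \stcomp \st_f, \Rnm_1(\gv''))^\osucc}$. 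Next I use the symbol-renaming property for state actions (the state-level analogue of Requirements~\ref{des:memactsr}) to rename~(B) by $\Rnm_1$, giving $\action{\Rnm_1(\st'')}{\act_2}{\Rnm_1(\gv'')}{(\Rnm_1(\st'), \Rnm_1(\gv'))^\result}$, and then apply frame preservation of $\act_2$ to it with frame $\st_f$, obtaining a renaming $\Rnm_2$ (acting on the symbols allocated by $\act_2$) and $\action{(\Rnm_1(\st'') \stcomp \st_f)}{\act_2}{\Rnm_2(\Rnm_1(\gv''))}{(\Rnm_2(\Rnm_1(\st')) \stcomp \st_f, \Rnm_2(\Rnm_1(\gv')))^\result}$. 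Setting $\Rnm = \Rnm_2 \circ \Rnm_1$ and chaining the two executions through the composition rule yields exactly the goal.

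The crux of the chaining is a disjointness observation: $\Rnm_2$ only touches symbols allocated by $\act_2$, which are fresh with respect to $\st''$ and hence disjoint from everything $\act_1$ produced. Consequently $\Rnm_2$ is the identity on $\Rnm_1(\gv'')$, so the input $\Rnm_2(\Rnm_1(\gv''))$ of the second step coincides with the output $\Rnm_1(\gv'')$ of the first, and the two executions compose. The same observation gives $\Rnm(\gv) = \Rnm_1(\gv)$ (the overall input contains no allocated symbols) and shows that $\Rnm = \Rnm_2 \circ \Rnm_1$ renames precisely the clashing symbols allocated by the whole composition, as required. For $\lstacomp{\act_1}{\act_2}$ the argument is identical, with the mild simplification that the input of $\act_2$ is a component $\gv_2$ of the original input list rather than the output of $\act_1$, so both $\Rnm_1$ and $\Rnm_2$ are already the identity on it.

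The step I expect to be the main obstacle is discharging the two satisfiability side conditions needed to invoke frame preservation of $\act_1$ and of $\act_2$. For $\act_2$ the condition is $\access{\Rnm_2(\Rnm_1(\st'))}{ctx} \land \access{\st_f}{ctx} = \access{\Rnm(\st')}{ctx} \land \access{\st_f}{ctx}~\sat$, which is exactly the hypothesis. For $\act_1$ the condition $\access{\Rnm_1(\st'')}{ctx} \land \access{\st_f}{ctx}~\sat$ I obtain from monotonicity of contexts (cf.\ Proposition~\ref{prop:esm}): the $\act_2$ step in~(B) gives $\access{\st'}{ctx} \Rightarrow \access{\st''}{ctx}$, and applying the injective renaming $\Rnm$ preserves this implication; since $\Rnm_2$ is the identity on $\st''$ (no symbol of $\act_2$ occurs there) we have $\access{\Rnm(\st'')}{ctx} = \access{\Rnm_1(\st'')}{ctx}$, so any model witnessing $\access{\Rnm(\st')}{ctx} \land \access{\st_f}{ctx}$ also witnesses $\access{\Rnm_1(\st'')}{ctx} \land \access{\st_f}{ctx}$. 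Care is also needed to check that each renaming is defined on the states to which it is applied (renamings are extended by the identity outside the finite set of symbols they rename, so this is routine) and that the fresh targets chosen by $\Rnm_1$ and $\Rnm_2$ do not collide, which follows by choosing them fresh with respect to $\st \stcomp \st_f$ throughout.
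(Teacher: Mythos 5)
Your proof is correct and follows essentially the same route as the paper's: decompose the composition, apply frame preservation of $\act_1$, transfer the second execution along the renaming via the state-level analogue of Requirements~\ref{des:memactsr}, apply frame preservation of $\act_2$, recompose, and discharge the side condition for $\act_1$ by monotonicity of contexts. The only difference is presentational: you factor the renaming as $\Rnm_2 \circ \Rnm_1$ with an explicit disjointness argument, whereas the paper works throughout with the single $\Rnm$ from the statement and appeals to its idempotence on the fresh symbols it introduces.
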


\begin{proof}
The proof is analogous to the proof of Proposition~\ref{prop:fpcomp}, with states instead of memories and the value passing specified by the two state compositions. The error cases for both are trivial. We consider the success case of $\acomp{\act_1}{\act_2}$, the other case is proven analogously:
\begin{proofpf}
     \pflongnumbers
 \step{1}{
      \assume{
        \begin{pfenum}
          \item (H1) $\faction{\st}{(\acomp{\act_1}{\act_2})}{\gv}{}{\st'}{\gv'}{}\result$
          \item (H2) $\st \stcomp \st_f \text{~defined}$ 
          \item (H3) $\access{\Rnm(\st')}{ctx} \land \access{\st_f}{ctx}~\sat$
        \end{pfenum}
      }
    }
      \step{2}{$ \exists \st'', \gv''$ [From Action Composition on H1]}
        \begin{pfenum*}
           \item $\faction{\st}{\act_1}{\gv}{}{\st''}{\gv''}{}\osucc$  \label{1m1}
           \item $\faction{\st''}{\act_2}{\gv''}{}{\st'}{\gv'}{}\result$ \label{1m2}
      \end{pfenum*}
       \step{3}{$\access{\Rnm(\st'')}{ctx} \land \access{\st_f}{ctx}~\sat$ [by H3 and monotonicity of contexts]}
      \step{4}{$\faction{(\st \stcomp \st_f)}{\act_1}{\rnm(\gv)}{}{\rnm(\st'') \stcomp \st_f}{\rnm(\gv'')}{}{\result}$ [by frame applied to 2.1, H2, and 3]}
            \step{3}{$\rnm(\st'') \stcomp \st_f \text{~defined}$ [by 4]}
            \step{4}{$\faction{\rnm(\st'')}{\act_2}{\rnm(\gv'')}{}{\rnm(\st')}{\rnm(\gv')}{}\result$ [by properties of renamings and 2.2]}

      \step{6}{$\faction{(\rnm(\st'') \stcomp \st_f)}{\act_2}{\rnm(\gv'')}{}{\rnm(\st') \stcomp \st_f}{\gv'}{}{\result}$ [by frame applied to 6, 5, and H3, noting idempotence of $\rnm$ as defined in the statement of the lemma]}
      \step{7}{$\faction{(\st \stcomp \st_f)}{(\acomp{\act_1}{\act_2})}{\gv}{} {\rnm(\st') \stcomp \st_f}{\gv'}{}\result$ [by 4 and 6]}
          \end{proofpf} 
\end{proof}

\subsubsection{Producers and Frame Preservation}
\label{sec:prodfp}

Producers, in general, are not frame preserving. However, they are only intended to be used to create new memory using symbols freshly allocated by the allocator, which are guaranteed to be fresh with respect to any frame, meaning that their specific use case is frame-preserving. To be able to include them in the frame preservation statement, however, the statement itself needs to be relaxed, as follows, for memory and state actions:
$$
{\begin{array}{l}
\faction \cmem {\act} \gv {\pc} {\cmem'} {\gv'} {\pc'} \result 
 \gand \sinv_{\!\pc \land \pc_f}(\cmem \stcomp \cmem_{f}) \gand (\pc' \land \pc_f~\sat)
\\ \qquad
 \implies 
\exists \pc''.~\faction {(\cmem \stcomp \cmem_f)} {\act} \gv {\pc \land \pc_f} {{\cmem'} \stcomp \cmem_f} {\gv'} {\pc' \land \pc'' \land \pc_f} \result \\[2mm]
\st.\act(\gv) \rightsquigarrow (\st', \gv')^\result \gand
(\st \stcomp \st_{f}) \text{ defined} \gand (\access{\Rnm(\st')}{ctx} \land \access{\st_f}{ctx}~\sat)
\\[0.5mm] \qquad 
\implies \exists \pc''.~(\st \stcomp \st_{f}).\act(\Rnm(\gv)) \rightsquigarrow (\Rnm(\st' \stcomp \pc'') \stcomp \st_{f}, \Rnm(\gv'))^\result
\end{array}}
$$
where $\pc''$ captures the separation constraints between the freshly allocated memory and the rest, and the notation $\st \stcomp \pc$ denotes $\st \stcomp \tup{\emptyset, \emptyset, \emptyset, \pc}$. Another issue is that memory and state action composition, defined like this, does not maintain frame preservation in general, because the $\pc''$ could take the execution away from the intended path. In particular, step 3 of the proofs of Lemma~\ref{prop:fpcomp} and Lemma~\ref{lem:fpfp} need not hold. However, again, in the specific case of producers, the introduced separation constraints cannot affect the path of the execution, since they do not introduce any connection between already present symbols, and the composition of two frame-preserving state actions remains frame-preserving. All of this means that the following requirement has to be met:

\begin{desiderata}[Frame Preservation: Producers in \gil Execution]
\label{des:prod}
All use cases of producers in the analysed \gil code must be frame-preserving and composition-preserving.
\end{desiderata}

\subsubsection{\gil Execution}

Given $\cf = \tup{\st, \cs, i}^o$, let $\cf \stcomp \st_{f}$ denote $\tup{\st \stcomp \st_f, \cs, i}^o$and let $\cf \stcomp \pc$ denote $\tup{\st \stcomp \tup{\emptyset, \emptyset, \emptyset, \pc}, \cs, i}^o$. We are now able to state the frame preservation property of the \gil execution:
\begin{theorem}[Frame Preservation: \gil Execution] 
\label{thm:exfp} Given a compositional concrete or symbolic state model, any corresponding non-missing \gil execution that satisfies Proposition~\ref{des:prod} is frame-preserving up to a renaming of allocated symbols:
$$
{\begin{array}{l}
\prog \vdash \cf \semarrow^n_C \cf' \gand
(\cf \stcomp \st_{f}) \text{ defined} \gand (\access{\access{\Rnm(\cf')}{st}}{ctx} \land \access{\st_f}{ctx}~\sat)
\\[0.5mm] \qquad 
\implies \exists \pc''.~\prog \vdash (\cf \stcomp \st_{f}) \semarrow^n_C (\Rnm(\cf' \stcomp \pc'') \stcomp \st_{f})
\end{array}}
$$
\end{theorem}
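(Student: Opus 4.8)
The plan is to establish frame preservation first for a single \gil step and then lift it to the $n$-step execution by induction on $n$, mirroring the structure already used for action composition in Proposition~\ref{prop:fpcomp} and Lemma~\ref{lem:fpfp}. The crucial structural observation is that every rule of the \gil single-trace semantics (Figure~\ref{fig:gilrules}) drives the state forward by executing exactly one state action or one of the two state-action compositions of Definition~\ref{def:act:comp}, while the remaining manipulations — advancing the index, pushing and popping stack frames, and recording the outcome — touch neither the memory nor the allocator and are therefore unaffected by the frame $\st_f$ and by the renaming $\Rnm$.

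First I would prove the one-step statement
$$
\prog \vdash \cf \semarrow_c \cf' \gand (\cf \stcomp \st_f)\ \text{defined} \gand (\access{\access{\Rnm(\cf')}{st}}{ctx} \land \access{\st_f}{ctx}~\sat) \implies \exists \pc''.~\prog \vdash (\cf \stcomp \st_f) \semarrow_c (\Rnm(\cf' \stcomp \pc'') \stcomp \st_f)
$$
by case analysis on $c$. For every rule other than those exercising a producer, the underlying state action (or composition) is frame-preserving by the frame-preservation clause of Proposition~\ref{prop:ssm} together with Lemma~\ref{lem:fpfp}, so the conclusion follows with $\pc'' = \true$ and $\Rnm$ renaming only the symbols freshly allocated by $\kwT{uSym}$/$\kwT{iSym}$ that clash with $\st_f$. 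The only delicate rule is one issuing a producer, which is not frame-preserving in general; here I appeal to Requirements~\ref{des:prod}, which guarantee that the producer occurrences arising in the analysed code are frame- and composition-preserving, at the cost of the extra separation constraint $\pc''$ introduced in the relaxed statement of \S\ref{sec:prodfp}.

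For the multi-step case I would induct on $n$. The base case $n = 0$ is immediate with $\Rnm$ the identity and $\pc'' = \true$. For $n = k+1$, I decompose $\prog \vdash \cf \semarrow_c \cf'' \semarrow^k_{C'} \cf'$. Since by monotonicity of contexts the context of $\cf''$ is weaker than that of $\cf'$, and renamings are bijective on constraints hence satisfiability-preserving, the satisfiability hypothesis transfers from $\cf'$ to $\cf''$; this is the analogue of step~3 in the proofs of Proposition~\ref{prop:fpcomp} and Lemma~\ref{lem:fpfp}. Applying one-step frame preservation to the first step yields $\Rnm_1$ and $\pc_1''$ with $(\cf \stcomp \st_f) \semarrow_c (\Rnm_1(\cf'' \stcomp \pc_1'') \stcomp \st_f)$. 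I then transport the tail $\cf'' \semarrow^k_{C'} \cf'$ along $\Rnm_1$ using the renaming-invariance of \gil execution derived from Requirements~\ref{des:memactsr}, and observe — as noted in \S\ref{sec:prodfp} — that conjoining the separation constraint $\pc_1''$ to the state does not alter the execution path, so the tail still runs from $\Rnm_1(\cf'' \stcomp \pc_1'')$. The inductive hypothesis, applied with frame $\st_f$, supplies $\Rnm_2$ and $\pc_2''$; composing $\Rnm_2 \circ \Rnm_1$ and taking $\pc'' = \pc_1'' \land \pc_2''$ closes the step.

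The main obstacle is the bookkeeping around the renamings and the accumulated separation constraints rather than any single deep argument. I must ensure that the symbols freshly allocated in the first step are disjoint from those allocated in the remaining $k$ steps and from $\st_f$, so that $\Rnm_2 \circ \Rnm_1$ is a well-defined injective renaming that leaves $\st_f$ fixed; this relies on the allocator composition of Definition~\ref{def:csm} tracking all allocations jointly. The second subtlety is that producers force the move to the relaxed frame-preservation statement, so I must check that the composition-preservation clause of Requirements~\ref{des:prod} is exactly what keeps the context-transfer step of the compositional argument valid when $\pc''$ is non-trivial — that is, that the introduced constraints never connect pre-existing symbols and hence never prune a branch that the unframed execution takes.
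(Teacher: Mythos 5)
Your proposal is correct and follows essentially the same route as the paper's proof: a one-step case analysis reducing each rule of Figure~\ref{fig:gilrules} to (compositions of) frame-preserving state actions via Proposition~\ref{prop:ssm} and Lemma~\ref{lem:fpfp}, followed by induction on $n$ carried out analogously to the action-composition proofs. Your additional bookkeeping on composing renamings, transporting the tail execution via Requirements~\ref{des:memactsr}, and invoking Requirements~\ref{des:prod} for producers makes explicit details the paper leaves implicit, but it is the same argument.
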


\begin{proof}
The single-step part is proven by case analysis on the transition of the \gil semantics, with the only cases affected by frame being those of the memory action, conditional goto, and symbol allocation. All three rules amount to a composition of frame-preserving state actions, which is proven to be frame-preserving by Lemma~\ref{lem:fpfp}.
The multi-step part (moving from $\semarrow$ to $\semarrow^n$) is proven by induction, similarly to the memory/state action composition.
\end{proof}

\section{\gillian Specification and Verification}
\label{subsec:gilasrt}

We introduce \gillian assertions, first for memories and then for states, and extend consumers and producers from core predicates to assertions. We define assertion satisfiability in a novel, parametric way. We introduce over-approximating function specifications, define their meaning, give a procedure for their verification and extend the \gil execution semantics with rules for specification use. We prove that the extended \gil semantics maintains forward soundness and that the use of function specifications is frame-preserving.

\subsection{Memory Assertions}
\label{subsec:memasrt}
\begin{definition}\label{def:funspec}
The core predicates $\corePreds \ni \corePred$ of a compositional memory model $\cmemory(\gval, \corePreds) = \tup{\sset{\cmemory}, \sinv, \cea_{\!\corePreds}}$ induce the following assertion language:
$$
\begin{array}{r@{~\defeq~}l}
\Psass, \Qsass \in \Psasses{} & \emp \mid \cPred{\sexp_i \cdot \sexp_o} \mid \astar{\Psass}{\Qsass} 
\end{array}
$$
where $\emp$ denotes the empty memory and $\ast$ is the standard separating conjunction. For simplicity, with the separating conjunction, we work modulo commutativity and associativity.
\end{definition}

We lift consumers and producers from core predicates to memory assertions, using substitutions $\Theta \ni \subst : \lxs \pmap \gval$, which map symbolic variables to values and are then lifted to symbolic expressions, $\Theta \ni \subst : \sexps \pmap \gval$, overloading notation. 
Substitutions can be thought of as more general versions of interpretations. 

\begin{definition}[Memory Assertion Consumers]
\label{def:memasscons}
The consumer and producer actions are lifted from core predicates to memory assertions, resulting in the assertion consumer and producer functions: 
$$
\fgetter{}, \fsetter{} : \sset{\cmemory} \times \Theta \times \Psasses{} \times \pcs \pmap \power{\sset{\cmemory} \times \gval \times \results \times \pcs}
$$
defined as follows (using the same pretty-printing as for other action execution functions):
\begin{mathpar}
\inferrule{\pc~\sat}{
\faction{\cmem}{\fgetter{\subst}}{\emp}\pc{\cmem}{\true}\pc{\osucc}}
\and
\inferrule
{\faction{\cmem}{\fgetter{\corePred}}{\subst(\sexp_i)}\pc{\cmem'}{\gv_o'}  {\pc'}{\osucc} \\\\ \pc'' = (\pc' \land \gv_o' = \subst(\sexp_o))) \and \pc''~\sat}
{\faction{\cmem}{\fgetter{\subst}}{\cPred{\sexp_i \cdot \sexp_o}}\pc{\cmem'}{\true}{\pc''}{\osucc}}
\and
\inferrule
{\faction{\cmem}{\fgetter{\corePred}}{\subst(\sexp_i)}\pc{\cmem'}{\gv_o'} {\pc'} {\osucc} \\\\ \pc'' = (\pc' \land \gv_o' \neq \subst(\sexp_o)) \and \pc''~\sat}
{\faction{\cmem}{\fgetter{\subst}}{\cPred{\sexp_i \cdot \sexp_o}}\pc{\cmem'}{\false}{\pc''}{\oerr}}
\and
\inferrule
{\faction{\cmem}{\fgetter{\corePred}}{\subst(\sexp_i)}\pc{\cmem}{\false}  {\pc'}{\result} \and \result \neq \osucc}
{\faction{\cmem}{\fgetter{\subst}}{\cPred{\sexp_i \cdot \sexp_o}}\pc{\cmem}{\false}{\pc'}{\result}}
\end{mathpar}

\begin{mathpar}
\inferrule
{
{\begin{array}{l}
\faction{\cmem}{\fgetter{\subst}}{\Psass/\Qsass}\pc{\cmem'}{\true}{\pc'}{\osucc} \\
\faction{\cmem'}{\fgetter{\subst}}{\Qsass/\Psass}{\pc'}{\cmem''}{\gv'}{\pc''}{\result}
\end{array}} \and 
\cmem''' = 
   {\begin{cases}
     \cmem'', & \text{if } \result = \osucc \\
     \cmem,   & \text{otherwise}
   \end{cases}}}
{\faction{\cmem}{\fgetter{\subst}}{\Psass \ast \Qsass}\pc{\cmem'''}{\gv'}{\pc''}{\result}}
\and
\inferrule
{\faction{\cmem}{\fgetter{\subst}}{\Psass/\Qsass}\pc{\cmem}{\false}{\pc'}{\result}}
{\faction{\cmem}{\fgetter{\subst}}{\Psass \ast \Qsass}\pc{\cmem}{\false}{\pc'}{\result}}
\end{mathpar}

\begin{mathpar}
\inferrule{\pc~\sat}{
\faction{\cmem}{\fsetter{\subst}}{\emp}\pc{\cmem}{\true}\pc{\osucc}}
\and
\inferrule
{\faction{\cmem}{\fsetter{\corePred}}{\subst(\sexp_i) \cdot \subst(\sexp_o)}\pc{\cmem'}{\gv}{\pc'}{\result}}
{\faction{\cmem}{\fsetter{\subst}}{\cPred{\sexp_i \cdot \sexp_o}}\pc{\cmem'}{\gv}{\pc'}{\result}}
\\
\inferrule
{
{\begin{array}{l}
\faction{\cmem}{\fsetter{\subst}}{\Psass/\Qsass}\pc{\cmem'}{\true}{\pc'}{\osucc} \\
\faction{\cmem'}{\fsetter{\subst}}{\Qsass/\Psass}{\pc'}{\cmem''}{\gv'}{\pc''}{\result}
\end{array}} \and 
\cmem''' = 
   {\begin{cases}
     \cmem'', & \text{if } \result = \osucc \\
     \cmem,   & \text{otherwise}
   \end{cases}}}
{\faction{\cmem}{\fsetter{\subst}}{\Psass \ast \Qsass}\pc{\cmem'''}{\gv'}{\pc''}{\result}}
\and
\inferrule
{\faction{\cmem}{\fsetter{\subst}}{\Psass/\Qsass}\pc{\cmem}{\false}{\pc'}{\oerr}}
{\faction{\cmem}{\fsetter{\subst}}{\Psass \ast \Qsass}\pc{\cmem}{\false}{\pc'}{\oerr}}
\end{mathpar}
\end{definition}

The following propositions can be derived straightforwardly from their core-predicate counterparts, by induction on the assertion consumed/produced:

\begin{propos}[Uniqueness of Memory Production] If $\faction{\cmem}{\fsetter{\subst}}{\Psass}\pc{\cmem'}{\true}{\pc'}\osucc$ and $\faction{\cmem}{\fsetter{\subst}}{\Psass}\pc{\cmem''} {\true}{\pc''}\osucc$, then $\cmem' = \cmem''$.
\end{propos}

Using this proposition, we can extend the notion of CPR to memory assertions, yielding the memory assertion resource function $\resfun : \Theta \tmap \Psasses{} \pmap \sset{\cmemory}$, as follows:
$$
\resfun_\subst(\Psass) = \cmem~~\Longleftarrow~~\faction{\stzero}{\fsetter{\subst}}{\Psass}-{\cmem}{\true}{-}\osucc
$$
To state the following properties, we also define a function $\resfun^i_\subst(\Psass)$, which takes an assertion and returns its core predicates together with their in-parameters, as follows:
$$
\begin{array}{r@{~\defeq~}l}
\resfun^i_\subst(\emp) & \emptyset \\
\resfun^i_\subst(\cPred{\gv_i \cdot \gv_o}) & \{ (\corePred, \gv_i) \}\\
\resfun^i_\subst(\Psass \ast q) & \resfun^i_\subst(\Psass) \cup \resfun^i_\subst(q)
\end{array}
$$
This means, in particular, that if $\resfun^i_\subst(\Psass) = \resfun^i_\subst(q)$, then $p$ and $q$ describe the same resource, possibly with different out-parameters.

\begin{propos}
[Assertion Consumers and Producers]
\label{prop:acpsi}
Assertion consumers and producers obey the following properties, which are analogous to those given in Definition~\ref{def:smm} and the accompanying derived properties:
$$
\begin{array}{r@{~}c@{~}l}
\faction{\cmem}{\fsetter{\subst}}{\Psass}\pc{\cmem'}{\true}{\pc'}{\osucc} & \implies & \cmem' = \cmem \stcomp \resfun_\subst(\Psass)
\\[1mm]
\faction{\cmem}{\fsetter{\subst}}{\Psass}\pc{\cmem'}{\gv'}{\pc'}{\result} & \implies & \result \neq \omiss
\\[1mm]
\faction{\cmem}{\fgetter{\subst}}{\Psass}\pc{\cmem'}{\true}{\pc'}{\osucc} & \implies & \pc' \Leftrightarrow \pc \land  (\cmem = \cmem' \stcomp \resfun_\subst(\Psass))
\\[1mm]
\sinv_{\!\pc}(\cmem) \gand \pc' \Leftrightarrow (\cmem = \cmem' \stcomp \resfun_\subst(\Psass)) & \implies & \faction \cmem {\fgetter{\corePred}} {\gv_i} {\pc} {\cmem'} {\true} {\pc \land \pc'} \osucc \\[1mm]
\faction{\cmem}{\fgetter{\subst}}{\Psass}\pc{\cmem'}{\true}{\pc'}{\osucc} \gand \resfun^i_\subst(\Psass) = \resfun^i_\subst(q) & \implies & \faction{\cmem'}{\fsetter{\subst}}{q} {\pc'} {\cmem' \stcomp \resfun_\subst(\Psass)} \true {\pc'} {\osucc}  \\[1mm]
\faction{\cmem}{\fsetter{\subst}}{\Psass}\pc{\cmem'}{\true}{\pc'}{\osucc} & \implies & \faction{\cmem'}{\fgetter{\subst}}{\Psass} {\pc'} \cmem {\true} {\pc'} {\osucc}
\end{array}
$$
\end{propos}

\myparagraph{Correctness: Forward Soundness and Backward Completeness}
The forward soundness and backward completeness requirements lift from core predicates to memory assertions. To state this, we extend the notion of interpretation to symbolic substitutions, as follows:
$$
\sintp(\ssubst) = \{ (\lx, \sintp(\sexp)) \mid (\lx, \sexp) \in \ssubst \}
$$
which is then lifted in the standard way to symbolic expressions, inducing that $(\sintp(\ssubst))(\sexp) = \sintp(\ssubst(\sexp))$.

\begin{propos}\label{prodcons:mass} 
Given a concrete and a symbolic compositional memory model, $\cmemory(\vals, \corePreds)$ and $\smemory(\sexps, \corePreds)$,
if their consumers and producers satisfy Requirements~\ref{des:memact}/Requirements~\ref{req:mem:fs}, then the lifted memory assertion consumers and producers also satisfy those requirements (stated below for~consumers):
\begin{mathpar}
\inferrule[Memory Assertion Consumers: Forward Soundness (MAC-FS)]{}{
          {\begin{array}{l}
\fcolactionx \smem {\fgetter{\ssubst}} \Psass {\pc} {\{ (\smem_i, \sexp_i)^{\result_i}_{\pc_i} |_{i \in I}\}}  \gand  \sintp(\pc) = \true \gand 
\action{\sint(\smem)}{\fgetter{\sintp(\ssubst)}}{\Psass}{(\cmem', \vl')^\result} \\ \qquad \implies \exists i.~(\cmem', \vl')^\result_\true = \sintp((\smem_i, \sexp_i)^{\result_i}_{\pc_i})
 \end{array}}}
 \and
\inferrule[Memory Assertion Consumers - Backward Completeness (MAC-BC)]
  {}{
  {\hspace*{-0.15cm}\begin{array}{l}
\faction{\smem}{\fgetter{\ssubst}}{\Psass}{\pc}{\smem'}{\sexp'}{\pc'}\result
  \land \sint(\pc') = \true \land \smem, \ssubst \in \domain(\sintp)
  \\ \qquad \implies
        \caction{\sintp(\smem)}{\fgetter{\sintp(\ssubst)}}{\Psass}{(\sintp(\smem'), \sintp({\sexp'}) )^\result}
        \end{array}}
 }
\end{mathpar}
\end{propos}

\begin{proof}
By induction on the structure of the memory assertion $\Psass$, noting that memory assertion consumers and producers have full coverage by definition.
We prove the base case of the consumers for both backward completeness and forward soundness. The case of $\emp$ is trivial, and the inductive cases follow straightforwardly. The producer cases are proven analogously.

\begin{proofpf}		
\case{Forward Soundness, Core Predicate Consumers:}
  \pflongnumbers
    \step{2ass}{
      \begin{pfenum}
        \item[(H1)] $\fcolactionx \smem {\fgetter{\ssubst}} \Psass {\pc} {\{ (\smem_i, \sexp_i)^{\result_i}_{\pc_i} |_{i \in I}\}}$
        \item[(H2)] $\sint(\pc) = \true$
        \item[(H3)] $\action{\sint(\smem)}{\fgetter{\sintp(\ssubst)}}{\Psass}{(\cmem', \vl')^\result}$
      \end{pfenum}
    }
		\step{2a}{Obtain $i$, such that $\sintp(\pc_i) = \true$ [by Proposition~\ref{prop:intermezzo}, H1 and H2]}
		\step{3}{$\faction{\smem}{\fgetter{\ssubst}}{\Psass}\pc{\smem_i}{\sexp_i}{\pc_i}{\result_i}$ [by H1 and 2]}
		\step{xb}{$\smem, \ssubst \in \domain(\sintp)$ [by H3]}
		\step{xd}{$\caction{\sintp(\smem)}{\fgetter{\sintp(\ssubst)}}{\Psass}{(\sintp(\smem'), \sintp({\sexp'}) )^{\result_i}}$ [by MAC-BC, using 3, 2, and 4]}
		\step{xe}{$\exists i.~(\cmem', \vl')^\result_\true = \sintp((\smem_i, \sexp_i)^{\result_i}_{\pc_i})$ [the witness is $i$, by H3, H5, and determinism of concrete execution]}
\end{proofpf}
\begin{proofpf}
\case{Backward Completeness, Core Predicate Consumers, $\osucc$ ($\oerr$/$\omiss$ is analogous):}
  \pflongnumbers
    \step{2ass}{
      \begin{pfenum}
        \item[(H1)] $\faction{\smem}{\fgetter{\ssubst}}{\cPred{\sexp_i \cdot \sexp_o}}{\pc}{\smem'}{\sexp'}{\pc'}\osucc$
        \item[(H2)] $\sint(\pc') = \true$
        \item[(H3)] $\smem, \ssubst \in \domain(\sintp)$
      \end{pfenum}
    }
		\step{2a}{[by H1 and H2]}
		\begin{pfenum*}
			\item $\faction{\smem}{\fgetter{\corePred}}{\ssubst(\sexp_i)}\pc{\smem'}{\sexp_o'}{\pc''}{\osucc}$ 
			\item $\pc' = (\pc'' \land \sexp_o' = \ssubst(\sexp_o)))$ 
			\item $\sintp(\pc'') = \true \gand \sintp(\sexp_o') = \sintp(\ssubst(\sexp_o))$
		\end{pfenum*}
		\step{xb}{$\ssubst(\sexp_i) \in \domain(\sintp)$ [by 2.1 and H3]}
		\step{xc}{$\faction{\sintp(\smem)}{\sintp(\fgetter{\corePred})}{\ssubst(\sexp_i)}{}{\sintp(\smem')}{\sintp(\ssubst(\sexp_o))}{}{\osucc}$ [by MA-BC of consumers, 2.1, 2.3, and 3]}
		\step{xd}{$\caction{\sintp(\smem)}{\fgetter{\sintp(\ssubst)}}{\Psass}{(\sintp(\smem'), \sintp({\sexp'}) )^\osucc$ [by definition, using 4 and H2]}}\end{proofpf}
\end{proof}

\myparagraph{Correctness: Frame Preservation}
We can easily prove, by induction on the structure of the assertion, that assertion consumption is frame-preserving (cf.~Proposition~\ref{prop:succexc}), and that consumption of an assertion $p$ followed by the production of an assertion $q$ such that $\resfun^i_\subst(\Psass) = \resfun^i_\subst(q)$ is frame-preserving in the sense of Definition~\ref{def:smm} (cf. Proposition~\ref{app:prop:comp}).
Moreover, this means that if $\resfun^i_\subst(q) \subseteq \resfun^i_\subst(p)$, that is, if $p$ has ``less'' resource than $q$, then the above composition is frame-preserving. We discuss the implications of $p$ having resource that is not in $q$ shortly, in the context of specification use.

\subsection{State Assertions}
\label{subsec:stasrt}
\begin{definition}\label{def:masrt}
A compositional state model $\gstate(\gval,
\mathsf{I}, \corePreds, \cmemory(\gval, \corePreds), \allocator(\gval)) \defeq
\langle\sset{\gstate}, \ceval{~}{}, \kwT{ea} \rangle$, with the underlying memory model $\cmemory(\gval, \corePreds)
\defeq \tup{\sset{\cmemory}, \sinv,$  $\cea}$, induces the following assertion language:
$$
\begin{array}{r@{~\defeq~}l}
   \Pass, \Qass \in \Passes{} & \{ ( 
   \Psass, \pc) \mid  
   \Psass \in \Psasses{}, \pc \in \pcs
   \}  
\end{array}
$$
where $\Psasses{}$ denotes the set of memory assertions induced by $\corePreds$. For convenience, $(\Psass, \pc)$ is onward denoted by $\Psass \land \pc$.
\end{definition}

\newcommand{\gmode}{m}
\newcommand{\gmodes}{\mathit{Modes}}
\newcommand{\omode}{o}
\newcommand{\umode}{u}

We next define state assertion consumption and production. Note that state assertions do not have explicit existential quantification; the treatment of existentials in specifications will be evident in the appropriate \gil rules.

\begin{definition}[State Assertion Consumers and Producers]
The state assertion consumer and producer functions,
$
\fgetter{}, \fsetter{} : \sset{\gstate} \times 
\Theta \times \Passes{} \pmap \power{\sset{\gstate} \times \gval \times \results}
$,
are defined as follows:
\begin{mathpar}
\inferrule
{
 \st = \tup{\cmem', \sto, \arec, \pc'} \\\\
 \faction {\cmem'} {\fgetter{\subst}} \Psass {\pc'} {\cmem''}{\true} {\pc''} {\osucc} \\\\
	\pc'' \land \subst(\pc)~\sat
	  \\\\
 \st' = \tup{\cmem'', \sto, \arec, \pc'' \land \subst(\pc)}
}
{\faction{\st}{\fgetter{\subst}}{\Psass \land \pc}{}{\st'} \true {} \osucc}
\and
\inferrule
{
 \st = \tup{\cmem', \sto, \arec, \pc'} \\\\
 \faction {\cmem'} {\fgetter{\subst}} \Psass {\pc'} {\cmem''}{\true} {\pc''} {\osucc} \\\\
	\pc'' \land \lnot\subst(\pc)~\sat\\\\
 \st' = \tup{\cmem, \sto, \arec, \pc'' \land \lnot \subst(\pc)}
}
{\faction{\st}{\fgetter{\subst}}{\Psass \land \pc}{}{\st'} \false {} \oerr}
\\
\inferrule
{
 \faction {\cmem'} {\fgetter{\subst}} \Psass {\pc'} {\cmem'}{\false} {\pc''} {\result} \and \result \neq \osucc
}
{\faction{\tup{\cmem', \sto, \arec, \pc'}}{\fgetter{\subst}}{\Psass \land \pc}{}{\tup{\cmem', \sto, \arec, \pc''}} \false {} \result}
\\
\inferrule
{
 \st = \tup{\cmem', \sto, \arec, \pc'} \\\\
 \faction {\cmem'} {\fsetter{\subst}} \Psass {\pc'} {\cmem''}{\true} {\pc''} {\osucc} \\\\
  \pc'' \land \subst(\pc)~\sat \\\\
 \st' = \tup{\cmem'', \sto, \arec, \pc'' \land \subst(\pc)}
}
{\faction{\st}{\fsetter{\subst}}{\Psass \land \pc}{}{\st'} \true {} \osucc}
\and
\inferrule
{
 \st = \tup{\cmem', \sto, \arec, \pc'} \\\\
 \faction {\cmem'} {\fsetter{\subst}} \Psass {\pc'} {\cmem'}{\false} {\pc''} {\osucc} \\\\
 \result \neq \osucc \\\\
 \st' = \tup{\cmem'', \sto, \arec, \pc''}
}
{\faction{\st}{\fsetter{\subst}}{\Psass \land \pc}{}{\st'} \false {} \result}
\end{mathpar}
\end{definition}

\myparagraph{Correctness}
We can easily show that forward soundness, backward completeness, and frame preservation lift from memory to state assertions. However, note that producers no longer maintain full coverage in general, because there might not exist a resulting branch with context $\pc'' \land \lnot\subst(\pc)$. We discuss the impact of this further in the context of procedure specifications.

\subsection{Assertion Satisfiability} 
\label{subsec:asrtsat}
We now give our definition of assertion satisfiability, stated in terms of memory assertion resource:

\begin{definition}[Satisfiability]
  \label{sat}
The \emph{satisfiability relation}, stating that memory
$\cmem'$ and context $\pc'$ satisfy assertion $\Psass \land \pc$ under
substitution~$\subst$,  is defined by:
$$
\begin{array}{r@{~}c@{~}l@{~}c@{~}l}
\cmem', \pc', \subst & \models & \Psass \land \pc & \iff & \sinv_{\!\pc'}(\cmem') \gand \pc' \vdash \cmem' = \resfun_{\subst}(\Psass)  \gand \pc' \vdash \subst(\pc)
\end{array}
$$
and is lifted to states as: $\tup{\cmem', -, -, \pc'}, \subst \models \Psass \land \pc$
if and only if $ \cmem', \pc', \subst \models \Psass \land \pc$.
\end{definition}

The following lemma gives further insight into the connection between consumers, producers, and satisfiability.

\begin{lemma}[Consumers, Producers, and Satisfiability]
\label{lem:consprodsat}
$$
\begin{array}{r@{~}c@{~}l}
\cmem', \pc', \subst \models \Psass & \implies & \faction{\cmem'}{\fgetter{\subst}}{\Psass}{\pc'}{\stzero}{\true}{\pc'}\osucc \\[1mm]
\action{\st}{\fgetter{\subst}}{P}{(\st', \true)^\osucc} & \implies & \exists \st''.~(\access{\st_j}{\ctx} \vdash \st = \st_j \stcomp \st'') \gand (\st'', \subst \models P) \\[1mm]
\action{\st}{\fsetter{\subst}}{P}{(\st', \true)^\osucc} 
&\implies &
\exists \st''.~\st' = \st \stcomp \st'' \gand \st'', \subst \models P
\end{array}
$$
\end{lemma}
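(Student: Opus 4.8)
The three implications are instances of one pattern: unfold each state-level action into its underlying \emph{memory} consumer/producer, invoke the matching characterisation from Proposition~\ref{prop:acpsi} (which expresses consumers and producers in terms of the resource function $\resfun_\subst$ and the PCM composition $\stcomp$), and then read off the conclusion from Definition~\ref{sat}, which is itself phrased in terms of $\resfun_\subst$. I would treat the three lines in order. Throughout, I read the residual state denoted $\st_j$ in the statement as the state $\st'$ produced by the consumer, so that the first conjunct asserts $\st = \st' \stcomp \st''$ \emph{under} the context $\access{\st'}{\ctx}$.

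For the first implication, unfolding $\cmem', \pc', \subst \models \Psass$ via Definition~\ref{sat} gives $\sinv_{\!\pc'}(\cmem')$ and $\pc' \vdash \cmem' = \resfun_\subst(\Psass)$. I would then apply the completeness direction for consumers in Proposition~\ref{prop:acpsi}, instantiating its residual memory to $\stzero$: the required equality $\cmem' = \stzero \stcomp \resfun_\subst(\Psass)$ is then exactly the hypothesis $\cmem' = \resfun_\subst(\Psass)$, and the guard context it introduces is equivalent to $\true$ under $\pc'$, so the resulting context stays $\pc'$. This yields $\faction{\cmem'}{\fgetter{\subst}}{\Psass}{\pc'}{\stzero}{\true}{\pc'}{\osucc}$ directly.

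For the consumer implication I would unfold the state-assertion consumer rule, which reduces the action to a successful memory consumption $\faction{\cmem'}{\fgetter{\subst}}{\Psass}{\pc'}{\cmem''}{\true}{\pc''}{\osucc}$ with $\st' = \tup{\cmem'', \sto, \arec, \pc'' \land \subst(\pc)}$, and the third property of Proposition~\ref{prop:acpsi} supplies $\pc'' \vdash \cmem' = \cmem'' \stcomp \resfun_\subst(\Psass)$. I would then exhibit the witness $\st'' = \tup{\resfun_\subst(\Psass), \emptyset, \arec, \access{\st'}{\ctx}}$: since its memory is $\resfun_\subst(\Psass)$ and its context entails $\subst(\pc)$, it satisfies $\st'', \subst \models \Psass \land \pc$ by Definition~\ref{sat}, while $\st' \stcomp \st'' = \st$ holds under $\access{\st'}{\ctx}$ using the memory decomposition, idempotence of allocator composition ($\arec \stcomp \arec = \arec$), and $\pc'' \Rightarrow \pc'$. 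The producer implication is dual and cleaner: the state-assertion producer rule reduces to a successful memory production, for which the first property of Proposition~\ref{prop:acpsi} gives $\cmem'' = \cmem' \stcomp \resfun_\subst(\Psass)$ outright, so the same shape of witness $\st''$ yields the \emph{unconditional} decomposition $\st' = \st \stcomp \st''$ (here $\pc'' \Rightarrow \pc'$ collapses the combined context to $\access{\st'}{\ctx}$).

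The main obstacle is not the algebra but discharging the well-formedness and context side conditions of state composition for the witness $\st''$, chiefly in the consumer case. Concretely I must establish $\sinv_{\!\access{\st'}{\ctx}}(\resfun_\subst(\Psass))$, which I would obtain from well-formedness of $\cmem' = \cmem'' \stcomp \resfun_\subst(\Psass)$ together with compatibility of well-formedness with composition (property 3.3 of Definition~\ref{def:smm}) and context monotonicity (property 2.2), and I must keep track of the fact that the consumer decomposition is valid only under the strengthened context $\pc'' \land \subst(\pc)$ — precisely why the conclusion for consumers is guarded by $\access{\st'}{\ctx} \vdash$, whereas the producer decomposition is exact. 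The allocator bookkeeping, namely that both components carry the same $\arec$ (justified by idempotence), is the one remaining routine check.
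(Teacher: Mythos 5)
Your proposal is correct and follows the paper's own strategy almost line for line: unfold the state-level action into the underlying memory consumer/producer, apply Proposition~\ref{prop:acpsi} to obtain the (context-guarded, resp.\ exact) decomposition $\cmem' = \cmem_j \stcomp \resfun_\subst(\Psass)$, and take as witness the state with memory $\resfun_\subst(\Psass)$, empty store, and context $\pc_j \land \subst(\pc)$; your reading of $\st_j$ as the consumer's output state, your use of the completeness direction of Proposition~\ref{prop:acpsi} with residual $\stzero$ for the first implication, and your explanation of why the consumer decomposition is guarded by $\access{\st_j}{\ctx}$ while the producer one is exact all match the paper (which proves only the consumer case explicitly, calling the first part trivial and the producer part analogous). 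The one place you genuinely diverge is in discharging well-formedness of the witness, $\sinv_{\!\pc_j \land \subst(\pc)}(\resfun_\subst(\Psass))$. The paper does this purely with action-level machinery: it re-produces the consumed resource on top of $\cmem_j$ (Proposition~\ref{prop:acpsi}), frame-cancels that production down to $\stzero$, so that preservation of well-formedness by action execution (property 3.2 of Definition~\ref{def:emm}) yields well-formedness of $\resfun_\subst(\Psass)$, and producer--consumer invertibility then delivers satisfiability of the witness. Your route --- compatibility of $\sinv$ with $\stcomp$ (property 2.3 of Definition~\ref{def:smm}, which you mis-number as 3.3) plus context monotonicity --- is shorter, but in the consumer case it needs one unstated congruence: from $\sinv_{\!\pc_j}(\cmem')$ and the merely context-relative equality $\pc_j \vdash \cmem' = \cmem_j \stcomp \resfun_\subst(\Psass)$ you must conclude $\sinv_{\!\pc_j}(\cmem_j \stcomp \resfun_\subst(\Psass))$; that is consonant with the paper's pervasive use of $\pc \vdash \cmem = \cdots$, but it is precisely the step the paper's producer-based detour avoids (in the producer case the equality from Proposition~\ref{prop:acpsi} is exact, so your argument is airtight there). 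Your remaining bookkeeping choices are harmless variants: you reuse the full allocator $\arec$ and appeal to idempotence of allocator composition, which is valid since that composition is pointwise union, whereas the paper instead picks a record containing just the symbols of $\resfun_\subst(\Psass)$ and $\pc_j$.
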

\begin{proof}
We prove the second part; the first part is trivial and the second part is proven analogously.
\begin{proofpf}
     \pflongnumbers
 \step{1}{
      \assume{(H1) $\action{\st}{\fgetter{\subst}}{P}{(\st', \true)^\osucc}$}
    }
      \step{2}{[From H1, by definition]}
        \begin{pfenum*}
 \item $\st = \tup{\cmem', \sto, \arec, \pc'}$
 \item $\faction {\cmem'} {\fgetter{\subst}} \Psass {\pc'} {\cmem_j}{\true} {\pc_j} {\osucc}$ 
 \item $\pc_j \land \subst(\pc)~\sat$
\item $\st_j = \tup{\cmem_j, \sto, \arec, \pc_j \land \subst(\pc)}$
      \end{pfenum*}
          \step{5}{[by Proposition~\ref{prop:acpsi} and 2.2]}
        \begin{pfenum*}
 \item $\pc_j \Leftrightarrow (\pc \land (\cmem' = \cmem_j \stcomp \resfun_\subst(\Psass)))$
      \end{pfenum*}
      \step{7}{Take $\st'' = \tup{\resfun_\subst(\Psass), \emptyset, \arec, \pc_j \land \subst(\pc)}$, where $\arec$ contains the symbols in $\resfun_\subst(\Psass)$ and $\pc_j$}
      \step{8}{$\access{\st_j}{\ctx} \vdash \st = \st_j \stcomp \st''$ [by definition of composition, 2.1, 2.4, 3.1, and 4]}
      \step{8b}{$\faction {\cmem_j} {\fsetter{\subst}} \Psass {\pc_j} {\cmem_j \stcomp \resfun_\subst(\Psass)}{\true} {\pc_j} {\osucc}$ (by Proposition~\ref{prop:acpsi} and 2.2)}
      \step{8c}{$\faction {\stzero} {\fsetter{\subst}} \Psass {\pc_j} {\resfun_\subst(\Psass)}{\true} {\pc_j} {\osucc}$ (by frame cancellation of successful producers)}
      \step{8d}{$\faction {\resfun_\subst(\Psass)} {\fsetter{\subst}} \Psass {\pc_j} {\stzero}{\true} {\pc_j} {\osucc}$ (by Proposition~\ref{prop:acpsi})}
      \step{9}{$\st'', \subst \models \Psass \land \pc$ [by 8 and 3.4]}
          \end{proofpf}  
\end{proof}

\subsection{Procedure Specifications}
\label{subsec:procspecs}
\gillian procedure 
 specifications have the form $\{\lx, P\} f(x)
\{Q\}^{\sexp}$, where $f$ is the procedure identifier, $x$ is the
procedure parameter, $\lx$ is the symbolic variable holding the value of
$x$, $\Pass$ is the pre-condition, $\Qass$ is the post-condition,
and $\sexp$ is the return value of the procedure,
with the following constraints:

\begin{itemize}
\item the pre-condition does not have explicit existential quantification;
\item program variables do not appear in the pre- or the
  post-condition, and the procedure parameter $x$ is accessed using symbolic variable $\lx$; and
\item symbolic variables that appear in a pre-condition are implicitly universally quantified, and can be re-used in the corresponding post-condition
\item symbolic variables that appear in a pre-condition are implicitly existentially quantified
\item any resource in the post-condition created within the function is existentially quantified.
\end{itemize}

We extend \gil programs with procedure specifications, accessible
via $\prog.\kwT{specs}$, and the \gil semantics with rules for re-using specifications, given in Figure~\ref{fig:specrules}. The rules are simplified, in that they assume to have the substitution upfront in the procedure call; in the implementation, there exists a unification algorithm that, starting from the procedure parameter and using the consumers, learns the substitution. Additionally, for simplicity, we allow one specification per function.

\begin{figure}[!t]
\begin{mathpar}
\small
\inferrule[\textsc{Call - Spec Re-Use - Success}]
  {
  {\begin{array}{lr}
     \cmd(\prog, \cs, i) = y := \e(\e') \text{ with } \subst & \text{\footnotesize procedure call with substitution $\subst$}
      \\
      \eex{\st}{\e} \rightsquigarrow f \quad \eex{\st}{\e'} \rightsquigarrow \gv' & \text{\footnotesize get procedure id and parameter value}
      \\
      \{\lx, P \} f(x) \{Q\}^{\sexp} \in {\prog}.\kwT{specs} & \text{\footnotesize get procedure specification} \\
      \subst' = \subst[\lx \mapsto \gv'] & \text{\footnotesize extend  substitution with parameter value}
      \\
      \fcolactionx{\st}{\fgetter{\subst'}}{P}{}{\{ (\st_j, \true)^\osucc} |_{j \in J} \} & \text{\footnotesize consume pre-condition} \\
      j \in J & \text{\footnotesize select a branch} \\
      \lstxs = \textsf{svars}(Q) \setminus \textsf{svars}(P) & \text{\footnotesize get the existentials of the post} \\
      \action{\st_j}{\kwT{iSym}}{|\lstxs|}{(\st_j', \lstvs)^\osucc} & \text{\footnotesize generate post-condition existentials} \\
      \subst_{j}' = \subst'[ \lstxs \mapsto \lstvs ] & \text{\footnotesize extend substitution with generated existentials} \\
      \action{\st_j'}{\fsetter{\subst_j'}}{Q}{(\st_j'', \true)^\osucc} & \text{\footnotesize produce post-condition}
      \\
      \action{\st_j''}{\kwT{setVar}_y}{\subst_j'(\sexp)}{\st'} & \text{\footnotesize assign return value}
   \end{array}}
  }{
    \semtrans{\st, \cs, i}{\st', \cs, i{+}1}{}{\prog}{}
  }
\and
\inferrule[\textsc{Call - Spec Re-Use - Error}]
  {
  {\begin{array}{lr}
     \cmd(\prog, \cs, i) = y := \e(\e') \text{ with } \subst & \text{\footnotesize procedure call with substitution $\subst$}
      \\
      \eex{\st}{\e} \rightsquigarrow f \quad \eex{\st}{\e'} \rightsquigarrow \gv' & \text{\footnotesize get procedure id and parameter value}
      \\
      \{\lx, P \} f(x) \{Q\}^{\sexp} \in {\prog}.\kwT{specs} & \text{\footnotesize get procedure specification} \\
      \subst' = \subst[\lx \mapsto \gv'] & \text{\footnotesize extend  substitution with parameter value}
      \\
            \fcolactionx{\st}{\fgetter{\subst'}}{P}{}{\{ (\st_j, \true)^{\result_j}}|_{j \in J} \} & \text{\footnotesize consume pre-condition} \\
      \exists j \in J.~\result_j \neq \osucc & \text{\footnotesize consumption is not fully successful} \\
      \st_j' = \st \stcomp \access{\st_j}{\ctx} & \text{\footnotesize construct final state} \\
      o_j = (\text{if $\result_j = \omiss$ then $\mathsf{M}$ else $\mathsf{E}$)}& \text{\footnotesize construct final outcome, none successful} \\
      \gv'_j = (\text{if $\result_j = \osucc$ then $\false$ else $\gv_j$)}& \text{\footnotesize construct return value}
   \end{array}}
  }{
    \semtrans{\st, \cs, i}{\st_j', \cs, i}{}{\prog}{o_j(\gv_j')}
  }
\end{mathpar}
\vspace*{-0.3cm}
\caption{\gil Execution: Specification Use}
\vspace*{-0.3cm}
\label{fig:specrules}
\end{figure}

Lemma~\ref{lem:consprodsat} allows us to give these two rules a standard separation logic interpretation. In particular, the consumption of the pre-condition $P$ attempts to discover and then remove the part of the state $\st$ that satisfies $P$. If there exists a branch for which this is not possible, the procedure cannot be called with the given specification and an error is returned. This is required because the specifications used here are verification specifications, and verification has to cover all paths. Also note that, if $\fcolactionx{\st}{\fgetter{\subst}}{p \land \pc}{}{\{ (\st_j, \true)^\osucc} |_{j \in J} \}$ holds, that means that $\access{\st}{ctx} \vdash \subst(\pc)$, which is the standard definition of assertion satisfiability for pure assertions in over-approximating analyses.
Otherwise, we have that the resulting state $\st_j$ effectively corresponds to the frame of the function call, and that the subsequent production extends that frame with a part of the state that satisfies the post-condition~$Q$. Observe that, due to the way pure assertion consumption is handled, that the context never gets weakened, even if the pure part of the post-condition of the procedure is weaker than that of the pre. Finally, in the error cases, we do not allow the non-deterministic execution to proceed, and all branches are, therefore, converted to error branches.  

\myparagraph{The Verification Process in Gillian}
We discuss how \gillian verifies procedure specification using the \gil symbolic execution.

\newcommand{\lstes}{\hat{\mathit{es}}}

\begin{definition}[\gillian Verification Procedure]
A specification $\{\lx, P\} f(x) \{Q\}^{\sexp}$ is \emph{verified} by the \gil symbolic execution, denoted by $\models \{\lx, P\} f(x) \{Q\}^{\sexp}$, iff, taking $\ssubst = \mathsf{id}(\mathsf{svars}(P))$, $\ssto = \{ x \mapsto \ssubst(\lx) \}$, $\sarec$ such that $\sarec(\lxs) = \mathsf{svars}(P)$ and $\sarec(\clocs) = \emptyset$, $\sst = \tup{\emptyset, \ssto, \sarec, \true}$, and $\lstxs = \mathsf{svars}(Q) \setminus \mathsf{svars}(P)$, the following holds:
$$
\begin{array}{l}
\action{\sst}{\fsetter{\ssubst}}{P}{(\sst'', \true)^\osucc} \implies \\
\quad \prog \vdash \{ \tup{\sst'', \tup{f}, 0} \} \csemarrowfin \ssscf'~\gand \\ 
\qquad (\forall \tup{\sst', -, -}^{o(\sexp')} \in {\ssscf}'.~o = \mathsf{S} \gand (\exists \lstes.~\sst', \ssubst[\lstxs \mapsto \lstes] \models Q \gand  \access{\sst'}{ctx} \vdash \sexp' = (\ssubst[\lstxs \mapsto \lstes])(\sexp)))
\end{array}
$$
\end{definition}

To verify a specification in \gillian, we start by constructing an initial state $\sst$ that sets up the store for the execution of the procedure $f$ and has an  allocator that has already allocated the symbolic variables of the pre-condition, and then producing the pre-condition $P$ in this state. If this production fails, then the specification vacuously holds. Otherwise, we obtain the state $\st''$ that, by Lemma~\ref{lem:consprodsat}, satisfies $P$. Then, in order for the specification to be verified, the symbolic execution of $f$ starting from $\st''$ has to terminate, yielding a set of final configurations $\ssscf'$. Moreover, each final configuration has to have terminated successfully, each final state has to satisfy the post-condition, and each return value has to equal the desired one. Note that the substitution may need to be extended to accommodate the existentials of the post-condition. 

\myparagraph{Correctness: Forward Soundness}
In order to prove the following correctness statement, we extend interpretation to programs as well, due to having to interpret the co-domain of the substitutions $\theta$ provided in the procedure calls. This is needed because of the above-mentioned simplification, which elides the unification algorithm.

Observe that in order to preserve full coverage, any given specification $\{\lx, p \land \pc_p \} f(x) \{ q \land \pc_q \}^{\sexp}$ has to maintain that $\pc_p \Rightarrow \pc_q$, meaning that the post-condition cannot introduce new pure facts about existing symbolic variables, that is, cut execution paths. This is a standard requirement which can easily be met for bounded code, but has to be handled via user-defined predicates for code with unbounded loops or recursive calls. While Gillian does support user-defined predicates, the related machinery will be introduced in subsequent versions of this technical report.

Finally, note that backward completeness no longer holds in general because it is possible to define over-approximating user-defined predicates that can lead to loss of information due to predicate folding. If the specifications are not over-approximating, either due to functions not having unbounded executions or the user-defined predicates not losing information, then backward completeness does hold and the \gil execution still does not have false positive bug reports.

\begin{lemma}[Forward Soundness: \gil One-Step]\label{lem:gilfspecone}
Given a concrete and a symbolic compositional state model, $\gstate(\vals,
\vals, \corePreds, \cmemory(\vals, \corePreds), \allocator(\vals)) =
\langle\sset{\gstate}, \ceval{~}{}, \kwT{ea} \rangle$ and $\gstate(\sexps,
\lxs, \corePreds, \cmemory(\sexps, \corePreds),$ $\allocator(\sexps)) = \langle\sset{\hat\gstate}, \ceval{~}{}, \hat{\kwT{ea}} \rangle$, respectively, 
if the consumers and producers in  satisfy Requirements~\ref{req:mem:fs}, and assuming that procedure specifications do not strengthen the context, then the corresponding \gil one-step symbolic semantics satisfies the following property:
\begin{mathpar}
\inferrule[GIL: Forward Soundness: Interpretations (GIL-FS-CFG-I)]{}{
          {\begin{array}{l}
\prog \vdash \scf \rightarrow \ssscf'  \gand  
\sintp(\prog) \vdash  \sintp(\scf)  \semarrow {\cf'}  \implies \exists \varepsilon' \geq \varepsilon, \scf' \in \ssscf'.~\cf' = \sintpe{\varepsilon'}(\scf')
 \end{array}}}
\end{mathpar}
\end{lemma}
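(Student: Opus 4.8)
The plan is to proceed by case analysis on the \gil one-step transition, exactly as in the proof of Lemma~\ref{lem:gilfsone}. All of the standard \gil rules (assignment, conditional goto, memory action, symbol allocation, the ordinary call, return, top return, and fail) are untouched by the extension and are discharged verbatim by that earlier argument: the single-state-action and single-composition cases follow from Lemma~\ref{req:st:fs} and Lemma~\ref{lem:sactfs}, and the call/return cases from their explicit derivations. The only genuinely new cases are the two specification-use rules of Figure~\ref{fig:specrules}, \textsc{Call - Spec Re-Use - Success} and \textsc{Call - Spec Re-Use - Error}, which I would treat below.

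For the success rule, I would first decompose its premises into the sequence of state-level operations it performs: evaluating $\e$ and $\e'$ to obtain the procedure identifier and the argument $\gv'$; extending the substitution to $\subst' = \subst[\lx \mapsto \gv']$; consuming the pre-condition $P$ with $\fgetter{\subst'}$; selecting a branch $j$; generating the post-condition existentials $\lstxs$ via $\kwT{iSym}$; producing the post-condition $Q$ with $\fsetter{\subst_j'}$; and finally assigning the return value with $\kwT{setVar}$. Each of these is forward-sound in isolation: expression evaluation, $\kwT{setVar}$, and the substitution extensions are trivial and deterministic; $\kwT{iSym}$ is handled by Lemma~\ref{req:st:fs}; and the state-assertion consumer and producer inherit forward soundness from the lifting of Proposition~\ref{prodcons:mass} to state assertions. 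Crucially, since the concrete side executes the \emph{interpreted} program $\sintp(\prog)$, its call uses the interpreted substitution $\sintp(\subst)$, and the identity $(\sintp(\subst))(\sexp) = \sintp(\subst(\sexp))$ established before Proposition~\ref{prodcons:mass} is what makes the symbolic consumer/producer under $\subst'$ correspond to the concrete one under $\sintp(\subst')$; this is the sole place where interpreting the program, rather than just the configuration, is exercised.

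I would then chain these operations using Lemma~\ref{lem:sactfs}, threading a single extended interpretation $\varepsilon' \geq \varepsilon$. Two points need care. First, the symbolic consumer of $P$ may branch, so that the one-step successor set $\ssscf'$ contains one configuration per branch $j$; by forward soundness of the consumer the unique concrete consumption result matches exactly one symbolic branch, and this pins down the witness $\scf' \in \ssscf'$. Second, the $\kwT{iSym}$ step introduces fresh symbolic variables for the existentials of $Q$; the witness $\varepsilon'$ is obtained by extending $\varepsilon$ so that these variables map to the concrete values chosen by the concrete $\kwT{iSym}$, exactly as in the $\kwT{iSym}$ case of Lemma~\ref{req:st:fs}. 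The error rule is entirely analogous: here forward soundness is used to match a non-successful concrete consumption outcome to a symbolic branch $j$ with $\result_j \neq \osucc$, after which the frame state $\st \stcomp \access{\st_j}{\ctx}$ and the constructed outcome interpret in the expected way.

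The main obstacle is the branch-matching step combined with the coverage side-condition. For the concrete-to-symbolic direction to close I must know that the concrete execution cannot escape the family $\ssscf'$ of symbolic successors, and in particular that the post-condition producer does not cut the path already selected by the pre-condition consumer. This is precisely where the standing assumption that specifications do not strengthen the context (i.e.\ $\pc_p \Rightarrow \pc_q$ for a specification with pre-context $\pc_p$ and post-context $\pc_q$) is essential, since it guarantees full coverage is preserved across the production of $Q$, so that every concrete outcome is covered by some symbolic branch. The remaining bookkeeping, namely that the program and substitution interpretations align so the consumed and produced assertions are literally the same on both sides, is routine given Proposition~\ref{prodcons:mass} but must be stated explicitly.
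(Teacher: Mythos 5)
Your proposal is correct and follows essentially the same route as the paper's proof: reduce to the two \textsc{Call - Spec Re-Use} rules (the remaining cases being covered by Proposition~\ref{cor:spec:simpl} and the earlier forward-soundness lemmas), then step through the success rule's premises, applying forward soundness of expression evaluation, the state-assertion consumer (which pins down the branch $j$), $\kwT{iSym}$ (which supplies the extension $\varepsilon' \geq \varepsilon$ mapping the fresh existentials to the concrete values), the producer, and $\kwT{setVar}$, with the error case analogous and the no-context-strengthening assumption ensuring the production of $Q$ does not cut the selected path. The only cosmetic difference is that the paper chains the steps manually rather than via Lemma~\ref{lem:sactfs}, and explicitly remarks that the two rule variants are mutually exclusive, a point you leave implicit.
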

\begin{proof}
Given Proposition~\ref{cor:spec:simpl}, we only need to prove the desired property for the new \prooflab{Call - Spec Re-Use} rule. Note that since this rule uses the collecting semantics in its premise, only one of its two variants can be applied at any given time. We prove the \prooflab{Call - Spec Re-Use - Success} case; the error case is analogous.
\case{\prooflab{Call - Spec Re-Use - Success}}
\begin{proofpf}
  \pflongnumbers
  \step{s1} {Hypotheses:
    \begin{pfenum}
      \item[(H1)] $\prog \vdash \scf \rightarrow \ssscf'$
      \item[(H2)] $\sintp(\prog) \vdash  \sintp(\scf)  \semarrow {\cf'}$ 
    \end{pfenum}}
  \step{s2} {[by H1, \prooflab{Call} case]
    \begin{pfenum}
     \item $\cmd(\prog, \hat\cs, i) = y := \e(\e') \text{ with } \subst$
     \item $\eex{\sst}{\e} \rightsquigarrow f \quad \eex{\sst}{\e'} \rightsquigarrow \sexp'$
     \item $\{\lx, P \} f(x) \{Q\}^{\sexp} \in {\prog}.\kwT{specs}$
     \item $\ssubst' = \ssubst[\lx \mapsto \sexp']$
     \item $\fcolactionx{\sst}{\fgetter{\ssubst'}}{P}{}{\{ (\sst_j, \true)^\osucc} |_{j \in J} \}$
     \item $\lstxs = \textsf{svars}(Q) \setminus \textsf{svars}(P)$
     \item $(\action{\sst_j}{\kwT{iSym}}{|\lstxs|}{(\sst_j', \lstxs')^\osucc})|_{j \in J}$ 
     \item $(\ssubst_{j}' = \ssubst'[ \lstxs \mapsto \lstxs'])|_{j \in J}$
     \item $(\fcolactionx{\sst_j'}{\fsetter{\ssubst_j'}}{Q}{}{\{ (\sst_{lj}, \true)^\osucc \mid_{lj \in L_j} \} })|_{j \in J}$
     \item $(\action{\sst_{lj}}{\kwT{setVar}_y}{\subst_j'(\sexp)}{\sst_{lj}'})|_{j \in J, lj \in L_j}$ [note that this transition is deterministic]
     \item $\ssscf' = \bigcup_{j \in J} \bigcup_{lj \in L_j} \{ \tup{\sst_{lj}, \hat\cs, i{+}1} \}$
    \end{pfenum}}
  \step{s3} {[by H2, \prooflab{Call - Spec Re-Use - Success} case]
    \begin{pfenum}
     \item $\cmd(\sintp(\prog), \cs, i) = y := \e(\e') \text{ with } \sintp(\subst)$
     \item $\eex{\sintp(\sst)}{\e} \rightsquigarrow f \quad \eex{\sintp(\sst)}{\e'} \rightsquigarrow \sintp(\sexp')$ [by forward soundness and determinism of state actions, noting that no allocation happens, and so $\varepsilon$ can stay the same]
     \item $\{\lx, P \} f(x) \{Q\}^{\sexp} \in {\sintp(\prog)}.\kwT{specs}$
     \item $\subst' = \sintp(\ssubst)[\lx \mapsto \sintp(\sexp')] = \sintp(\ssubst')$ [by 2.4]
     \item $\fcolactionx{\sintp(\sst)}{\fgetter{\sintp(\ssubst')}}{P}{}{\{ (\st_k, \true)^\osucc} |_{k \in K} \}$ [given 2.5, by forward soundness of state assertion consumers all concrete transitions must be successful, noting that no allocation happens, and so $\varepsilon$ can stay the same]
     \item $k \in K$
     \begin{pfenum}
     \item $\exists j \in J.~\st_k = \sintp(\sst_j)$ [by 2.5, 3.5, and forward soundness of state assertion consumers]
     \end{pfenum}
     \item $\lstxs = \textsf{svars}(Q) \setminus \textsf{svars}(P)$
     \item $\action{\sintp(\st_j)}{\kwT{iSym}}{|\lstxs|}{(\st_j', \lstvs)^\osucc}$ [note that this transition is not deterministic]
     \begin{pfenum}
	     \item $\exists \varepsilon' \geq \varepsilon.~\st_j' = \sintpe{\varepsilon'}(\sst_j') \land \lstvs = \sintpe{\varepsilon'}(\lstxs')$ [by forward soundness of $\kwT{iSym}$, 2.7, and 3.8]
	     \item $\varepsilon' = \varepsilon[\lstxs' \mapsto \lstvs]$ [by 3.8a]
     \end{pfenum}
     \item $\subst_{j}' = \subst'[ \lstxs \mapsto \lstvs ] = \sintpe{\varepsilon'}(\ssubst_j')$ [by 2.8]
     \item $\action{\sintp(\sst_j')}{\fsetter{\sintp(\ssubst_j')}}{Q}{(\sst_j'', \true)^\osucc}$
          \begin{pfenum}
	     \item $\exists lj \in L_j.~\st_j'' = \sintpe{\varepsilon'}(\sst_{lj})$ [by forward soundness of state assertion producers, 2.9, and 3.10, noting that no allocation happens, and so $\varepsilon'$ can stay the same]
     \end{pfenum}
     \item $\action{\sintpe{\varepsilon'}(\sst_{lj})}{\kwT{setVar}_y}{\sintp(\ssubst_j'(\sexp))}{\sst'}$
               \begin{pfenum}
	     \item $\sst' = \sintpe{\varepsilon'}(\sst_{lj}')$ [by forward soundness of $\kwT{setVar}$, 2.10, and 3.11, noting that no allocation happens, and so $\varepsilon'$ can stay the same]
     \end{pfenum}
    \end{pfenum}}
    
   \step{4}{$\exists \varepsilon' \geq \varepsilon, \scf' \in \ssscf'.~\cf' = \sintpe{\varepsilon'}(\scf')$ [for $\varepsilon' = \varepsilon'$ and $\scf' = \tup{\sst_{lj}', \hat\cs, i{+}1}$, given 2.11 and the fact that $\sintpe{\varepsilon'}(\cs) = \sintp(\cs)$ and therefore $\cf' = \tup{\sintpe{\varepsilon'}(\sst_{lj}'), \sintpe{\varepsilon'}(\hat\cs), i{+}1}$]}
\end{proofpf}

\end{proof}

From this Lemma, we can derive the following verification theorem, which has the same formulation as Theorem~\ref{lem:gttcsfs:mod} for partial verification, but given the ability to use procedure specifications, can handle code with recursive procedure calls and unbounded looping (which can be transformed to recursive procedure calls) and therefore captures full verification.

\begin{theorem}[\gil Verification] 
\label{lem:verif}
\gil execution satisfies the following property:
\begin{mathpar}
 \inferrule[GIL Verification]{}{
          {\begin{array}{l}
\prog \vdash \{ \scf \} \csemarrowfin \ssscf' \gand \cf \in \modls({\scf}) \gand~\prog \vdash  \cf  \csemarrowfin_C {\cf'} \\ \qquad \implies \exists \scf'.~\scf' \in \ssscf' \gand \cf' \in \modls(\scf') \gand \prog \vdash \scf \csemarrowfin_C \scf'
 \end{array}}}
\end{mathpar}
\end{theorem}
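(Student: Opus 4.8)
The plan is to observe that the statement of Theorem~\ref{lem:verif} is syntactically identical to that of Theorem~\ref{lem:gttcsfs:mod}, differing only in that the underlying one-step relation now additionally includes the two \prooflab{Call - Spec Re-Use} rules of Figure~\ref{fig:specrules}. The entire lifting machinery that carries us from one-step forward soundness up to the to-termination collecting statement---namely Lemma~\ref{lem:goscsfs:mod} (one-step collecting), Lemma~\ref{lem:gmscsfs:mod} (multi-step collecting, by induction on $n$), and Theorem~\ref{lem:gttcsfs:mod} itself---never inspects which concrete one-step rules exist; it only invokes the one-step forward soundness property \prooflab{GIL-FS-CFG-I} together with the structural definition of the collecting semantics in Figure~\ref{fig:colsem}. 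Consequently, the proof amounts to replaying this chain verbatim, with Lemma~\ref{lem:gilfspecone} supplying the strengthened one-step base case for the extended semantics.

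Concretely, first I would invoke Lemma~\ref{lem:gilfspecone}, which already establishes \prooflab{GIL-FS-CFG-I} for every transition of the extended semantics, including the spec-use rules, under the hypothesis that procedure specifications do not strengthen the context. I would then reprove the model-level one-step collecting statement exactly as in Lemma~\ref{lem:goscsfs:mod}: the first conjunct follows from Lemma~\ref{lem:gilfspecone} via the identities $\sintpe{\varepsilon}(\scf) \in \modls(\scf)$ and $\sintpe{\varepsilon'}(\scf') \in \modls(\scf')$, and the second conjunct (propagation of already-final configurations) follows directly from the definition of \prooflab{Collection: OStep}. Next, I would re-run the induction on $n$ of Lemma~\ref{lem:gmscsfs:mod}, splitting on whether the concrete configuration is already final; the inductive step chains a single collecting step with the $n$-step induction hypothesis, precisely as before. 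Finally, Theorem~\ref{lem:verif} falls out of the to-termination case just as Theorem~\ref{lem:gttcsfs:mod} did, by unfolding $\csemarrowfin$ and appealing to the multi-step result.

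The main obstacle is concentrated entirely in the spec-use rules and is therefore discharged by Lemma~\ref{lem:gilfspecone} rather than by the lifting, but it is worth isolating why that lemma suffices. The \prooflab{Call - Spec Re-Use - Success} rule consumes the pre-condition using the \emph{collecting} transition (producing a branch $\st_j$ per disjunct) and then produces the post-condition, so a single concrete specification-use transition must be matched against exactly one symbolic branch; this is handled by the forward soundness of assertion consumers and producers (Proposition~\ref{prodcons:mass}), lifted through Proposition~\ref{cor:spec:simpl}. The delicate point is full coverage: the \prooflab{Error} variant collapses all branches to error outcomes whenever consumption is not uniformly successful, so to guarantee that every concrete execution is still matched by some symbolic branch we rely on the assumption $\pc_p \Rightarrow \pc_q$ (the post-condition introduces no new pure facts about pre-existing symbolic variables), which prevents specification use from cutting feasible paths. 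Provided this assumption holds---as the surrounding discussion requires, either for bounded code or for non-information-losing user predicates---the coverage argument underpinning the $n = k{+}1$ step of Lemma~\ref{lem:gmscsfs:mod} goes through unchanged, and no further work is needed.
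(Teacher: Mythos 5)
Your proposal is correct and takes essentially the same route as the paper, which likewise derives Theorem~\ref{lem:verif} directly from the one-step result of Lemma~\ref{lem:gilfspecone} by replaying, unchanged, the lifting chain of Lemma~\ref{lem:goscsfs:mod}, Lemma~\ref{lem:gmscsfs:mod}, and Theorem~\ref{lem:gttcsfs:mod}, noting that this machinery is agnostic to which one-step rules exist. Your identification of the coverage concern for the spec-use rules and its discharge via the assumption $\pc_p \Rightarrow \pc_q$ (specifications do not strengthen the context) matches the paper's own discussion preceding Lemma~\ref{lem:gilfspecone}.
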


\myparagraph{Correctness: Frame Preservation}
To prove that Theorem~\ref{thm:exfp} extends to the \gil execution with specification use, we only need to prove frame preservation of the two rules introduced in Figure~\ref{fig:specrules}. The error case is a composition of expression evaluation and an assertion consumer, which are all frame-preserving, and is, therefore, frame-preserving itself. The successful case is a composition of expression evaluation, followed by a consumer for the pre-condition, generation of post-condition existentials, and the production of the post-condition. The critical point is the production of the post-condition, which may have additional resource with respect to the pre-condition. However, this resource has to be existentially quantified, which means that fresh symbols will be created to represent it. This, in turn, means that the production of the post-condition, much like the production of allocated memory, will be frame-preserving.
 
Finally, given the frame preservation properties of the \gil execution, we can prove compositionality of function specifications, stated as follows:
\begin{itemize}[leftmargin=*]
\item if $\models\! \{\lx, P\} f(x) \{Q\}^{o(\sexp)}$ and $P * R$ has a successful production, then $\models_{} \{\lx, P \ast R \} f(x) \{Q \ast R \}^{o(\sexp)}$.
\end{itemize}

\bibliographystyle{ACM-Reference-Format}
\bibliography{main}

\end{document}
\endinput